\documentclass[aps,prx,superscriptaddress,twocolumn,floatfix,longbibliography]{revtex4-2} 

\usepackage[final]{graphicx}    
\usepackage{soul}
\usepackage[dvipsnames]{xcolor}
\usepackage[utf8]{inputenc}
\usepackage[T1]{fontenc}
\usepackage{natbib}
\usepackage{framed}
\usepackage{amsfonts}
\usepackage{amsmath}
\usepackage{amsthm}
\usepackage{mathtools}
\usepackage{amssymb}
\usepackage{hyperref}
\usepackage{MnSymbol}
\hypersetup{colorlinks=true}

\newtheorem{remark}{Remark}
\newtheorem{property}{Property}
\newtheorem{theorem}{Theorem}	
\newtheorem{definition}{Definition}	

\newcommand{\eg}{\emph{e.g.,~}}
\newcommand{\ie}{\emph{i.e.,~}}
\newcommand{\etal}{\emph{et al.}~}

\newcommand{\nn}{\nonumber}

\begin{document}

\title{Modeling integrated frequency shifters and beam splitters}
\author{Manuel H. Muñoz-Arias}
\email{mhmunoz@sandia.gov}
\affiliation{Quantum Algorithms and Applications Collaboratory, Sandia National Laboratories, Livermore, CA 94550, USA}
\author{Kevin J. Randles}
\affiliation{Center for Computing Research, Quantum Algorithms and Applications Collaboratory, Sandia National Laboratories, Albuquerque, NM 87185, USA}
\author{Nils T. Otterstrom}
\affiliation{Photonic and Phononic Microsystems, Sandia National Laboratories, Albuquerque, New Mexico 87185, USA}
\author{Paul S. Davids}
\affiliation{Photonic and Phononic Microsystems, Sandia National Laboratories, Albuquerque, New Mexico 87185, USA}
\author{Michael Gehl}
\affiliation{Photonic and Phononic Microsystems, Sandia National Laboratories, Albuquerque, New Mexico 87185, USA}
\author{Mohan Sarovar}
\affiliation{Quantum Algorithms and Applications Collaboratory, Sandia National Laboratories, Livermore, CA 94550, USA}
\date{\today}

\begin{abstract}
Photonic quantum computing is a strong contender in the race to fault-tolerance. Recent proposals using qubits encoded in frequency modes promise a large reduction in hardware footprint, and have garnered much attention. In this encoding, linear optics, \ie beam splitters and phase shifters, is necessarily not energy-conserving, and is costly to implement. In this work, we present designs of frequency-mode beam splitters based on modulated arrays of coupled resonators. We develop a methodology to construct their effective transfer matrices based on the SLH formalism for quantum input-output networks. Our methodology is flexible and highly composable, allowing us to define $N$-mode beam splitters either natively based on arrays of $N$-resonators of arbitrary connectivity or as networks of interconnected $l$-mode beam splitters, with $l<N$. We apply our methodology to analyze a two-resonator device, a frequency-domain phase shifter and a Mach--Zehnder interferometer obtained from composing these devices, a four-resonator device, and present a formal no-go theorem on the possibility of natively generating certain $N$-mode frequency-domain beam splitters with arrays of $N$-resonators.
\end{abstract}

\maketitle


%
%

\section{Introduction}
\label{sec:intro}
Integrated photonic platforms for quantum information processing hold great promise due to their industrial scalability. Such photonic platforms are leading contenders for implementing fault-tolerant quantum computing, with several competing approaches that show promise \cite{Bourassa2021blueprintscalable,Alexander2025,AghaeeRad2025}. This new application area for integrated photonics has motivated impressive progress in the field, including the fabrication of ultra low loss waveguides and interconnects~\cite{Chauhan2022ultra,Heck2014ultra,Bauters2011ultra}, novel electro-optic components~\cite{Hu2025review,Hu2025integrated}, and new fabrication processes and tools~\cite{Bose2024anneal}.

The electromagnetic field has many degrees of freedom, and photonic processors require different components depending on which degrees of freedom encode the quantum information to be processed. While spatial and polarization mode encodings have been the most extensively explored, recently, several authors have envisioned photonic quantum computing (PQC) platforms based on encoding qubits into frequency modes~\cite{Lukens2017,kues2017,Karnieli18,Lu2019controlledNOT,Joshi2020,Lu2020arbitrary,Lu2023frequency}. Such encodings have a number of benefits when compared to spatial and polarization encodings, including: (i) potential for increased density of quantum information due to the fact that one waveguide can carry many frequency modes, and (ii) inherent robustness to noise coming from the need for active operations (energy exchange) to change frequencies of optical modes. At the same time, using frequency encoding requires reimagining the implementation of many of the common transformations used in PQC, including the common linear optics transformations of beam splitting and phase shifting. Lukens \etal have developed the framework of a \emph{quantum frequency processor} (QFP) for implementing general linear transformations over frequency modes~\cite{Lukens2017,Lu2018quantum}. While implementations of a QFP using bulk optics equipment have been demonstrated, scalable integrated optics implementations have not. 

Recently, Hu \etal reported on an integrated optics implementation of a core linear optics component, the beam splitter, for frequency modes~\cite{Hu2021,Yeh22,Yeh25}. This device is composed of two coupled ring resonators that are electro-optically modulated, which provides a parametric shift of their resonance frequencies. The ring resonators isolate two frequency modes (within their free spectral range) and the parametric modulation serves to couple these modes. By tuning the modulation strength Hu \etal demonstrated arbitrary splitting of amplitude between the two selected frequency modes. 
Their device has the important benefit of being high efficiency and low loss, when compared to previous demonstrations based on bulk optics, electro-optical devices~\cite{Izutsu1981,Johnson1988,Preble2007,Savchenkov2009,kobayashi2016,Wright2017,Lu2018,Zhu2022spectral}. Other proposals for integrated low-loss and high efficiency frequency beam splitters have appeared recently, they are based on Bragg-scattering four-wave mixing~\citep{Oliver2025}, and acousto-optics~\cite{Lukens2026frodo}. Although promising, in this work we will focus solely on the aforementioned electro-optics devices.

In this work we develop a complete quantum theory for such ring resonator-based frequency-domain beam splitters (RBSs), building off modern input-output theory and the SLH formalism. This theory enables a number of advances in understanding over the treatment in Hu \etal ~\cite{Hu2021}. First, we show how to write transfer matrices describing such ring resonator beamsplitters. Transfer matrices are a standard description of passive linear optics components (\eg directional couplers, bulk optics beamsplitters), which have the critical property of composability that allows for modeling of large linear optics networks by simple multiplication of transfer matrices for individual components in the network. 

Being a resonant device with time-dependent, active modulation it is not immediately clear how the RBS can be modeled using a transfer matrix. We show here that indeed a transfer matrix treatment is possible in certain relevant operating regimes, which greatly alleviates the burden of modeling networks of RBS devices. Secondly, we apply our theory and derived transfer matrices to model devices composed of multiple frequency-domain transformations, and to understand frequency-domain transformations enabled by devices composed of $N>2$ coupled, modulated ring resonators. For these multimode devices, we show some of the interesting multimode transformations possible, the design tradeoffs involved, and also derive some impossibility results. Finally, we analyze the sensitivity of the performance of RBS devices as a function of parameters defining ring and modulation properties.  

The remainder of the manuscript is organized as follows. In Sec.~\ref{sec:background_linear_optics} we comment on the relationship between PQC and quantum input-output networks (QIONs). In Sec.~\ref{sec:slh} we review both the SLH formalism for general QIONs, and the $ABCD$ representation of the equations of motion for linear and passive QIONs, and in Sec.~\ref{sec:background_frequency_modes} we frame the challenge of frequency-domain linear optics in the context of the SLH description of QIONs. In Sec.~\ref{sec:2_reso} we detail the properties of frequency beam splitters generated with arrays of $2$-resonators, and in Sec.~\ref{sec:mach_zender}, as an application of these $2$-resonator devices, we show how to construct a frequency-domain phase shifter and a frequency-domain Mach--Zehnder interferometer. In Sec.~\ref{sec:4_reso} we explore in detail the properties of the frequency beam splitters generated with a $4$-resonator device. In Sec.~\ref{sec:no_go} we discuss the possibility of generating frequency-domain beam splitters with arrays of $N>4$ resonators, and prove a no-go theorem forbidding certain $N$-mode frequency beam splitters in this platform. Finally, in Sec.~\ref{sec:discussion}, we discuss our findings and comment on future directions.
 
%
%

\section{Background: linear optical devices and their modeling}
\label{sec:background_linear_optics}
In photonic quantum computation, propagating optical modes are the carriers of quantum information. A computation is implemented by sending these modes as inputs into an in-place network of linear optical elements, and the result is collected through measurements on the output modes~\cite{Knill2001,Kok2007}. An optical quantum computer is then a Quantum Input-Output Network (QION), with the quantum computation described by the unitary matrix of the network, which relates input modes to output modes. Since the network is composed of linear optical elements, it is standard to view the action of this network in terms of its \emph{transfer matrix}, which is an $N\times N$ matrix, mapping the canonical annihilation operators of the $N$ output modes to linear combinations of annihilation operators of the $N$ input modes \footnote{This is strictly only true for \emph{passive} linear networks that do not contain any squeezing components. For active linear optics networks the transfer matrix is $2N\times 2N$ and prescribes a mapping for annihilation and creation operators.}.

For linear optical networks acting on frequency modes, as mentioned in the introduction, basic components such as beam splitters have more complex implementations, involving time-dependent modulation and resonant structures, which makes transfer matrix descriptions of such components questionable. However, the QION description of such networks is still valid and, as we show below, presents a path towards the construction of effective transfer matrix descriptions in certain parameter regimes. We will briefly review the necessary minimum of QIONs and the associated SLH formalism in the following. An extensive presentation of this formalism can be found in~\cite{combes2017slh} and references therein. 

\subsection{Quantum input-output networks: SLH, ABCD, and transfer matrices}
\label{sec:slh}
A QION is a collection of localized quantum systems, which we will take to be bosonic, interacting with propagating fields via input and output ports. The ultimate goal for the theoretical modeling of these systems is to construct equations of motion for the degrees of freedom (dofs) of the localized systems evolving under their own dynamics and being driven by the input fields, and to write down the input-output relations for the propagating fields. This is done under three approximations: (1) negligible propagation time of the fields between distinct localized systems in the network, (2) weak, linear coupling of localized system dofs and the propagating fields, and (3) the Markov approximation \cite{combes2017slh}. 

The SLH formalism represents a QION by a triple of operators~\footnote{Or arrays of operators, depending on the number of input-output ports}  $(\mathbf{S}, \mathbf{L}, H)$, where $\mathbf{S}$ dictates the scattering of photons between mode $i$ and mode $j$ which is not a direct consequence of energy exchange with the localized system dofs. The set of operators $\mathbf{L}$ governs the energy exchange between the localized system and the propagating modes, and $H$ governs the internal dynamics of the localized system. 

One of the strengths of this formalism is its composability. Given a large QION, one can follow simple rules~\citep{combes2017slh} to arrive at a triple describing the global network from the triples of the individual network components. With such an effective description in hand one can obtain the equations of motion for the internal degrees of freedom of the network, and the network's input-output relation. For a QION composed of passive linear optical components, these equations of motion can be written in a compact form, called the $ABCD$ representation. Consider for simplicity a (passive) linear $N$-port component with $M$ internal dofs described by the triple $(\mathbf{S},\mathbf{L},H)$, then its $ABCD$ representation is given by
\begin{subequations}
\label{eqn:abcd_linear_auto}
\begin{align}
\dot{\mathbf{a}}(t) &= A \mathbf{a}(t) + B \mathbf{b}_{\rm in}(t), \label{eqn:abcd_linear_autoA} \\
\mathbf{b}_{\rm out}(t) &= C \mathbf{a}(t) + D \mathbf{b}_{\rm in}(t), \label{eqn:abcd_linear_autoB}
\end{align}
\end{subequations}
where $\mathbf{a} = (a_1,...,a_M)^T$ is an $(M\times1)$ vector of annihilation operators for the internal degrees of freedom, $\mathbf{b}_{\rm in},\mathbf{b}_{\rm out}$ are $(N \times 1)$ vectors of annihilation operators for the fields of the input and output modes. $A,B,C,D$ are matrices with scalar entries of sizes $(M\times M)$, $(M\times N)$, $(N\times M)$, $(N\times N)$, respectively.

Let us now introduce auxiliary matrices $\Omega$ of size $(M\times M)$ and $\Phi$ of size $(N\times M)$, whose elements $\omega_{lm}$ and $\phi_{lm}$ are defined via $H = \sum_{l,m = 1}^M a^\dagger_l \omega_{lm} a_m$ and $\mathbf{L}_l = \sum_{m = 1}^N \phi_{lm} a_m$. They allow us to establish a direct relation between the elements of the SLH triple and the $A,B,C,D$ matrices, 
\begin{subequations}
\label{eqn:ABCD_def}
\begin{align}
A &= -\frac{1}{2}\Phi^\dagger\Phi - i\Omega, \enspace &B& = -\Phi^\dagger \mathbf{S}, \\
C &= \Phi, \enspace &D& = \mathbf{S}.
\end{align}
\end{subequations}
The linear and time-invariant nature of this mixed system of differential and linear equations ($ABCD$ system), immediately provides us a way of arriving at the transfer matrix of this $N$-port component. By direct Fourier transform of Eq.~(\ref{eqn:abcd_linear_autoA}), and then substituting $\mathbf{a}(\omega)$ in Eq.~(\ref{eqn:abcd_linear_autoB}), we obtain
\begin{equation}
\label{eqn:transfer_func_linear}
\mathbf{b}_{\rm out}(\omega) = \Xi(\omega) \mathbf{b}_{\rm in}(\omega),
\end{equation}
where the frequency-domain transfer function is given by
\begin{equation}
\label{eqn:freq_domain_transfer}
\Xi(\omega) = \left[ C( i\omega \mathbf{I}_{M} - A)^{-1} B + D \right],
\end{equation}
and $\mathbf{I}_{M}$ is the identity matrix of size $(M\times M)$ in the Hilbert space of the localized system internal dofs. Therefore, to write down transfer matrices for specific optical components, it is enough to specify its SLH triple and subsequently the corresponding $ABCD$ matrices.

%
%

\subsection{Linear optics with frequency-modes}
\label{sec:background_frequency_modes}
Given a single input frequency mode, a frequency beam splitter is a linear optical element that distributes the input intensity into two (or more) frequencies. Since the frequency of an optical mode determines its energy, this type of transformation breaks energy conservation, and cannot be achieved with simple energy-conserving optical elements alone. Consequently, we must add or remove energy to/from the propagating mode. The frequency transformation on the propagating mode can be generated using a time-dependent modulation on the localized system and exploiting the scattering consequence of energy exchange between localized system and propagating mode. This modulated system lies outside of the formalism considered in Sec.~\ref{sec:slh}, preventing us from directly applying those techniques to obtain the desired transfer matrices.

To explicitly specify the issue, the addition of the time-dependent modulation results in an $A$ matrix in the $ABCD$ system in Eq.~\eqref{eqn:abcd_linear_auto} to be time-dependent. As a result, the system is no longer time-invariant and one cannot directly derive a transfer function through the Fourier transform. However, as we shall show below, through a careful consideration of relevant parameter regimes and application of the rotating-wave approximation, we can derive \emph{effective} transfer matrices for these devices. We will do this by considering specific example systems in the following two sections, followed by an analysis of the general device consisting of $N$ ring resonators in Sec.~\ref{sec:no_go} and App.~\ref{app:general_transfer_mat}.

%
%
\begin{figure}
    \centering
    \includegraphics[width=\linewidth]{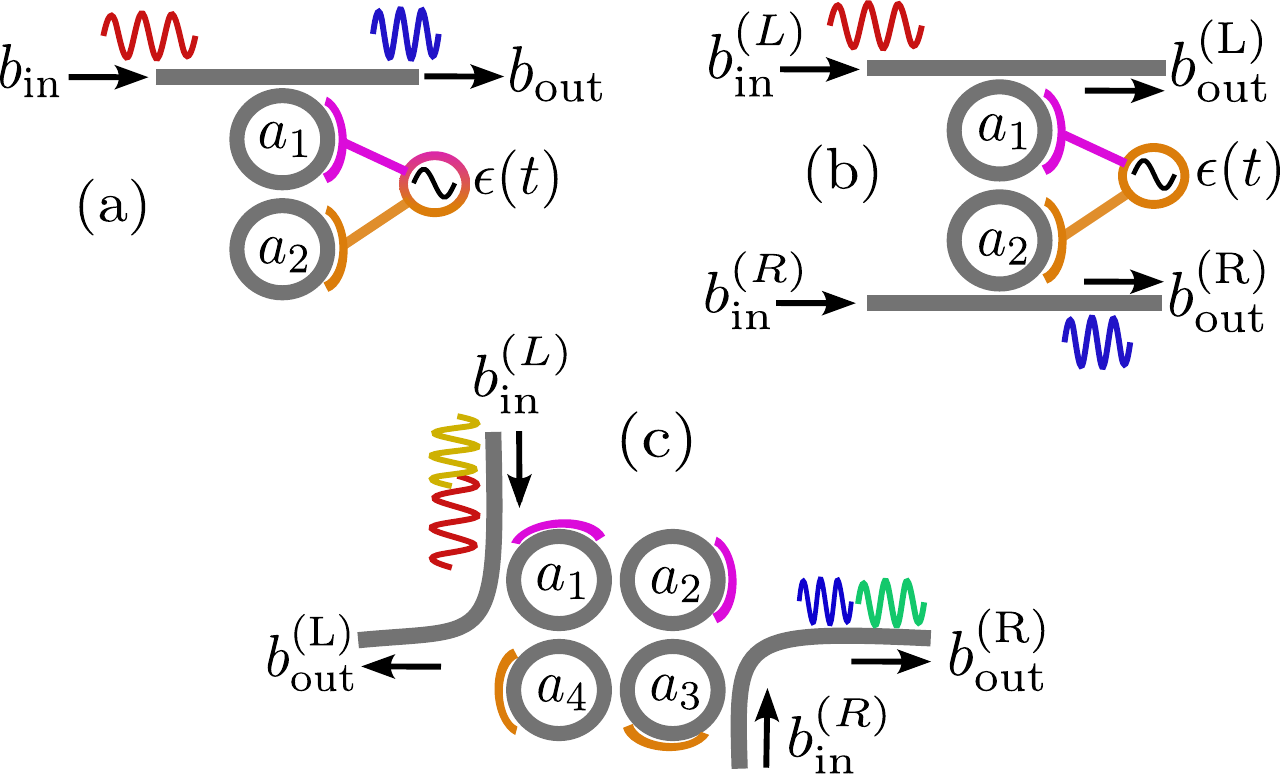}
    \caption{Schematics of the frequency-domain beam splitters based on modulated coupled ring resonators. (a) Two-resonator device which implements $2$-mode frequency beam splitters. (b) Two-resonator device in the two waveguide configuration. The output of interest is the one on the second waveguide, $b_{\rm out}^{\rm (R)}$, which we wish to split by some predetermined ratio. (c) Four-resonator device used to natively implement $4$-mode frequency beam splitters. The second waveguide might or might not be included. In all sketches the resonators have the same central frequency $\omega_0$. In (b) and (c), the superscript (L) and (R) indicate first and second waveguide, respectively. We always assume $b_{\rm in}^{(R)} = 0$.}
    \label{fig:fig_1}
\end{figure}

\section{Frequency beam splitters generated with two resonators}
\label{sec:2_reso}
We begin by considering $2$-mode frequency-domain beam splitters based on ring resonators, as demonstrated by Hu~\etal~\cite{Hu2021}. The setup, shown in Fig.~\ref{fig:fig_1}a,b, is composed of two resonators of central frequencies $\omega_{a_1}$ and $\omega_{a_2}$ which are strongly coupled with strength $u$, both resonators are modulated by a monochromatic tone of amplitude $\epsilon$, frequency $\omega_{\rm d}$, and phase $\phi$. The system is described by the Hamiltonian
\begin{equation}
\label{eqn:hamil_2mode_as}
\begin{split}
H(t) = \sum_{j=1}^2\omega_{a_j} a_j^\dagger a_j + u(a_1^\dagger a_2 + a_1a_2^\dagger) \\ 
+ \epsilon\cos(\omega_{\rm d}t + \phi)(a_1^\dagger a_1 - a_2^\dagger a_2),
\end{split}
\end{equation}
where $a_{1,2}$ ($a_{1,2}^\dagger$) are annihilation (creation) operators for the first and second resonator, respectively. Both resonators are modulated, with the net effect of this modulation being to increase/decrease the frequency of the first/second resonator. Additionally, the system is coupled to either one or two waveguides. The coupling to the first waveguide is via $a_1$ at a rate $\Gamma_{\rm L}$, Fig.~\ref{fig:fig_1}a, and that to the second waveguide, whenever present, is via $a_2$ at a rate $\Gamma_{\rm R}$, Fig.~\ref{fig:fig_1}b. 

\subsection{Single waveguide}
The SLH description of the 2-ring, single waveguide case is $(\mathbf{S}=\mathbf{I}_{1}, \mathbf{L}=\sqrt{\Gamma_L}a_1, H=H(t))$, where $H(t)$ is the Hamiltonian in Eq.~\eqref{eqn:hamil_2mode_as}. As discussed above, since $H$ is time-dependent, the system matrix $A$ in the corresponding $ABCD$ description will be time-dependent, and a transfer matrix is not derivable in a straightforward manner. However, in the following we show how a sequence of rotating wave approximations (RWAs) can be employed to recover a transfer matrix description of this system in the experimentally relevant parameter regime.

We focus on the case of identical resonators, i.e, $\omega_{a_1} = \omega_{a_2} = \omega_0$. With this assumption the normal modes of the unmodulated system correspond to the asymmetric, $c_1 = \frac{a_1 - a_2}{\sqrt{2}}$, and symmetric, $c_2 = \frac{a_1 + a_2}{\sqrt{2}}$, combinations of the resonators annihilation operators, with frequencies $\omega_1 = \omega_0 - u$ and $\omega_2 = \omega_0 + u$, respectively. In terms of these normal mode operators the Hamiltonian takes the form
\begin{equation}
\label{eqn:hamil_2mode_cs}
H(t) = \sum_{j=1}^2 \omega_j c_j^\dagger c_j + \epsilon\cos(\omega_{\rm d} t + \phi)(c_1^\dagger c_2 + c_1 c_2^\dagger).
\end{equation}
The aim of the modulation is to drive transitions between the two normal modes in the 2-ring system, which are separated by $\omega_2-\omega_1 \equiv \Delta_{12}=2u$. Therefore, $\omega_{\rm d}$ should be close to resonant with the splitting. Take $\omega_{\rm d} = \Delta_{12} + \delta$, where $\delta$ is a detuning. Then, rewriting the Hamiltonian in a rotating frame with $c_{1,2} \rightarrow \tilde{c}_{1,2}=c_{1,2}e^{-i(\omega_{1,2} \pm \frac{\delta}{2}) t}$, we get
\begin{equation}
\label{eqn:rwa_hamil_2ring}
\tilde{H} = -\frac{\delta}{2} \tilde{c}_1^\dagger \tilde{c}_1 + \frac{\delta}{2} \tilde{c}_2^\dagger \tilde{c}_2 + \frac{\epsilon}{2} (e^{i\phi} \tilde{c}_1^\dagger \tilde{c}_2 + e^{-i\phi} \tilde{c}_1 \tilde{c}_2^\dagger ),
\end{equation}
where we have dropped fast rotating terms oscillating at frequencies $\geq 2\omega_d$. This frame change and RWA removes the time-dependence in the system's Hamiltonian. In the following, we will assume $\delta=0$ for simplicity.

Moreover, there is another RWA we can make to simplify the system further. This relates to the coupling of the normal modes to the waveguide. The $\mathbf{L}$ operator for this system in terms of the normal mode operators is $\mathbf{L} = \sqrt{\Gamma_{\rm L}}a_1 = \sqrt{\gamma_{\rm L}}c_1 + \sqrt{\gamma_{\rm L}}c_2$, with $\gamma_{\rm L} = \frac{\Gamma_{\rm L}}{2}$ the ``effective'' coupling rate of each normal mode to the waveguide. 
We will assume $\Gamma_{\rm L} \ll \Delta_{12}$, which implies well-resolved normal modes with narrow enough linewidth to be well-separated. This regime has an implication for the input-output theory: while there is a waveguide with a single spatial mode carrying multiple frequencies, the only way photons can convert from $\omega_1$ to $\omega_2$ and \emph{vice versa} is through the modulated coherent coupling. Outside this regime (\emph{i.e.,} if $\Gamma_{\rm L} \sim \Delta_{12}$) spectral overlap of the cavity modes can cause incoherent scattering between these frequencies, and this is not the regime we would want to operate in for clean beam-splitter behavior. As a result, the input-output modeling of the 2-ring system should consider two independent input modes and modify $\mathbf{L}$ and $\mathbf{S}$ to
\begin{align}
    \mathbf{L} &= \begin{pmatrix} 
            \sqrt{\gamma_{\rm L}}~ \tilde{c}_1 \\
            \sqrt{\gamma_{\rm L}}~ \tilde{c}_2 
            \end{pmatrix},
         \label{eq:L_up} \\
    \mathbf{S} &= \mathbf{I}_{2}. \label{eq:S_up}   
\end{align}
Another justification for this modification can be seen from writing the Heisenberg equations of motion for the normal modes in the rotating frame (assuming $\delta=0$),
\begin{align}
\dot{\tilde{c}}_1 = -i\epsilon e^{i\phi}\tilde{c}_2 - \frac{\gamma_{\rm L}}{2}\tilde{c}_2 + \sqrt{\gamma_{\rm L}} b_{\rm in}(t),
\end{align}
and similarly for $\dot{\tilde{c}}_2$. It is clear that when this equation is integrated, only frequency content around $\omega=0$ in $b_{\rm in}(t)$ will appreciably influence $\tilde{c}_1$. Or in the original lab frame, only frequency content around $\omega=\omega_1$. Similarly, $\tilde{c}_2$ is only driven by frequency content of $b_{\rm in}(t)$ around $\omega_2$. As a result, we can treat the input field as consisting of independent frequency modes at $\omega_1$ and $\omega_2$ and expand the input (and output) field into a vector $\mathbf{b}_{\rm in} = (b_{\rm in}(\omega_1), b_{\rm in}(\omega_2))^{T}$, where the arguments indicate the frequency mode carried on the spatial waveguide. As a result, the $\mathbf{L}$ and $\mathbf{S}$ operators in the SLH description are updated to the ones in Eqs.~\eqref{eq:L_up} and \eqref{eq:S_up}.

Finally, we assume that both ring resonators suffer internal losses at rate $\kappa_{\rm int}$. The effects of this can be captured by introducing a fictitious input-output mode with vacuum input and corresponding expansion of the $\mathbf{L}$ operator, but the net effect of doing this at the transfer matrix level is just to modify the linewidth of each normal mode, and therefore, we shall phenomenologically capture this effect below without going through the trouble of expanding $\mathbf{L}$.

At this point we have a time-independent SLH and equivalent $ABCD$ system that can be solved via Fourier transform, to obtain the effective transfer matrix
\begin{equation}
\label{eqn:transfer_mat_2reso}
\Xi(\epsilon, \gamma_{\rm L}, \kappa_{\rm int}) = \begin{pmatrix}
1 - \frac{2\kappa\gamma_{\rm L}}{\epsilon^2 + \kappa^2} && i \frac{2\epsilon\gamma_{\rm L}}{\epsilon^2 + \kappa^2} e^{i\phi} \\ 
i \frac{2\epsilon\gamma_{\rm L}}{\epsilon^2 + \kappa^2} e^{-i\phi}  && 1 - \frac{2\kappa\gamma_{\rm L}}{\epsilon^2 + \kappa^2}
\end{pmatrix},
\end{equation}
where $\kappa = \gamma_{\rm L} + \kappa_{\rm int}$ is the total linewidth of each normal mode, and the intrinsic losses were added into the $A$ matrix phenomenologically.

As we increase $\epsilon$, Eq.~(\ref{eqn:transfer_mat_2reso}) densely covers all the splitting rations from $100$-$0$ to $0$-$100$, and back to $100$-$0$. In particular, if $0\le 1 - R \le1$ is the probability of staying in the same frequency mode, we can cast $\Xi$ in the more interpretable form
\begin{equation}
\label{eqn:transfer_mat_R}
\Xi(R, \alpha_{\rm L}) = \begin{pmatrix}
    \sqrt{1-R} & i\sqrt{R}e^{i\phi} \\ 
    i\sqrt{R}e^{-i\phi} &  \sqrt{1-R}
\end{pmatrix} \mathcal{K}_\pm(\alpha_{\rm L}, R),
\end{equation}
where $\mathcal{K}_\pm(\alpha_{\rm L}, R)$ is the transmission amplitude parameter given by
\begin{equation}
\mathcal{K}_\pm(\alpha_{\rm L}, R) = \frac{2\sqrt{1-R} \pm \sqrt{\alpha_{\rm L}^2  - 4R}}{\alpha_{\rm L} + 2},
\end{equation}
which dictates the total loss of the beam splitter via $\mathcal{L}_\pm(\alpha_{\rm L}, R) = 1-|\mathcal{K}_\pm(\alpha_{\rm L}, R)|^2$. In analogy with cavity QED systems, we have introduced the ``cooperativity'' parameter 
\begin{align}
\alpha_{\rm L} = \frac{\Gamma_{\rm L}}{\kappa_{\rm int}},
\end{align}
which will facilitate the comparison between devices obtained from resonator arrays of different sizes.

For a given value of $R$, there are two values of modulation amplitude, $\epsilon_{\rm R}^{\pm}$, such that $\Xi(\epsilon_{\rm R}^\pm, \gamma_{\rm L}, \kappa_{\rm int}) = \Xi(R, \alpha_{\rm L})$. They can be obtained by solving for $\epsilon$ in $\frac{|\Xi_{11}|^2}{|\Xi_{11}|^2 + |\Xi_{21}|^2} = 1-R$, yielding
\begin{equation}
\label{eqn:epsilon_R}
\epsilon_{\rm R}^{\pm} = \left|\sqrt{\frac{1-R}{R}}\gamma_{\rm L} \pm \sqrt{\frac{\gamma_{\rm L}^2}{R} - \kappa_{\rm int}^2} \right|.
\end{equation}
Thus, as we increase $\epsilon$ from $0$, our device implements all the beam splitters with ratios from $100$-$0$ to $0$-$100$, with corresponding transmission amplitude parameter $\mathcal{K}_-$. At the $0$-$100$ point ($R=1$), if we keep increasing $\epsilon$, the device implements beam splitters covering all possible splitting ratios from $0$-$100$ to $100$-$0$, with corresponding transmission amplitude parameter $\mathcal{K}_+$. 

For the remainder of this section we focus on the particular ratios $0$-$100$ ($R=1$) and $50$-$50$ ($R=1/2$), and characterize the resulting frequency beam splitters.

\subsubsection{\texorpdfstring{$0$}{\textit{0}}-\texorpdfstring{$100$}{\textit{100}} frequency beam splitter}
Ideal frequency shifting is achieved whenever the output carries zero component of the input mode, \ie if the input is either at frequency $\omega_1$ or $\omega_2$, the output \emph{must} only be at frequency $\omega_2$ or $\omega_1$, respectively. The modulation amplitude satisfying this condition is
\begin{equation}
\label{eqn:gcc_2reso_1wv}
\epsilon_{\rm GCC} = \sqrt{\gamma_{\rm L}^2 - \kappa_{\rm int}^2},
\end{equation}
which is the ``generalized critical coupling'' (GCC) condition of Ref.~\cite{Hu2021}. At $\epsilon_{\rm GCC}$ the transfer matrix can be written as
\begin{equation}
\label{eqn:transfer_mat_gcc}
\Xi(\epsilon_{\rm GCC}) = \begin{pmatrix}
0 & ie^{i\phi} \\ 
ie^{-i\phi} & 0 \\
\end{pmatrix} \mathcal{K}_{\rm GCC},
\end{equation} 
a $0$-$100$ frequency beamspliter with nonunity transmission amplitude parameter, $\mathcal{K}_{\rm GCC} = \sqrt{\frac{\gamma_{\rm L} - \kappa_{\rm int}}{\gamma_{\rm L} + \kappa_{\rm int}}} = \sqrt{\frac{\alpha_{\rm L} - 2}{\alpha_{\rm L} + 2}}$. The total intensity loss, $\mathcal{L}(\epsilon_{\rm GCC}) = 1 - |\mathcal{K}_{\rm GCC}|^2$, is 
\begin{equation}
\label{eqn:loss_gcc_2mode}
\mathcal{L}(\epsilon_{\rm GCC}) = \frac{2\kappa_{\rm int}}{\gamma_{\rm L} + \kappa_{\rm int}} = \frac{4}{\alpha_L + 2}.
\end{equation}
Two observations are in order here: (1) Lossless frequency conversion is achieved at GCC only when $\kappa_{\rm int} = 0$. (2) At GCC, we always have $\Xi_{11}(\epsilon_{\rm GCC}) = \Xi_{22}(\epsilon_{\rm GCC}) = 0$, which is a global minimum. However, the global maximum of $|\Xi_{12}(\epsilon_{\rm GCC})|^2$ is located at $\epsilon = \gamma_{\rm L} + \kappa_{\rm int}$, which coincides with the GCC only when $\kappa_{\rm int}=0$ (see the peak of the green line in Fig.~\ref{fig:fig_2}).

\subsubsection{\texorpdfstring{$50$}{\textit{50}}-\texorpdfstring{$50$}{\textit{50}} frequency beam splitter}
\begin{figure}
    \centering
    \includegraphics[width=\linewidth]{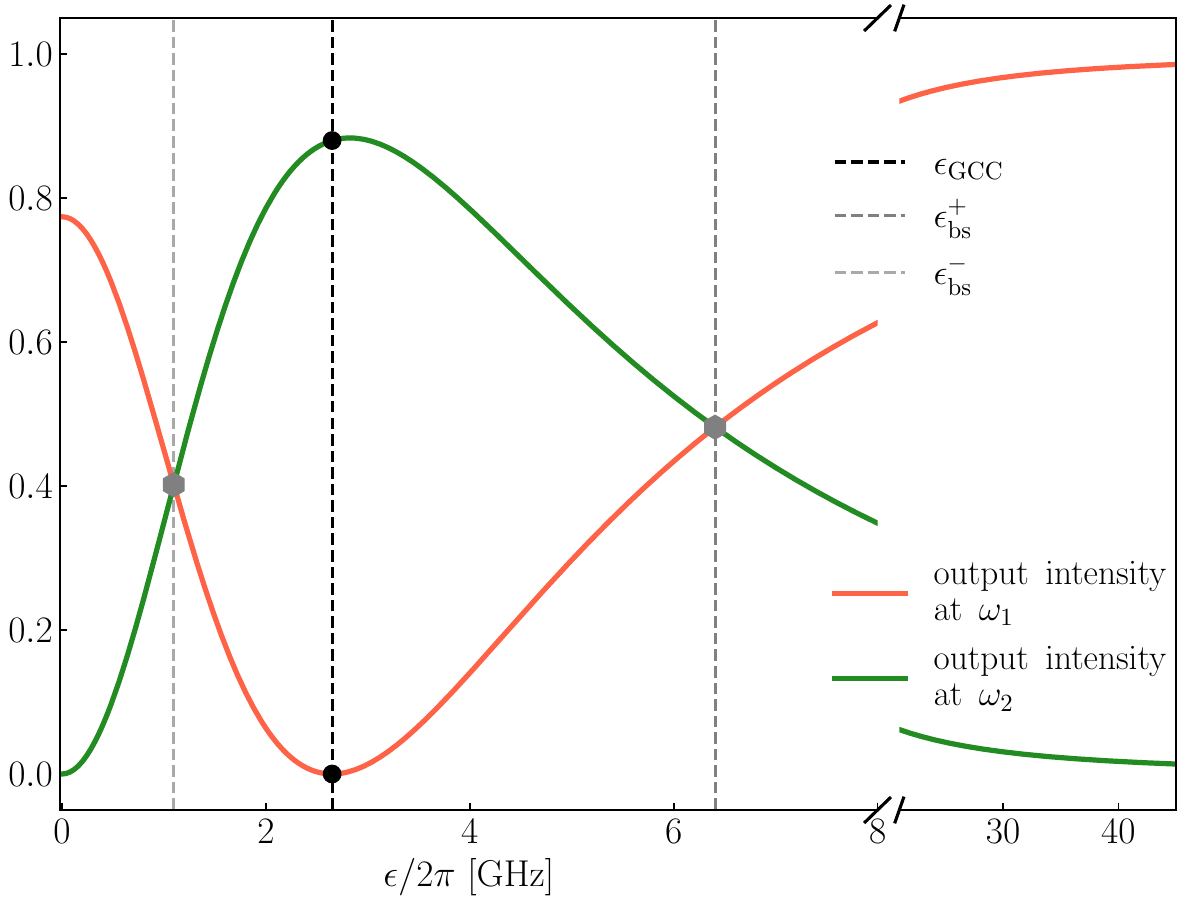}
    \caption{Norm squared entries of the transfer matrix for the two resonator device coupled to a single waveguide as a function of the modulation amplitude $\epsilon$, the input is at $\omega_1$. The dashed black line indicates the $0$-$100$ frequency beam splitter point, $\epsilon_{\rm GCC}$, while the two gray dashed lines indicate the $50$-$50$ frequency beam splitter points, $\epsilon_{\rm bs}^{\pm}$. The extended axis shows the asymptotic behvior of the transfer matrix entries as $\epsilon\to\infty$. Parameters: $\Delta_{12}/2\pi = 28.2~{\rm GHz}$, $\Gamma_{\rm L}/2\pi = 5.31~{\rm GHz}$, $\kappa_{\rm int}/2\pi = 0.17~{\rm GHz}$, leading to $\alpha_{\rm L}\sim 30$, which correspond to those of device (c) in Ref.~\cite{Hu2021}.}
    \label{fig:fig_2}
\end{figure}
One obtains a $50$-$50$ frequency-mode beam splitter when the output contains both normal modes, $c_1$ and $c_2,$ with equal intensity. Interestingly, the device features two distinct modulation amplitudes satisfying this condition
\begin{equation}
\label{eqn:bs_2reso_1wv}
\epsilon_{\rm bs}^{\pm} = |\gamma_{\rm L} \pm \sqrt{2\gamma_{\rm L}^2 - \kappa_{\rm int}^2}|.
\end{equation}
 See grey dashed lines in Fig.~\ref{fig:fig_2}. At $\epsilon_{\rm bs}^{\pm}$, the transfer matrix is
\begin{equation}
\label{eqn:transfer_mat_bs_2reso}
\Xi(\epsilon_{\rm bs}^{\pm}) = \begin{pmatrix}
    \frac{1}{\sqrt{2}} & \frac{i}{\sqrt{2}}e^{i\phi} \\ 
    \frac{i}{\sqrt{2}}e^{-i\phi} & \frac{1}{\sqrt{2}}
\end{pmatrix} \mathcal{K}_{\rm bs}^\pm,
\end{equation}
a $50$-$50$ frequency beam splitter with nonunity transmision amplitude parameter, $\mathcal{K}_{\rm bs}^\pm = \frac{\sqrt{2} \pm \sqrt{\alpha_{\rm L}^2  -2}}{\alpha_{\rm L} + 2}$.

Since at both operating points the system exhibits the target behavior, and, $\epsilon_{\rm bs}^{-}<\epsilon_{\rm bs}^+$, operating at $\epsilon_{\rm bs}^{-}$ is preferred, as it demands a weaker modulation amplitude. Specifically, we find the ratio between the modulation amplitudes defining these two operating points to be 
\begin{equation}
\frac{\epsilon_{\rm bs}^{+}}{\epsilon_{\rm bs}^{-}} = \frac{(\gamma_{\rm L} + \sqrt{2\gamma_{\rm L}^2 - \kappa_{\rm int}^2})^2}{|\kappa_{\rm int}^2 - \gamma_{\rm L}^2|} = \frac{(\alpha_{\rm L} + \sqrt{2}\sqrt{\alpha_{\rm L}^2 - 2})^2}{|4 - \alpha_{\rm L}^2|},
\end{equation}
which in the low loss limit, $\alpha_{\rm L}\gg1$, becomes $\frac{\epsilon_{\rm bs}^{+}}{\epsilon_{\rm bs}^{-}} \approx 3+2\sqrt{2} \approx 5.8$. Thus, operating at $\epsilon_{\rm bs}^{+}$ requires a modulation almost six times stronger.

The total intensity loss of the device at these operating points is
\begin{equation}
\label{eqn:loss_bs_2reso_1wv}
\mathcal{L}(\epsilon_{\rm bs}^{\pm}) = 1-|\mathcal{K}_{\rm bs}^\pm|^2 = \frac{4\sqrt{2}(\alpha_{\rm L} + 1) \mp 4 \sqrt{\alpha_{\rm L}^2 - 2}}{\sqrt{2}(\alpha_{\rm L} + 2)^2},
\end{equation}
thus, for all values of $\alpha_{\rm L}$, the beam splitter at $\epsilon_{\rm bs}^{+}$ suffers less loss than the beam splitter at $\epsilon_{\rm bs}^{-}$. Specifically, when $\alpha_{\rm L}\gg1$, we find $\frac{\mathcal{L}(\epsilon_{\rm bs}^+)}{\mathcal{L}(\epsilon_{\rm bs}^-)} \to \frac{1}{3 + 2\sqrt{2}}\approx 0.17$, meaning that at $\epsilon_{\rm bs}^{+}$ the device experiences about a sixth of the losses it experiences at $\epsilon_{\rm bs}^{-}$, in this low loss limit. Importantly, at $\epsilon_{\rm bs}^-$, to have a $50$-$50$ beam splitter which ``breaks even'', \ie we loose less than half of the input, one must have $\alpha_{\rm L}\ge 10$.  

In Fig.~\ref{fig:fig_2} we show the behavior of both $|\Xi_{11}(\epsilon)|^2$ and $|\Xi_{21}(\epsilon)|^2$ for the parameters of device (c) in Ref.~\cite{Hu2021}. That is, $\Delta_{12}/2\pi = 28.2~{\rm GHz}$, $\Gamma_{\rm L}/2\pi = 5.31~{\rm GHz}$, and $\kappa_{\rm int}/2\pi = 0.17~{\rm GHz}$, leading to $\alpha_{\rm L}\sim30$. The experiment of Ref.~\cite{Hu2021} explores amplitudes up to $\epsilon_{\rm GCC}$ (black dashed line in Fig.~\ref{fig:fig_2}). Hence, the second $50$-$50$ beam splitting point at $\epsilon_{\rm bs}^+$ constitutes an unexplored regime of the device. However, it likely requires modulation amplitudes beyond what can be generated in current state-of-the-art setups.

In the limit $\epsilon\gg1$ the transfer matrix behaves as an identity element, illustrated in the extended axis of Fig.~\ref{fig:fig_2} (see App.~\ref{app:2_reso} for further details). In presence of the modulation, the frequency levels of the coupled ring system undergo a second-order splitting~\cite{zhang2019electronically}, $\omega_{1,2}\to\omega_{1,2, \pm}$. Each pair of new peaks moves away from the parent frequency as $\epsilon$ increases, as such, for large enough $\epsilon$ the system will no longer feature a resonance at the chosen input frequency, with the input light no longer coupling to the system. Importantly, for the effective transfer matrix model to remain valid we require $\omega_{1,-}$ to not couple to $\omega_{2,+}$, we find this to be guaranteed as long as $\epsilon<\Delta_{1,2}=2u$ (see App.~\ref{app:2_reso} for details). 

\subsection{Two waveguides}
Similar to the single waveguide case, the effective transfer matrix for the system is obtained starting from the SLH triple. Clearly $H=H(t)$ in Eq.~(\ref{eqn:hamil_2mode_cs}), and $\mathbf{L}=(\sqrt{\Gamma_{\rm L}}a_1, \sqrt{\Gamma_{\rm R}}a_2)^T$. Since each waveguide feeds both normal modes, we assign individual ``virtual'' input-output ports to $c_1$ and $c_2$, and promote $\mathbf{L}$ to $\mathbf{L} = (\sqrt{\gamma_{\rm L}}\tilde{c}_1, -\sqrt{\gamma_{\rm R}}\tilde{c}_1, \sqrt{\gamma_{\rm L}}\tilde{c}_2, \sqrt{\gamma_{\rm R}}\tilde{c}_2)^T$, where $\gamma_{\rm L,R} = \frac{\Gamma_{\rm L,R}}{2}$ are the ``effective'' coupling rates of the normal modes to each waveguide. The $\mathbf{S}$ operator is $\mathbf{S}=\mathbf{I}_{4}$. The inputs on each of these virtual ports are $b_{\rm in}^{(\rm L)}(t)$ and $b_{\rm in}^{(\rm R)}(t)$ for the first and second waveguides, respectively.

From the SLH triple we construct the time-dependent $ABCD$ system, then, by going into the rotating frames of the slow components of the normal modes, invoking the RWA to discard oscillatory terms at $\pm2\omega_{\rm d}$, and neglecting the inputs on the slow components which are not resonant, we recover a time-independent system which can be solved via Fourier transform. When the input is resonant with either $\omega_1$ or $\omega_2$, and the modulation is resonant with the level splitting, $\Delta_{12}$, the transfer matrix for this system is
\begin{widetext}
\begin{equation}
\label{eqn:transfer_mat_2reso_2wv}
\Xi  (\epsilon, \gamma_{\rm L}, \gamma_{\rm R}, \kappa_{\rm int}) = \\
 \begin{pmatrix}
1 - \frac{2\kappa\gamma_{\rm L}}{\kappa^2 + \epsilon^2}  &  \frac{2\kappa\sqrt{\gamma_{\rm L}\gamma_{\rm R}}}{\kappa^2 + \epsilon^2} & i\frac{2\epsilon\gamma_{\rm L}}{\kappa^2 + \epsilon^2}e^{i\phi} & i\frac{2\epsilon\sqrt{\gamma_{\rm L}\gamma_{\rm R}}}{\kappa^2 + \epsilon^2}e^{i\phi} \\
 \frac{2\kappa\sqrt{\gamma_{\rm L}\gamma_{\rm R}}}{\kappa^2 + \epsilon^2} & 1 - \frac{2\kappa\gamma_{\rm R}}{\kappa^2 + \epsilon^2} & -i\frac{2\epsilon\sqrt{\gamma_{\rm L}\gamma_{\rm R}}}{\kappa^2 + \epsilon^2}e^{i\phi} & -i\frac{2\epsilon\gamma_{\rm R}}{\kappa^2 + \epsilon^2}e^{i\phi} \\
i\frac{2\epsilon\gamma_{\rm L}}{\kappa^2 + \epsilon^2}e^{-i\phi} & -i\frac{2\epsilon\sqrt{\gamma_{\rm L}\gamma_{\rm R}}}{\kappa^2 + \epsilon^2}e^{-i\phi} & 1 - \frac{2\kappa\gamma_{\rm L}}{\kappa^2 + \epsilon^2}  & - \frac{2\kappa\sqrt{\gamma_{\rm L}\gamma_{\rm R}}}{\kappa^2 + \epsilon^2} \\
i\frac{2\epsilon\sqrt{\gamma_{\rm L}\gamma_{\rm R}}}{\kappa^2 + \epsilon^2}e^{-i\phi} & -i\frac{2\epsilon\gamma_{\rm R}}{\kappa^2 + \epsilon^2}e^{-i\phi} & - \frac{2\kappa\sqrt{\gamma_{\rm L}\gamma_{\rm R}}}{\kappa^2 + \epsilon^2} & 1 - \frac{2\kappa\gamma_{\rm R}}{\kappa^2 + \epsilon^2}
\end{pmatrix},
\end{equation}
\end{widetext}
where now the total mode linewith, $\kappa$, is given by $\kappa = \gamma_{\rm L} + \gamma_{\rm R} + \kappa_{\rm int}$, and the intrinsic losses where included phenomenologically via the $A$ matrix.

\subsubsection{Absence of \texorpdfstring{$0$}{\textit{0}}-\texorpdfstring{$100$}{\textit{100}} frequency beam splitter}
Consider, without loss of generality, an input on the first waveguide resonant with $\omega_1$. Then, we only need to look at the first column of $\Xi(\epsilon)$. In particular, ideal frequency shifting is observed whenever $\Xi_{21}(\epsilon) = 0$ and  $\Xi_{41} (\epsilon) \ne 0$, and we do not impose any condition on the values of $\Xi_{11}(\epsilon)$ or $\Xi_{31}(\epsilon)$, since, in principle, we are not interested in the contents of the output of the first waveguide. A simple inspection of the transfer matrix tells us that no value of $\epsilon$ satisfies: $\Xi_{21}(\epsilon) = 0$ and  $\Xi_{41} (\epsilon) \ne 0$. Thus, the second waveguide cannot be use to define a frequency shifter.

However, there is a value of modulation amplitude at which the output on the first waveguide is frequency-shifted. Solving $\Xi_{11}(\epsilon) = 0$, we find $\epsilon_{\rm GCC} = \sqrt{\gamma_{\rm L}^2 - (\gamma_{\rm R} + \kappa_{\rm int})^2}$. This operating point does not lead to the desired output on the second waveguide, since $\Xi_{21}(\epsilon_{\rm GCC}) = \sqrt{\frac{\gamma_{\rm R}}{\gamma_{\rm L}}}\ne 0$, and $\Xi_{41}(\epsilon_{\rm GCC})\ne 0$. Thus, even when the output on the first waveguide is successfully frequency-shifted, the output on the second waveguide carries both frequency modes. 

\begin{figure}
    \centering
    \includegraphics[width=\linewidth]{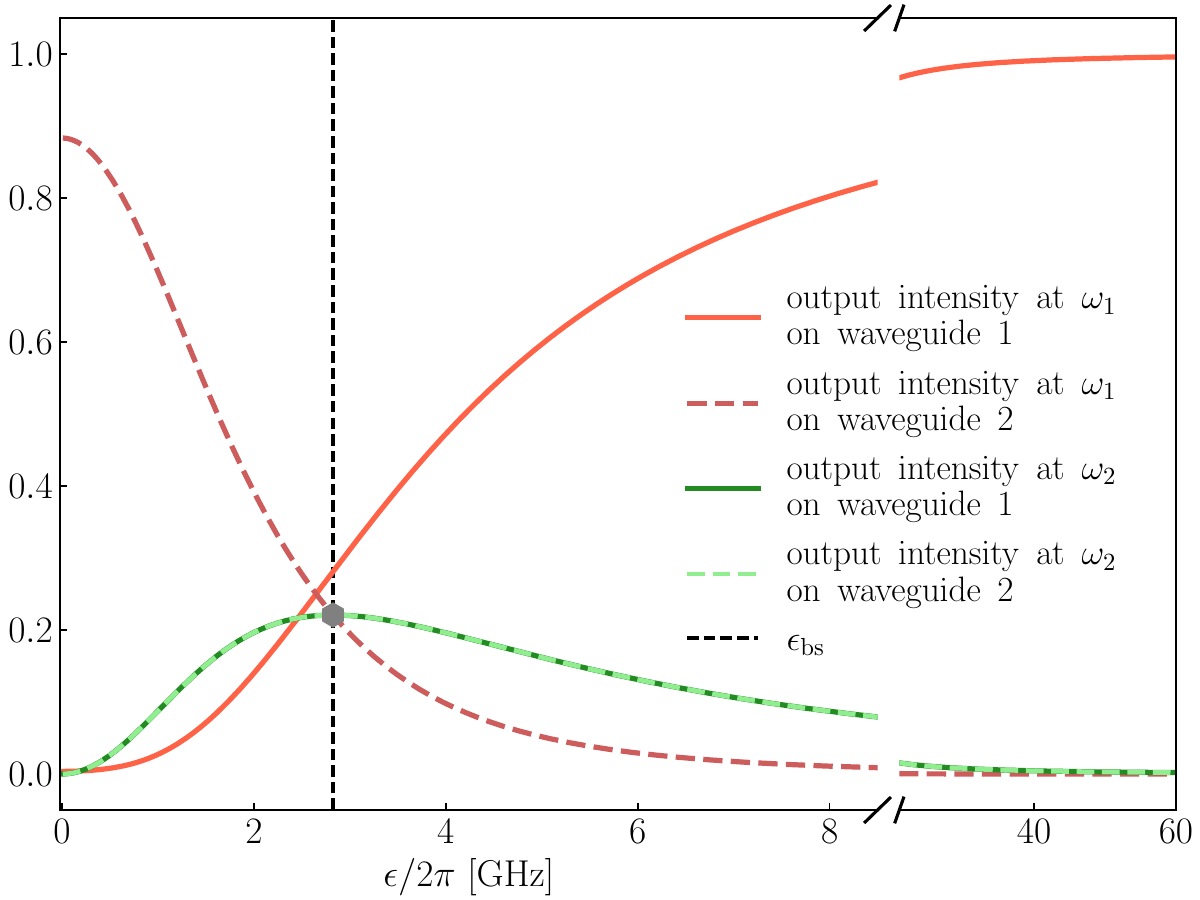}
    \caption{Norm squared entries of the transfer matrix for the two resonator device in the two waveguide configuration as function of the modulation amplitude $\epsilon$. The input is taken at $\omega_1$. The dashed black line indicates the $50$-$50$ frequency beam splitter working point, $\epsilon_{\rm bs}^{(\rm R)}$, for the output of the second waveguide. The extended axis shows the asymptotic behavior of the transfer matrix entries as $\epsilon\to\infty$. Parameters are as in in Fig.~\ref{fig:fig_2} and $\Gamma_{\rm R} = \Gamma_{\rm L}$.}
    \label{fig:fig_3}
\end{figure}

\subsubsection{\texorpdfstring{$50$}{\textit{50}}-\texorpdfstring{$50$}{\textit{50}} frequency beam splitter}
A $50$-$50$ frequency-domain beam splitter on the output of the right-waveguide is obtained whenever $|\Xi_{21}(\epsilon)|^2 = |\Xi_{41}(\epsilon)|^2$. The modulation amplitude satisfying this condition is 
\begin{equation}
\label{eqn:epsilon_bs_2vw}
\epsilon_{\rm bs}^{(\rm R)} = \kappa = \gamma_{\rm L} + \gamma_{\rm R} + \kappa_{\rm int},
\end{equation}
that is, we must modulate with an amplitude equal to the total mode linewidth. At this operating point, the total output intensity on the second waveguide is
\begin{equation}
\label{eqn:intensity_output_right_2wv}
\mathcal{I}_{\rm R}(\epsilon_{\rm bs}^{(\rm R)}) = \frac{2\gamma_{\rm L}\gamma_{\rm R}}{(\gamma_{\rm L} + \gamma_{\rm R} + \kappa_{\rm int})^2} = \frac{2\alpha_{\rm L} \alpha_{\rm R}}{(\alpha_{\rm L} + \alpha_{\rm R} + 2)^2},
\end{equation}
where $\alpha_{\rm L,R} = \Gamma_{\rm L,R}/\kappa_{\rm int}$ are the cooperativities for the first and second waveguides, respectively. In absence of internal loss we can write Eq.~(\ref{eqn:intensity_output_right_2wv}) as $\mathcal{I}_R(\epsilon_{\rm bs}^{(\rm R)}) = \frac{2\beta}{(\beta+2)^2}$ where $\beta = \frac{\gamma_{\rm L}}{\gamma_{\rm R}}$. This quantity has a global maximum of $\mathcal{I}_R(\epsilon_{\rm bs}^{(\rm R)}) = 1/2$ at $\beta=1$. Hence, under ideal conditions, the output intensity on the second waveguide is at best half of the input. This is achieved with symmetric waveguide couplings, $\Gamma_{\rm L} = \Gamma_{\rm R} = \Gamma$. 

The total output intensity on the first waveguide at $\epsilon_{\rm bs}^{(\rm R)}$ is 
\begin{equation}
\label{eqn:intensity_output_left_2wv}
\mathcal{I}_{\rm L}(\epsilon_{\rm bs}^{(\rm R)}) = \frac{(\gamma_{\rm R} + \kappa_{\rm int})^2 + \gamma_{\rm L}^2}{(\gamma_{\rm L} + \gamma_{\rm R} + \kappa_{\rm int})^2} = \frac{(\alpha_{\rm R} + 2)^2 + \alpha_{\rm L}^2}{(\alpha_{\rm R} + \alpha_{\rm L} + 2)^2},
\end{equation}
which for vanishing internal loss, $\alpha_{\rm L,R}\gg1$, and with symmetric waveguide couplings becomes $\mathcal{I}_{\rm L}(\epsilon_{\rm bs}^{(\rm R)}) \to 1/2$. Therefore, a lossless device with symmetric waveguide couplings at $\epsilon_{\rm bs}^{\rm (R)}$ implements both a spatial $50$-$50$ beam splitter between the two waveguides, and a frequency $50$-$50$ beam splitter on each waveguide. Nonzero losses change this scenario in two ways: (1) the first waveguide carries a larger proportion of the total output intensity, as can be inferred by comparing Eq.~(\ref{eqn:intensity_output_left_2wv}) with Eq.~(\ref{eqn:intensity_output_right_2wv}), and (2) only the output of the second waveguide is equally split between the two normal modes being mixed. This is illustrated in Fig.~\ref{fig:fig_3}, where the dashed lines show the behavior of the entries of $\Xi(\epsilon)$ controlling the output on the second waveguide. 

The total loss of the device at the $50$-$50$ beam splitting point of the second waveguide is $L(\epsilon_{\rm bs}^{(\rm R)}) = 1 - \mathcal{I}_R(\epsilon_{\rm bs}^{(\rm R)}) - \mathcal{I}_L(\epsilon_{\rm bs}^{(\rm R)})$, which we find to be
\begin{equation}
\label{eqn:loss_2resos_2wv}
L(\epsilon_{\rm bs}^{(\rm R)}) = \frac{2\gamma_L \kappa_{\rm int}}{(\gamma_R + \gamma_L+\kappa_{\rm int})^2} = \frac{4\alpha_L}{(\alpha_R + \alpha_L + 2)^2},
\end{equation}
and with equal couplings to the waveguides we get $L(\epsilon_{\rm bs}^{(\rm R)}) = \frac{\alpha}{(\alpha + 1)^2}$, where $\alpha = \alpha_{\rm L} = \alpha_{\rm R}$, which in the limit of $\alpha\gg1$ goes as $1/\alpha$. In particular when $\alpha = 30$ we get a total intensity loss of about $3\%$.

Finally we look at the behavior of the norm squared entries of $\Xi$ in the limits $\epsilon\to0$ and $\epsilon\gg1$. In the limit of vanishing modulation amplitude we can write $\Xi_{11}(\epsilon\to0)\approx 1 - \frac{2\gamma_L}{\kappa}$ and $\Xi_{31}(\epsilon\to0) = \Xi_{41}(\epsilon\to0)\approx0$, most output goes out the left-waveguide and a small fraction goes out the right-waveguide. On both of these, the output light carries only the $c_1$ mode (which in our case was the mode resonant with the input light). In general, the output light will carry only the normal mode resonant with the input light. In the limit of a large modulation amplitude, all off-diagonal entries of the transfer matrix $\Xi_{ij, i\ne j}(\epsilon\gg1)\to 0$. Thus, the device behaves as an identity element, which we illustrate in the extended axis of Fig.~\ref{fig:fig_3}. 

\begin{figure}
    \centering
    \includegraphics[width=\linewidth]{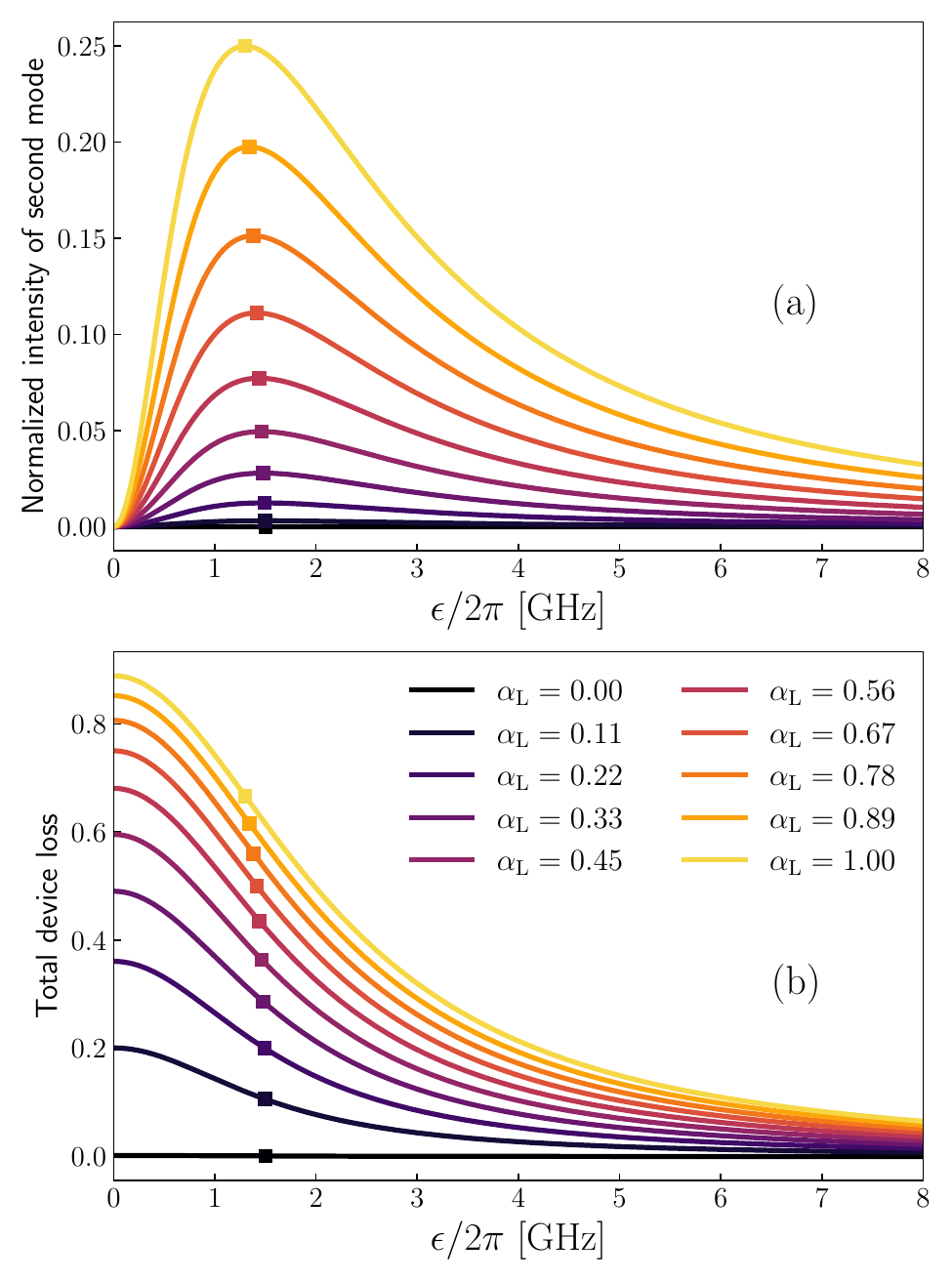}
    \caption{Frequency beam splitters for an under-coupled, $\Gamma_{\rm L}<\kappa_{\rm int}$, two-resonators-waveguide system. (a) Normalized output intensity on the second mode, $c_2$, assuming the input was resonant with $c_1$, as a function of modulation amplitude $\epsilon$. The squares indicate the best splitting ratio achievable at a given $\alpha_{\rm L}$, which occurs at the modulation amplitude $\epsilon_{\rm uc}$ in Eq.~(\ref{eqn:epsilon_uc}). (b) Total loss of the under-coupled system as function of the modulation amplitude. The colored square corresponds to the total loss at $\epsilon_{\rm uc}$. In both panels the color map, dark-to-light, indicates values of $\alpha_{\rm L}\in[0,1]$.}
    \label{fig:fig_undercoupled}
\end{figure}

In summary, in presence of the second waveguide we can frequency shift the output of the first waveguide but not that of the second waveguide. At the same time, the output of the second waveguide can be split $50$-$50$, however, at most half of the total device output can be routed to the second waveguide.

\subsection{Beam splitter ratios achievable with an undercoupled resonators-waveguide system}
\label{subsec:undercoupled}
All frequency beam splitters analyzed in the previous two subsections required the system of coupled resonators to be strongly coupled to the waveguide(s), that is, their cooperativities must satisfy $\alpha_{\rm L} = \Gamma_{\rm L}/\kappa_{\rm int}>1$ (similarly for $\alpha_{\rm R}$ in the case of two waveguides). Although this over-coupled regime has been already demonstrated (see Refs.~\cite{zhang2019electronically,Hu2021}), engineering devices in this regime is by no means an easy task. Here we investigate the types of frequency beam splitters implementable in the under-coupled regime for the two-resonator device. 

We begin with the case of a single waveguide. Clearly the implementable beam splitters in this regime cannot have the form in Eq.~(\ref{eqn:transfer_mat_R}), as they all require $\alpha_{\rm L}\ge 2\sqrt{1-R}$. This includes both the $0$-$100$ and the $50$-$50$ beam splitters in Eq.~(\ref{eqn:transfer_mat_gcc}) and Eq.~(\ref{eqn:transfer_mat_bs_2reso}), respectively. 
Thus, to gain some intuition on the behavior of the device in this regime we look at $\frac{|\Xi_{21}(\epsilon)|^2}{|\Xi_{11}(\epsilon)|^2 + |\Xi_{21}(\epsilon)|^2}$, the normalized output intensity at $\omega_2$ given $\omega_1$ as input, where $\Xi(\epsilon)$ is the transfer matrix in Eq.~(\ref{eqn:transfer_mat_2reso}). The behavior of this quantity as function of modulation amplitude is shown in Fig.~\ref{fig:fig_undercoupled}a for several values of $\alpha_{\rm L}\le 1$. Notably, a peak of this quantity is clearly seen. We find its position to be at
\begin{equation}
\label{eqn:epsilon_uc}
\epsilon_{\rm uc} = \sqrt{\kappa_{\rm int}^2 - \gamma_{\rm L}^2}.
\end{equation}
At this modulation amplitude the device implements a frequency beam splitter described by the transfer matrix
\begin{equation}
\label{eqn:transfer_mat_uc}
\Xi(\epsilon_{\rm uc};\alpha_{\rm L}) = \begin{pmatrix}
    \sqrt{1 - \frac{\alpha_{\rm L}^2}{4}} & i\frac{\alpha_{\rm L}}{2}e^{i\phi} \\ 
    i\frac{\alpha_{\rm L}}{2}e^{-i\phi} & \sqrt{1 - \frac{\alpha_{\rm L}^2}{4}}
\end{pmatrix} \mathcal{K}_{\rm uc}(\alpha_{\rm L}),
\end{equation}
where $\mathcal{K}_{\rm uc}(\alpha_{\rm L}) = \sqrt{\frac{2-\alpha_{\rm L}}{2 + \alpha_{\rm L}}}$ is the transmission amplitude parameter, leading to the total loss of $\mathcal{L}(\epsilon_{\rm uc};\alpha_{\rm L}) = 1 - |\mathcal{K}_{\rm uc}(\alpha_{\rm L})|^2 = \frac{2\alpha_{\rm L}}{2 + \alpha_{\rm L}}$. The output always contains a larger proportion of the input normal mode, with splitting ratio $(1-\frac{\alpha_{\rm L}^2}{4})$-$\frac{\alpha_{\rm L}^2}{4}$.

The most favorable splitting ratio is found at $\epsilon_{\rm uc}$ and $\alpha_{\rm L}=1$. It is a $75$-$25$ beam splitter (yellow square in Fig.~\ref{fig:fig_undercoupled}a). It has total losses of $\mathcal{L}(\epsilon_{\rm uc};1) = 2/3$ (yellow square in Fig.~\ref{fig:fig_undercoupled}b). Thus, the beam splitter achieving the most favorable ratio in this regime is also the most lossy. At $\epsilon_{\rm uc}$, with decreasing $\alpha_{\rm L}$, the total losses decrease and the splitting ratio becomes less favorable, until $\alpha_{\rm L}=0$, where no coupling between waveguide and resonators implies the input and output are identical, thus the device behaves as an identity element. Further, in the asymptotic limits $\epsilon\to0$ and $\epsilon\to\infty$, the under-coupled beam splitters behave as lossy and lossless identity elements, respectively.

Last but not least, in the case of two waveguides, the $50$-$50$ frequency beam splitter on the output of the second waveguide can \emph{always} be achieved. Recall that $\epsilon_{\rm bs}^{\rm (R)}$ equals the normal mode linewidth. However, the under-coupled regime is loss-dominated and we find the output intensity to be too depleted to have a meaningful output in any practical setting.

%
%
\section{Applications: Frequency-mode phase shifter and frequency-mode Mach--Zehnder interferometer}
\label{sec:mach_zender}
In this section we exploit the flexibility and composability of our frequency beam splitter models to give explicit constructions of a frequency-domain phase shifter and a frequency domain Mach--Zehnder interferometer.

\begin{figure}
    \centering
    \includegraphics[width=0.75\linewidth]{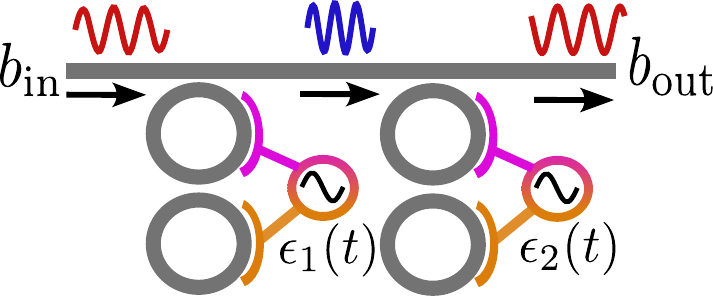}
    \caption{Schematic of two cascaded frequency beam splitters, each implemented with a pair of modulated coupled ring resonators. If the individual beam splitters are operated with $0$-$100$ splitting ratio, the cascaded device implements a frequency-domain phase shifter (shown). If the individual beam splitters are operated with $50$-$50$ splitting ratio, the cascaded device behaves as a frequency-domain Mach--Zehnder interferometer (not shown). In both cases, the desired behavior is obtained by adjusting the phases, $\phi_{1,2}$, of the individual modulations, $\epsilon_{1,2}(t)$, see text for details.}\label{fig:mach_zender}
\end{figure}  

A frequency-domain phase shifter is obtained via two consecutive frequency swaps, with the final output mode having the same frequency as the initial input, but picking up a phase-shift in the process. This operation can be achieved by cascading two $0$-$100$ frequency beam splitters mixing the same two frequencies $\omega_{1}$ and $\omega_2$, as shown in Fig.~\ref{fig:mach_zender}. Thus, the central frequency and coupling on each two-ring device must be picked as to guarantee pairs of normal modes with identical frequencies. Under these conditions an arbitrary phase-shift of $\mu$ can be generated adjusting the phases of the modulations $\epsilon_1(t)$ and $\epsilon_2(t)$. The input-output relation for the phase shifter is given by 
\begin{equation}
\mathbf{b}_{\rm out} = \Xi_{\rm PS}(\phi_1, \phi_2; \gamma_1, \gamma_2, \kappa_{\rm int}) \mathbf{b}_{\rm in},
\end{equation}
where $\phi_{1,2}$ are the phases of the modulations, $\gamma_{1,2} = \Gamma_{1,2}/2$ with $\Gamma_{1,2}$ the coupling rates to the waveguide of each pair of coupled resonators. The transfer matrix is obtained from those of the individual $0$-$100$ beam splitters by direct multiplication, we find
\begin{equation}
\label{eqn:transfer_mat_ps}
\begin{split}
&\Xi_{\rm PS}(\phi_1, \phi_2) = \Xi(\epsilon_{\rm GCC}^{(1)}) \Xi(\epsilon_{\rm GCC}^{(2)}) \\
&= \begin{pmatrix}
e^{i(\phi_1 - \phi_2 + \pi)} & 0 \\ 
0 & e^{-i(\phi_1 - \phi_2 + \pi)} \\
\end{pmatrix} \mathcal{K}_{\rm PS},
\end{split}
\end{equation}
where $\epsilon_{\rm GCC}^{(i)}$ is the modulation amplitude satisfying the GCC condition for each pair of coupled resonators, and 
\begin{equation}
\label{eqn:k_ps}
\mathcal{K}_{\rm PS}(\gamma_1,\gamma_2,\kappa_{\rm int}) = \sqrt{\frac{(\gamma_1 - \kappa_{\rm int})(\gamma_2 - \kappa_{\rm int})}{(\gamma_1 + \kappa_{\rm int})(\gamma_2 + \kappa_{\rm int})}}
\end{equation}
is the phase-shifter transmission amplitude parameter, 
leading to a total loss for the phase shifter of $\mathcal{L}_{\rm PS} = 1 - |\mathcal{K}_{\rm PS}(\gamma_1,\gamma_2,\kappa_{\rm int}) |^2$. From Eq.~(\ref{eqn:transfer_mat_ps}) we immediately see that the choice $\phi_1 = \mu/2$ and $\phi_2=\pi$, generates the desired relative phase-shift of $\mu$ between the two frequency modes. Interestingly, the design in Fig.~\ref{fig:mach_zender} suggests that, for frequency modes, a phase shifter is more complicated than a beam splitter. A flipped picture compared to that of spatial modes. Whether this design  offers any advantage over the usual approach involving a secondary spatial mode~\cite{Lukens2017} is an open question. 

The expression in Eq.~(\ref{eqn:transfer_mat_ps}) can equivalently be obtained from the SLH triples of the two coupled resonator systems applying the SLH composition rules, and invoking the same set of approximations used to obtain the effective transfer matrices of the individual beam splitters. As such, effective transfer matrices of composite frequency-domain elements can be obtain as product of the individual transfer matrices as long as the normal modes of the ring systems are well defined and the individual resonances do not overlap.

Similarly to the frequency-domain phase shifter, a frequency-domain Mach--Zehnder interferometer can be constructed by cascading two $50$-$50$ frequency beam splitters mixing the same two frequency modes. We obtain the transfer matrix for the interferometer by direct multiplication of the transfer matrices of the individual beam splitters.

Let $\phi_{1,2}$ be the phases of the modulations acting on the first and second pair of rings, respectively. After fixing $\phi_2=0$, we find
\begin{equation}
\begin{split}
\Xi_{\rm MZ}(\phi_1) &= \Xi(\epsilon_{\rm bs}^{\pm(1)})\Xi(\epsilon_{\rm bs}^{\pm(2)}) \\
&= i\begin{pmatrix}
-e^{i\frac{\phi_1}{2}}\sin(\phi_1/2) & e^{i\frac{\phi_1}{2}}\cos(\phi_1/2) \\ 
e^{-i\frac{\phi_1}{2}}\cos(\phi_1/2) & e^{-i\frac{\phi_1}{2}}\sin(\phi_1/2) \\
\end{pmatrix} \mathcal{K}_{\rm MZ},
\end{split}
\end{equation}
where $\epsilon_{\rm bs}^{\pm(i)}$ are the modulation amplitudes defining the individual $50$-$50$ beam splitters, and $\mathcal{K}_{\rm MZ}$ is the transmission amplitude parameter of the interferometer given by
\begin{equation}
\mathcal{K}_{\rm MZ} = \mathcal{K}_{\rm bs}^{\pm (1)}\mathcal{K}_{\rm bs}^{\pm (2)},
\end{equation}
with each term in the product being the transmission amplitude parameter of a $50$-$50$ beam splitter (see Eq.~(\ref{eqn:transfer_mat_bs_2reso})). Thus, given an input of frequency $\omega_i$ with $i=1,2$, we will see an output of frequency $\omega_j$, with $j=1,2$, with probability
\begin{equation}
p(\omega_j|\omega_i) = |\mathcal{K}_{\rm MZ}|^2 \times \begin{cases}
\frac{1 - \cos(\phi_1)}{2} \enspace \text{if} \enspace i=j, \\
\frac{1 + \cos(\phi_1)}{2} \enspace \text{if} \enspace i\ne j,
\end{cases}
\end{equation}
in agreement with the standard Mach--Zehnder interferometer output probabilities.

%
%

\section{Frequency beam splitters generated with four resonators}
\label{sec:4_reso}
We consider now a system of four resonators as in Fig.~\ref{fig:fig_1}c. We assume identical frequencies $\omega_0$, and coupling strengths $u$ (resonators $1$ and $2$, and $3$ and $4$), and $v$ (resonators $2$ and $3$, and $1$ and $4$). The system is described by the Hamiltonian 
\begin{equation}
\label{eqn:hamil_a4_usigned}
\begin{split}
H_0 &= \sum_{j=1}^4\omega_{0} a_j^\dagger a_j + u(a_1^\dagger a_{2} + a_3a^\dagger_{4} + {\rm h.c.}) \\ 
&+ v(a_1^\dagger a_{4} + a_2a^\dagger_{3} + {\rm h.c.}), 
\end{split}
\end{equation}
where $a_j$ ($a^\dagger_j$) are annihilation (creation) operators for the $j$-th resonator.  As illustrated in Fig.~\ref{fig:fig_1}c, we take the first resonator to be strongly coupled to a waveguide placed to the left of the array at rate $\Gamma_{\rm L}$, and the third resonator to be strongly coupled to a waveguide placed to the right of the array at rate $\Gamma_{\rm R}$. The $\mathbf{S}$ and $\mathbf{L}$ operators for the system are $\mathbf{S}=\mathbf{I}_2$ and $\mathbf{L}=(\sqrt{\Gamma_{\rm L}}a_1, \sqrt{\Gamma_{\rm R}}a_3)^T$ (or $\mathbf{S}=\mathbf{I}_1$ and $\mathbf{L}=\sqrt{\Gamma_{\rm L}}a_1$ in the case where only the left waveguide is present). We assume each of the four resonators has internal losses at rate $\kappa_{\rm int}$ and, as before, include the effects of these phenomenologically in the effective transfer matrices derived below.

This device has four normal modes and the above choice of resonator couplings yields normal modes with some desirable properties. By diagonalizing $H_0$ we can explicitly write the normal modes as
\begin{subequations}
\label{eqn:normal_modes_4reso}
\begin{align}
c_1 &= \frac{1}{2}(1,-1,1,-1).\mathbf{a}, \\
c_2 &= \frac{1}{2}(1,1,-1,-1).\mathbf{a}, \\ 
c_3 &= \frac{1}{2}(1,-1,-1,1).\mathbf{a}, \\ 
c_4 &= \frac{1}{2}(1,1,1,1).\mathbf{a},
\end{align}
\end{subequations}
where $\mathbf{a} = (a_1, a_2,a_3,a_4)^T$. Note that all normal modes have equal support on all of the resonators. The corresponding normal mode frequencies, listed in increasing order (for $v>u$), are $\omega_1 = \omega_0 - (u + v)$, $\omega_2 = \omega_0 + u - v$, $\omega_3 = \omega_0 - u + v$, and $\omega_4 = \omega_0 + u + v$, respectively. Importantly, note that choosing $v=2u$ yields equally spaced frequency levels. These two properties arising from the engineered symmetry---equal support of normal modes on all resonators and equally spaced normal mode frequencies---are useful for quantum computing applications, where each qubit is encoded in two frequency modes (dual rail encoding). In such applications a regular spacing of frequency modes makes the accounting simpler, and the equal support of normal modes makes the modulation protocols for implementing multimode beamsplitters simpler.

We introduce a time dependent modulation of the resonators frequencies, given by
\begin{align}
H_d(t) = \epsilon(t)\sum_{j=1}^4 f_j a_j^\dagger a_j
\label{eq:H_d}
\end{align}
with coefficients $f_{j} \in \{0,\pm1\}$. The common time-dependent portion is taken to be a single- or two-tone modulation of amplitude $\epsilon$: $\epsilon(t) = \epsilon(\cos(\omega_{\rm d}^{(1)} t + \phi_1) + \cos(\omega_{\rm d}^{(2)} + \phi_2))$. The total Hamiltonian $H(t) = H_0 + H_{\rm d}(t)$, in the normal mode basis, reads
\begin{equation}
\label{eqn:hamil_total_cs_4reso}
H(t) = \sum_{j=1}^4 \omega_{j} c_j^\dagger c_j + \epsilon(t)\sum_{(i,j)\in\mathcal{P}} \left( c_i^\dagger c_j + {\rm h.c.} \right),
\end{equation}
where $\mathcal{P}$, which we will refer to as the ``coupling pattern'', specifies the pairs of coupled normal modes, it is a direct consequence of our choice of signs $f_j$ in Eq.~(\ref{eq:H_d}).
This modulation activates couplings between the normal modes, and the exact type of coupling and $4$-mode beam-splitter that is implemented is determined by (i) the coupling pattern $\mathcal{P}$ and (ii) the choice of modulation frequencies, $\omega_{\rm d}^{(1)}$ and $\omega_{\rm d}^{(2)}$.

As in Sec.~\ref{sec:2_reso}, we proceed by making some rotating wave approximations in order to move into a frame without time dependence. In a rotating frame with respect to the normal mode frequencies, the Hamiltonian takes the form
\begin{equation}
\label{eqn:hamil_total_cs_4reso_RF}
\tilde{H}(t) =  \epsilon(t)\sum_{(i,j)\in\mathcal{P}} \tilde{c}_i^\dagger \tilde{c}_j e^{-i\Delta_{ij}t} + {\rm h.c.},
\end{equation}
where $\Delta_{ij} \equiv \omega_j - \omega_i$. Note that due to the parameter choices made above, this frequency difference can only take six values: $\pm2u, \pm4u, \pm6u$. The modulation frequencies will be chosen to be one of these values, and we will make the RWA to drop all oscillating terms in this Hamiltonian (and set the modulation phases $\phi_i=0$), to yield
\begin{equation}
\label{eqn:hamil_total_cs_4reso_RF}
\tilde{H} =  \epsilon\sum_{(i,j)\in\bar{\mathcal{P}}} \tilde{c}_i^\dagger \tilde{c}_j + {\rm h.c.},
\end{equation}
where $\bar{\mathcal{P}} \subset \mathcal{P}$ is the subset of the coupling patterns selected out by resonance conditions with the modulation frequencies. 

As with the two resonator case, we can also apply a RWA to rewrite the input-output fields, which are carried on one or two waveguides, in terms of frequency modes (since under the RWA only frequency content around the normal mode frequencies couples into the RBS system). Therefore, the input modes are described by the vectors $\mathbf{b}^L_{\rm in} = (b^L_{\rm in}(\omega_1), b^L_{\rm in}(\omega_2), b^L_{\rm in}(\omega_3), b^L_{\rm in}(\omega_4))^T$, and similarly for $\mathbf{b}^R_{\rm in}$. The $\mathbf{L}$ and $\mathbf{S}$ also get modified to
\begin{align}
\mathbf{L} &= (\sqrt{\gamma_{\rm L}}\tilde{c}_1, \sqrt{\gamma_{\rm R}}\tilde{c}_1, \sqrt{\gamma_{\rm L}}\tilde{c}_2, -\sqrt{\gamma_{\rm R}}\tilde{c}_2,  \nn \\
	& \quad\quad\sqrt{\gamma_{\rm L}}\tilde{c}_3, -\sqrt{\gamma_{\rm R}}\tilde{c}_3, \sqrt{\gamma_{\rm L}}\tilde{c}_4, \sqrt{\gamma_{\rm R}}\tilde{c}_4)^T \nn, \\
	\mathbf{S} &= \mathbf{I}_8,
\end{align}
where here $\gamma_{{\rm L/R}} = \Gamma_{{\rm L/R}}/4$.

At this point, we have a time-independent SLH description of the system which can be translated into a autonomous $ABCD$ description, and corresponding effective transfer matrix. 
In the next two sections we analyze the properties of the effective transfer matrices describing different configurations of this device. The different configurations are determined by choices of $f_j$ and modulation frequencies, $\omega_{\rm d}^{(i)}$. We focus on two configurations in the following, one implementing pairs of independent $2$-mode beam splitters and the other implementing a $4$-way symmetric splitter, because these are the most useful for resource state generation and fusion operations in PQC.

\subsection{Networks of non-overlapping \texorpdfstring{$2$}{\textit{2}}-mode frequency beam splitters}
By choice of $f_j$ and modulation frequencies we can implement networks of non-overlapping $2$-mode beam splitters between pairs of the four normal modes in this RBS device. We modulate all four resonators in this configuration, and use modulation coefficients of the form: $(f_1, f_2, f_3, f_4) = (1,-1,1,-1)$, $(f_1, f_2, f_3, f_4) = (1,-1,-1,1)$, and, $(f_1, f_2, f_3, f_4) = (1,1,-1,-1)$, leading to normal modes coupled via the patterns
\begin{equation}
\label{eqn:patterns_4reso}
\begin{split}
\mathcal{P}_1 = \{(2,3),(1,4)\}, \\
\mathcal{P}_2 = \{(1,2),(3,4)\}, \\
\mathcal{P}_3 = \{(1,3),(2,4)\}, \\
\end{split}
\end{equation}
which we schematically show in Fig.~\ref{fig:fig_loss}a. These patterns cover all three possible different ways of coupling two of four modes without repetitions and overlaps. Notice that when $v=2u$, the splittings between coupled levels in patterns $\mathcal{P}_{2}$ are equal (similarly for those in $\mathcal{P}_3$), the corresponding pair of beam splitters are activated by a single-tone modulation. However, activating the beam splitters specified by pattern $\mathcal{P}_1$ requires a two-tone modulation.

Each coupling pattern in Eq.~(\ref{eqn:patterns_4reso}) specifies two independent $2$-mode beam splitters, we find the corresponding transfer matrices to be block-diagonal, with each block being either $2\times2$ (one waveguide) or $4\times4$ (two waveguides) and having the respective form of Eq.~(\ref{eqn:transfer_mat_2reso}) or Eq.~(\ref{eqn:transfer_mat_2reso_2wv}). Consequently, the modulation amplitudes defining the $0$-$100$ and $50$-$50$ beam splitters have the same form as those in Eq.~(\ref{eqn:gcc_2reso_1wv}) and Eq.~(\ref{eqn:bs_2reso_1wv}) (one waveguide), and Eq.~(\ref{eqn:epsilon_bs_2vw}) (two waveguides), where now $\gamma_{\rm L,R} = \frac{\Gamma_{\rm L,R}}{4}$. Hence, we require weaker modulation amplitudes to activate the desired beam splitters when implemented with the array of four resonators. However, the implemented beam splitters experience larger losses. We explore this trade-off for the cases of one and two waveguides next.

\subsubsection{Single waveguide}
\label{subsec:4_reso_1wv}
We begin by analyzing the behavior of the $0$-$100$ frequency beam splitter. Both non-overlapping beam splitters in the pattern are tuned to this ratio at the GCC condition, $\epsilon_{\rm GCC} = \sqrt{\frac{\Gamma_{\rm L}^2}{16} - \kappa_{\rm int}^2}$. Clearly, activating this behavior demands an over-coupled resonator-waveguide system, with a ratio $\frac{\Gamma_{\rm L}}{\kappa_{\rm int}}>4$, twice the ratio required to activate the same behavior in the two-resonator device. 

Each of the two $0$-$100$ frequency beam splitters experiences total losses equal to
\begin{equation}
\label{eqn:loss_gcc_4reso_1wv}
\mathcal{L}(\epsilon_{\rm GCC}) = \frac{8}{\alpha_{\rm L} + 4},
\end{equation} 
where $\alpha_{\rm L} = \Gamma_{\rm L}/\kappa_{\rm int}$. This expression is to be compared with Eq.~(\ref{eqn:loss_gcc_2mode}), which we illustrate in Fig.~\ref{fig:fig_loss}b. Their ratio is given by $\frac{\mathcal{L}(\epsilon_{\rm GCC};4)}{\mathcal{L}(\epsilon_{\rm GCC};2)} = \frac{2(\alpha_{\rm L} + 2)}{\alpha_{\rm L} + 4}$. That is, one frequency shifter implemented with the array of four resonators experiences at most twice the amount of loss experienced by the frequency shifter implemented with the two-resonator device.

Consider now the $50$-$50$ beam splitter. Similarly to the two-resonator device, one activates this behavior at the modulation amplitudes $\epsilon_{\rm bs}^\pm = |\frac{\Gamma_{\rm L}}{4} \pm \sqrt{\frac{\Gamma_{\rm L}^2}{8} - \kappa_{\rm int}^2}|$, hence activating this behavior requires operation in a regime of overcoupling between the resonators and the waveguide, with a minimum ratio $\frac{\Gamma_{\rm L}}{\kappa_{\rm int}}\ge\sqrt{8}$. Twice the minimum ratio required to activate the same behavior in the two-resonator device.

Each of the two $50$-$50$ beam splitter in one of the patterns experiences a total amount of loss equal to
\begin{equation}
\mathcal{L}(\epsilon_{\rm bs}^\pm) = \frac{8\left(\sqrt{2}(\alpha_{\rm L} + 2) \mp \sqrt{\alpha_{\rm L}^2 - 8}\right)}{\sqrt{2}(\alpha_{\rm L} + 4)^2},
\end{equation}
which is to be compared with Eq.~(\ref{eqn:loss_bs_2reso_1wv}), which we illustrate in Fig.~\ref{fig:fig_loss}c. We find, in the limit $\alpha_{\rm L}\gg1$, the loss on each $50$-$50$ beam splitter generated with the four-resonator device to be twice the loss of the $50$-$50$ beam splitter generated with the two-resonator device.

Importantly, the limitation we just discussed can be addressed at the design and fabrication stages. Our analysis assumes that $\Gamma_{\rm L}$ (equivalently $\alpha_{\rm L}$) remains fixed as we scale up the array size. As such, if one can engineer stronger coupling to the waveguide for the four-resonator device, one might reach a point at which the total loss is the same in both the four- and two-resonator devices for a given target behavior. We can illustrate this for the $0$-$100$ beam splitter, taking $\alpha_{\rm L} \to 2\alpha_{\rm L}$ in Eq.~(\ref{eqn:loss_gcc_4reso_1wv}), we immediately recover the same expression as Eq.~(\ref{eqn:loss_gcc_2mode}). We further illustrate this fact in Fig.~\ref{fig:fig_loss}b,c, where we compare the total loss for the two targeted behaviors generated with both the two- and four-resonator devices as a function of $\alpha_{\rm L}$.

\begin{figure}
    \centering
    \includegraphics[width=0.98\linewidth]{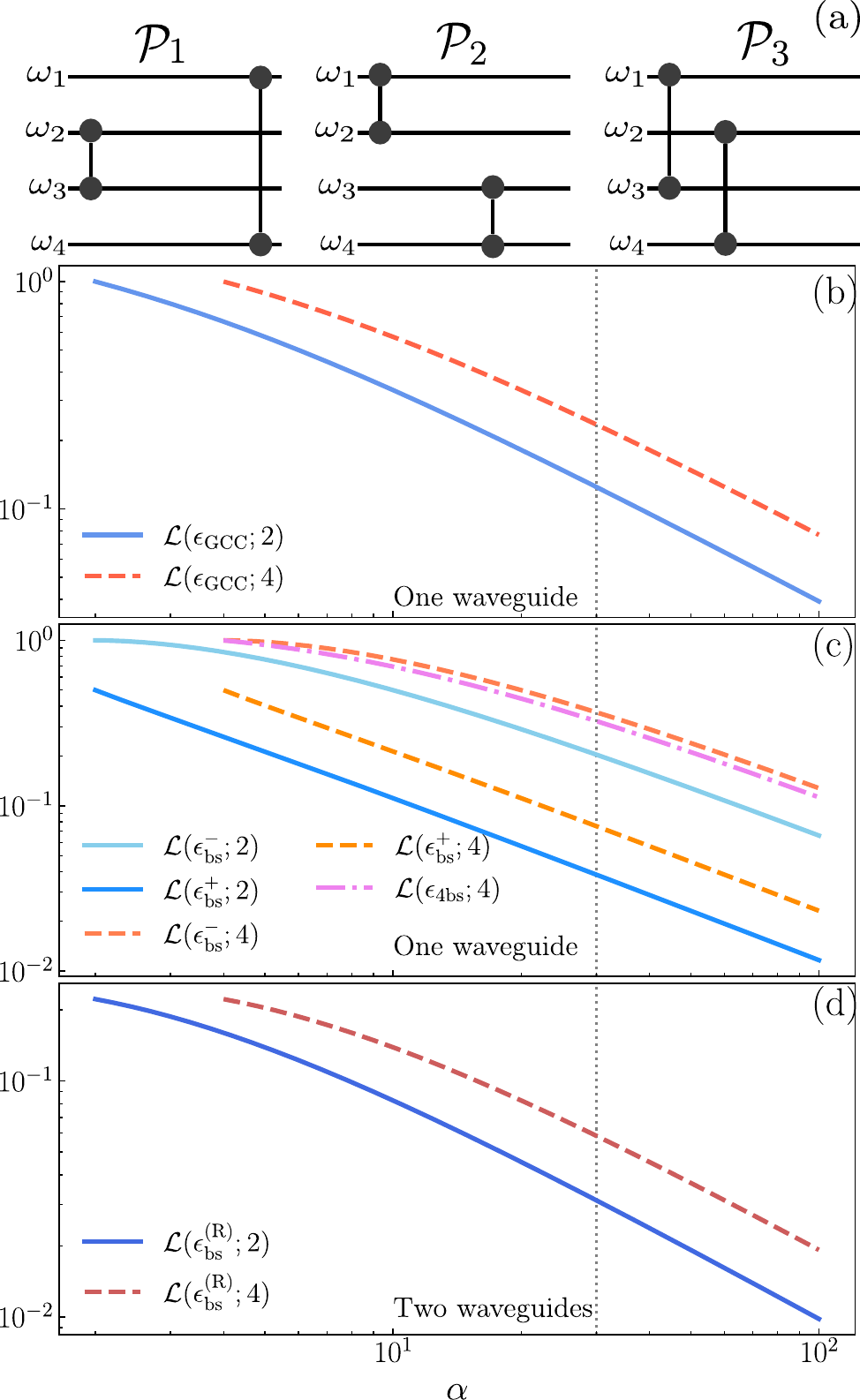}
    \caption{(a) Schematic illustration of the nonoverlapping coupling patterns, $\mathcal{P}_{1,2,3}$, of normal modes generated by driving all four resonators. (b-d) Total loss of the different beam splitters generated with the four-resonator device (broken lines) compared against the total loss of the beam splitters generated with the two-resonator device (solid lines). (b) Total loss of one $0$-$100$ frequency beam splitter. (c) Total loss of one $50$-$50$ frequency beam splitter, at the two working points, and the $4$-way symmetric beam splitter (dashed-dotted line). (d) Total loss of one $50$-$50$ frequency beam splitter for the case of two waveguides, we are taking $\Gamma_{\rm L}=\Gamma_{\rm R} = \Gamma$. The vertical dotted line shows $\alpha = 30$.} 
    \label{fig:fig_loss}
\end{figure}

\subsubsection{Two waveguides}
\label{subsec:4_reso_2wv}
Each $4\times 4$ subblock in the transfer matrix has the form in Eq.~(\ref{eqn:transfer_mat_2reso_2wv}), as such, we know there is no value of the modulation amplitude at which the output on the second waveguide is split with a $0$-$100$ ratio. On the other hand, at the modulation amplitude $\epsilon_{\rm bs}^{(\rm R)} = \kappa = \gamma_{\rm L} + \gamma_{\rm R} + \kappa_{\rm int}$, the output on the second waveguide is split at a $50$-$50$ ratio for each of the two beam splitters. However, one cannot route more than half of the total input, of each beam splitter, into the output of the second waveguide, and this maximum value is only achieved when $\kappa_{\rm int} = 0$ and $\gamma_{\rm L}=\gamma_{\rm R}$. Importantly, since $\gamma_{\rm L,R}=\frac{\Gamma_{\rm L,R}}{4}$, activating the $50$-$50$ behavior of each beam splitter requires a weaker modulation amplitude compared to that necessary to activate the same behavior with the $2$-resonator device.

When internal losses are not zero, we find each of the two $50$-$50$ beam splitters in the pattern suffers from total losses equal to
\begin{equation}
\label{eqn:loss_4resos_2wv}
\mathcal{L}_{\rm R}(\epsilon_{\rm bs}^{\rm (R)}) = \frac{8\alpha_{\rm L}}{(\alpha_{\rm L} + \alpha_{\rm R} + 4)^2}.
\end{equation}
When $\gamma_{\rm L} = \gamma_{\rm R}$ we find $\mathcal{L}_{\rm R}(\epsilon_{\rm bs}^{\rm (R)}) = \frac{2\alpha}{(\alpha + 2)^2} \to \frac{2}{\alpha}$ in the limit of $\alpha\gg1$. In particular when $\alpha=30$ each beam splitter suffers from total losses of about $5.85\%$. 

The previous expression is to be compared with Eq.~(\ref{eqn:loss_2resos_2wv}), which we illustrate, for the case of $\gamma_{\rm L} = \gamma_{\rm R}$, in Fig.~\ref{fig:fig_loss}d. We find their ratio to be $\frac{\mathcal{L}(\epsilon_{\rm bs}^{(R)};4)}{\mathcal{L}(\epsilon_{\rm bs}^{(\rm R)};2)} = \frac{2(\alpha_L + \alpha_R + 2)^2}{(\alpha_L + \alpha_R + 4)^2}$, and when the couplings to both waveguides are equal, we have $\frac{\mathcal{L}(\epsilon_{\rm bs}^{(R)};4)}{\mathcal{L}(\epsilon_{\rm bs}^{(\rm R)};2)} =  \frac{2(\alpha+1)^2}{(\alpha + 2)^2}$, which approaches $2$ from below as $\alpha\to\infty$. Hence, each $50$-$50$ beam splitter implemented with the four resonator devices suffers from about twice as much total loss as a beam splitter implemented with the two resonator device. Again, this issue might be bypassed by engineering stronger coupling to the waveguides. In fact, if one can fabricate the $4$-resonator device such that $\alpha_{\rm L,R}\to2\alpha_{\rm L,R}$, then Eq.~(\ref{eqn:loss_4resos_2wv}) becomes identical to Eq.~(\ref{eqn:loss_bs_2reso_1wv}), that is, by coupling the $4$-resonator device twice as strongly to the waveguides as the $2$-resonator device, both devices will experience the same amount of total loss at the $50$-$50$ beam splitting point.

\subsection{\texorpdfstring{$4$}{\textit{4}}-way symmetric beam splitter}
\label{subsec:4way_symmetric_bs}
One advantage offered by the frequency domain, is that one can implement multimode beamsplitters natively, without needing a sequence of two-mode beamsplitters. We demonstrate this by implementing a $4$-way symmetric beam splitter with the four-resonator RBS device. This requires modulating only two of the resonators. In principle, modulating any pair of resonators would lead to the desired behavior. Here, without loss of generality, we chose to modulate resonators $1$ and $2$. This is achieved with the choice of signs $(f_1,f_2,f_3,f_4)=(1,-1,0,0)$, which leads to the coupling pattern for the normal modes $\mathcal{P}_4 = \{(1,2),(2,3),(3,4),(1,4)\}$ (see Fig.~\ref{fig:fig_tm_4way}a). 

$\mathcal{P}_4$ contains two distinct level splittings $\Delta_{12} = \Delta_{23} = \Delta_{34} = 2u$ and $\Delta_{14} = 6u$. Thus, $\epsilon(t)$ needs to be a two-tone modulation in order to activate the desired behavior. We pick these tones to be of the form $\epsilon(t) = \epsilon(\cos(\omega_{\rm d}^{(1)} t + \phi_1) + \cos(\omega_{\rm d}^{(2)} + \phi_2))$, with each modulation frequency resonant with one of the splittings, $\omega_{\rm d}^{(1)} = 2u$ and $\omega_{\rm d}^{(2)} = 6u$.
 
With the aim of keeping the discussion short, we will only consider the case of a single waveguide coupled to the array via $a_1$ at a rate $\Gamma_{\rm L}$. Further, we take an input field with a single frequency, $\omega_{\rm I} = \omega_1$, which we pick to be resonant with normal mode $c_1$. At this point we follow the steps described at the beginning of this section to arrive at the effective transfer matrix for the optical element. Importantly, our choice of $\phi_{1,2}$ (the modulation phases), can yield a transfer matrix with either only real entries or real and imaginary entries. For the former one sets $\phi_1 = -\phi_2$ with either of the two phases equal to $\pi/2$ and for the latter one sets $\phi_1 = l\pi$ and $\phi_2 = m\pi$ with $l,m = 0,\pm1,\pm2,..$ constrained to their sum, $l+m$, being even. 

We take $\phi_1=\pi/2$ and $\phi_2 = -\pi/2$, then the transfer matrix is given by
\begin{widetext}
\begin{equation}
\label{eqn:transfer_mat_4way_bs_general}
\Xi(\epsilon;\gamma_{\rm L}, \kappa_{\rm int}) = \begin{pmatrix}
1 - \frac{2\gamma_{\rm L}(\kappa^2 + 2\tilde{\epsilon}^2)}{\kappa(\kappa^2 + 4\tilde{\epsilon}^2)} && -\frac{2\gamma_{\rm L}\tilde{\epsilon}}{\kappa^2 + 4\tilde{\epsilon}^2} && -\frac{4\gamma_{\rm L}\tilde{\epsilon}^2}{\kappa(\kappa^2 + 4\tilde{\epsilon}^2)} && \frac{2\gamma_{\rm L}\tilde{\epsilon}}{\kappa^2 + 4\tilde{\epsilon}^2} \\
\frac{2\gamma_{\rm L}\tilde{\epsilon}}{\kappa^2 + 4\tilde{\epsilon}^2} && 1 - \frac{2\gamma_{\rm L}(\kappa^2 + 2\tilde{\epsilon}^2)}{\kappa(\kappa^2 + 4\tilde{\epsilon}^2)} && -\frac{2\gamma_{\rm L}\tilde{\epsilon}}{\kappa^2 + 4\tilde{\epsilon}^2} && -\frac{4\gamma_{\rm L}\tilde{\epsilon}^2}{\kappa(\kappa^2 + 4\tilde{\epsilon}^2)} \\
-\frac{4\gamma_{\rm L}\tilde{\epsilon}^2}{\kappa(\kappa^2 + 4\tilde{\epsilon}^2)} && \frac{2\gamma_{\rm L}\tilde{\epsilon}}{\kappa^2 + 4\tilde{\epsilon}^2} && 1 - \frac{2\gamma_{\rm L}(\kappa^2 + 2\tilde{\epsilon}^2)}{\kappa(\kappa^2 + 4\tilde{\epsilon}^2)} && -\frac{2\gamma_{\rm L}\tilde{\epsilon}}{\kappa^2 + 4\tilde{\epsilon}^2} \\
-\frac{2\gamma_{\rm L}\tilde{\epsilon}}{\kappa^2 + 4\tilde{\epsilon}^2} && -\frac{4\gamma_{\rm L}\tilde{\epsilon}^2}{\kappa(\kappa^2 + 4\tilde{\epsilon}^2)} && \frac{2\gamma_{\rm L}\tilde{\epsilon}}{\kappa^2 + 4\tilde{\epsilon}^2} && 1 - \frac{2\gamma_{\rm L}(\kappa^2 + 2\tilde{\epsilon}^2)}{\kappa(\kappa^2 + 4\tilde{\epsilon}^2)}
\end{pmatrix},
\end{equation}
\end{widetext}
where we defined $\tilde{\epsilon}=\epsilon/2$, and $\kappa = \gamma_{\rm L} + \kappa_{\rm int}$ is the total normal mode linewidth.
For a $4$-way symmetric beam splitter, in the absence of internal losses, the norm squared of all the entries in Eq.~\eqref{eqn:transfer_mat_4way_bs_general} must be $1/4$. Letting $\tilde{\epsilon} = \frac{\sqrt{\gamma_{\rm L}^2 - \kappa_{\rm int}^2}}{2}$ satisfies this condition. Thus the modulation amplitude defining a $4$-way symmetric beam splitter is 
\begin{equation}
\label{eqn:epsilon_4way_bs}
\epsilon_{4 \rm bs} = \sqrt{\gamma_{\rm L}^2 - \kappa_{\rm int}^2}.
\end{equation}
That is, if we modulate two of the resonators with opposite signs, at GCC, the $4$-resonator device implements a $4$-way symmetric beam splitter. Although we found $\epsilon_{4\rm bs}$ under a specific choice of values of modulation phases, Eq.~(\ref{eqn:epsilon_4way_bs}) also yields a $4$-way symmetric splitter had we chosen modulation phases leading to a $\Xi(\epsilon)$ with mixed real and imaginary entries. The behavior of $\Xi$ as function of the modulation amplitude is shown in Fig.~\ref{fig:fig_tm_4way}b,c.

At this modulation amplitude the transfer matrix simplifies to 
\begin{equation}
\label{eqn:transfer_mat_4way_bs}
\Xi(\epsilon_{4\rm bs}) = \frac{1}{2}\begin{pmatrix}
-\mathcal{K}_{\rm 4bs} & -\sqrt{\mathcal{K}_{\rm 4bs}} & -\mathcal{K}_{\rm 4bs} & \sqrt{\mathcal{K}_{\rm 4bs}}\\
\sqrt{\mathcal{K}_{\rm 4bs}} & -\mathcal{K}_{\rm 4bs} & -\sqrt{\mathcal{K}_{\rm 4bs}} & -\mathcal{K}_{\rm 4bs} \\
-\mathcal{K}_{\rm 4bs} & \sqrt{\mathcal{K}_{\rm 4bs}} & -\mathcal{K}_{\rm 4bs} & -\sqrt{\mathcal{K}_{\rm 4bs}} \\
-\sqrt{\mathcal{K}_{\rm 4bs}} & -\mathcal{K}_{\rm 4bs} & \sqrt{\mathcal{K}_{\rm 4bs}} & -\mathcal{K}_{\rm 4bs}
\end{pmatrix},
\end{equation}
where the outputs have transmission amplitude parameters of $\sqrt{\mathcal{K}_{\rm 4bs}}$ or $\mathcal{K}_{\rm 4bs}$, depending on which normal mode was input, and $\mathcal{K}_{\rm 4bs} = \frac{\gamma_{\rm L} - \kappa_{\rm int}}{\gamma_{\rm L} + \kappa_{\rm int}} = \frac{\alpha_{\rm L} - 4}{\alpha_{\rm L} + 4}$. Clearly, when $\kappa_{\rm int} = 0$, we have $\mathcal{K}_{\rm 4bs}=1$, and the system implements a perfect $4$-way symmetric splitter.

Nonzero internal loses affect our $4$-way splitter in two different ways. First, it depletes the output intensity by a factor of $\frac{\mathcal{K}_{\rm 4bs}^2+\mathcal{K}_{\rm 4bs}}{2}$, that is, at this working point the devices suffers from total losses equal to
\begin{equation}
\label{eqn:loss_4way_bs}
\mathcal{L}(\epsilon_{4\rm bs}) = \frac{3\gamma_{\rm L} + \kappa_{\rm int ^2}}{(\gamma_{\rm L} + \kappa_{\rm int})^2} = \frac{4(3\alpha_{\rm L} + 4)}{(\alpha_{\rm L} + 4)^2},
\end{equation}
which goes as $\mathcal{L}(\epsilon_{4\rm bs}) \sim 12 / \alpha_{\rm L}$ when $\alpha_{\rm L}\gg 1$. In Fig.~\ref{fig:fig_loss}c (dashed-dotted line) we compare this loss with the loss of the $50$-$50$ beam splitters generated with the four-resonator device. We find it to be slightly less than the loss experienced by one $2$-mode beam splitter at $\epsilon_{\rm bs}^-$ of the four-resonator device. 
\begin{figure}[ht!]
    \centering
    \includegraphics[width=0.95\linewidth]{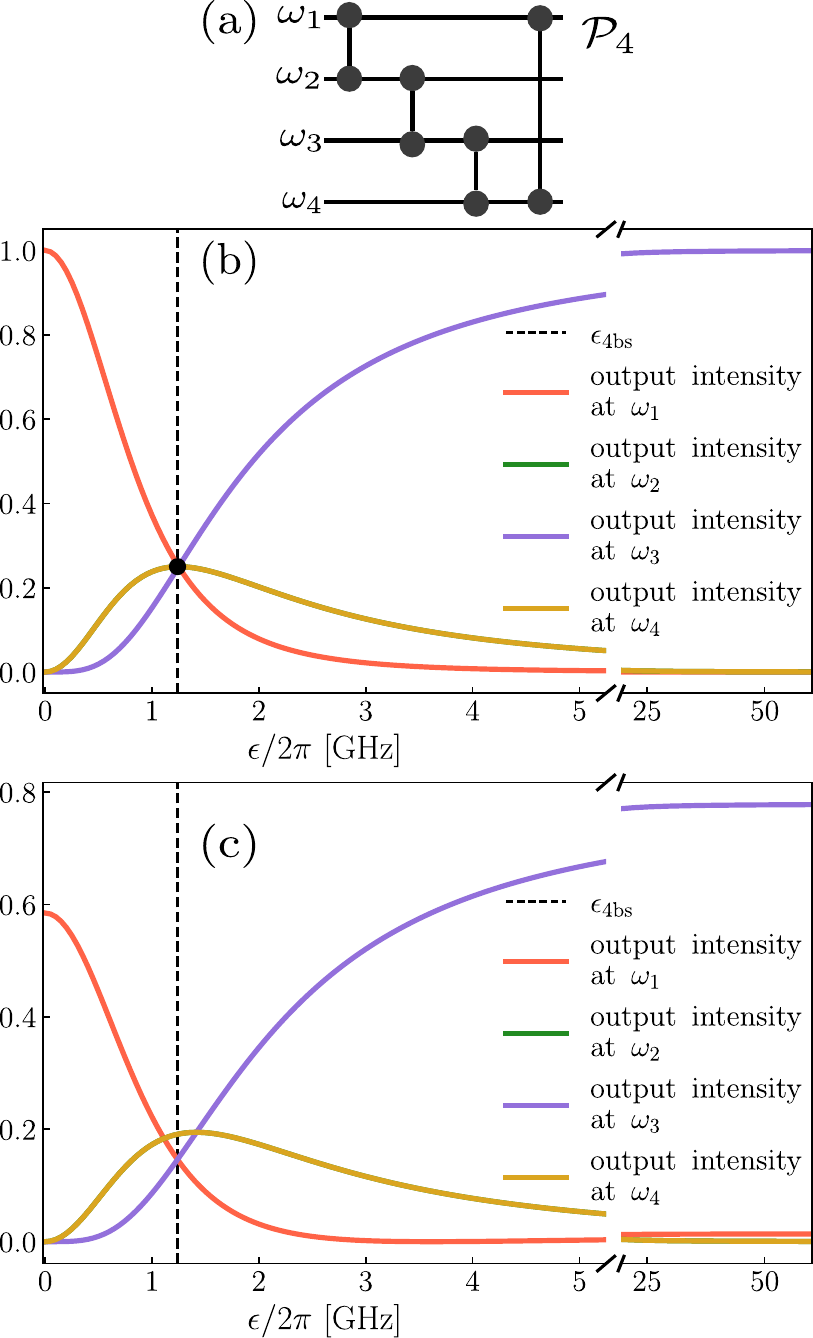}
    \caption{ (a) Illustration of the coupling pattern $\mathcal{P}_4$ used to generate the $4$-way symmetric beam splitter. (b,c) Squared norm of the entries of Eq.~(\ref{eqn:transfer_mat_4way_bs}) in the absence (b) and presence (c) of internal losses $\kappa_{\rm int}$. The input is taken to be resonant with $\omega_1$. The extended axis shows the behavior as $\epsilon\to\infty$ which is governed by Eq.~(\ref{eqn:transfer_mat_4way_limit}). The vertical dashed line indicates $\epsilon_{\rm 4bs}$. Both, in (b) and (c), the output at $\omega_2$ is identical to the output at $\omega_4$, \ie green and orange lines overlap. Parameters: $\Gamma_{\rm L}/2\pi = 5~{\rm GHz}$, $\kappa_{\rm int}/2\pi = 0.17~{\rm GHz}$, and $\gamma_{\rm L} = \Gamma_{\rm L}/4$.} 
    \label{fig:fig_tm_4way}
\end{figure}
Second, nonzero internal losses break the symmetric character of the linear optical element, an effect which is evident from the form of $\Xi(\epsilon_{4\rm bs})$ in Eq.~(\ref{eqn:transfer_mat_4way_bs}). A similar effect takes place regardless of which of the four modes is sent as input, and regardless of which pair of resonators are modulated.

Finally, let us look at the asymptotic behavior of the transfer matrix. In the limit $\epsilon\to0$, all off-diagonal entries in Eq.~(\ref{eqn:transfer_mat_4way_bs_general}) go to zero, and the diagonal entries become $\Xi_{ii}(\epsilon\to0)\to\frac{-(\gamma_{\rm L} - \kappa_{\rm int})}{\gamma_{\rm L} + \kappa_{\rm int}}$, thus the system behaves as a lossy identity element. This behavior, for the case of input resonant with $\omega_1$, is illustrated in Fig.~\ref{fig:fig_tm_4way}b,c. In the limit of large modulation amplitudes, $\epsilon\to\infty$, the transfer matrix takes the form
\begin{equation}
\label{eqn:transfer_mat_4way_limit}
\Xi(\epsilon\to\infty) = \begin{pmatrix}
\kappa_{\rm int}/\kappa & 0 & -\gamma_{\rm L}/\kappa & 0 \\
0 & \kappa_{\rm int}/\kappa & 0 & -\gamma_{\rm L}/\kappa \\
-\gamma_{\rm L}/\kappa & 0 & \kappa_{\rm int}/\kappa & 0 \\
0 & -\gamma_{\rm L}/\kappa & 0 & \kappa_{\rm int}/\kappa
\end{pmatrix},
\end{equation}
which in absence of internal loss implements a perfect swap between $c_1$ and $c_3$, and $c_2$ and $c_4$. Interestingly, in this limit, the way the modes are connected depends on the pair of resonators being modulated. This behavior, for the case of an input resonant with $\omega_1$, is illustrated in the extended axis of Fig.~\ref{fig:fig_tm_4way}c, where the output is mostly mode $c_3$. Notice that in absence of internal loss the mode swap is perfect (see extended axis in Fig.~\ref{fig:fig_tm_4way}b).

%
%
\section{Frequency beamsplitters with \texorpdfstring{$N>4$}{\textit{N>4}} resonators}
\label{sec:no_go}
One interesting and appealing aspect of PQC with frequency-domain encoded qubits is the possibility of implementing $N$-mode transformations using a compact device -- \ie instead of building an $N$-mode beamsplitter from a series of two-mode beamsplitters in the case of spatial modes, one could implement it directly on frequency modes all propagating in a single waveguide. The 4-way symmetric beam splitter constructed in the previous section is an example of such device. In this section we study the implementation of such transformations for $N>4$ modes using coupled ring resonators. 

\subsection{Physical system and model}
\label{subsec:model}
Consider an array of $N$ identical ring resonators of central frequency $\omega_0$. They are all strongly coupled to their immediate neighbors, as dictated by the array connectivity, with strengths $u_{j, k}\ge 0$. This system is described by the time-independent Hamiltonian
\begin{equation}
\label{eqn:hamil_as_Nreso}
H_0 = \omega_0\sum_{j=1}^{N}a^\dagger_{j}a_{j} + \sum_{i=1}^N\sum_{j\in N(i)} \left( u_{i, j} a^\dagger_{i}a_{j} + {\rm h.c.} \right),
\end{equation}
where $N(i)$ is the set of all the neighbors of the $i$-th resonator.

The Hamiltonian in Eq.~(\ref{eqn:hamil_as_Nreso}) is quadratic in the creation and annihilation operators, so we can write it as $H_0 = \mathbf{a}^\dagger h_0 \mathbf{a}$, with $h_0$ a real, symmetric matrix. The system is then guaranteed to have a set of $N$ orthonormal normal modes of oscillation corresponding to the eigenvectors and eigenvalues of $h_0$. After diagonalizing $h_0$, we can write $H_0$ in Eq.~(\ref{eqn:hamil_as_Nreso}) as
\begin{equation}
\label{eqn:hamil_cs_Nreso}
H_0 = \sum_{k=1}^N \omega_k c_k^\dagger c_k,
\end{equation}
with $c_k$ ($c_k^\dagger$) the annihilation (creation) operator for the $k$-th normal mode, and $\omega_k$ its corresponding frequency. Importantly, the structure of the normal modes and their frequencies is completely characterized by the set of coupling strengths $\{u_{i,j}\}$. 

We activate couplings between the normal modes via a modulation of the form
\begin{equation}
\label{eqn:hamil_drive_Nreso}
H_{\rm d}(t) = \epsilon(t)\sum_{j=1}^N f_{j} a_{j}^\dagger a_{j},
\end{equation}
where the coefficients $f_{j} \in \{0,\pm1\}$ can be chosen with complete freedom, and we leave the explicit form of $\epsilon(t)$ unspecified for the moment. Our choice of $f_j$'s dictates how the normal modes will be coupled, such that, in their basis, the total Hamiltonian $H(t) = H_0 + H_{\rm d}(t)$ is
\begin{equation}
H(t) = \sum_{j=1}^N \omega_{j} c_j^\dagger c_j + \epsilon(t)\sum_{(i,j)\in\mathcal{P}} \left( c_i^\dagger c_j + {\rm h.c.} \right),
\end{equation}
where $\mathcal{P}$ is a ``pattern'' of couplings. The specific relation between our choice of $f_j$'s and $\mathcal{P}$ depends on how the normal modes are supported on the physical resonators, and thus, will vary depending on the array structure. In App.~\ref{app:N_reso_rectangle} we present explicit expressions for this relationship in the case of a rectangular array. 

Further the array is coupled to a waveguide via $a_1$ at rate $\Gamma_{\rm L}$, and to a second waveguide, via $a_N$, at rate $\Gamma_{\rm R}$. All resonators suffer internal losses at a finite rate $\kappa_{\rm int}$. From the structure of $\mathcal{P}$ the number of distinct splittings between coupled normal modes and the number of independent beam splitters can be obtained. Thus, we can specify both the form of the modulation $\epsilon(t)$ and its frequency content, as well as the form of the input field $\mathbf{b}_{\rm in}(t)$ and its frequency content. 

After these two quantities have been specified, we obtain the effective transfer matrix of our native $N$-mode linear optical element following a similar approach to the one employed for the two and four resonator devices: from the SLH triple of the system in the normal mode basis (after promoting $\mathbf{L}$ to vector form), the $A(t)BCD$ system is constructed. Then, by going to the slow components of the normal modes, neglecting the fast oscillating terms under the RWA, and discarding the off-resonant inputs, a time-independent $ABCD$ system is recovered, which then is solved via Fourier transform. The detailed derivation of these transfer matrices in the general case can be found in App.~\ref{app:general_transfer_mat}.

While transfer matrices for $N$-mode frequency beam splitters implemented with arbitrary devices of coupled ring resonators can be obtained with the methodology summarized above, the resulting beam splitters might lack a useful structure and tuning them to some specific desired behavior might not always be possible. To illustrate this last point, in App.~\ref{app:N_reso_rectangle} we completely characterize the case of rectangular arrays of sizes $L\times M$.

To constrain this large degree of generality and motivated by our discussion of the $2$-resonator and $4$-resonator devices, we impose two desired properties of any $N$-mode ring resonator device:
\begin{property}
\label{property:1}
Equal spacing between consecutive normal mode frequencies.
\end{property}
\begin{property}
\label{property:2}
Normal modes with uniform support on all the resonators.
\end{property}
We saw that arrays with $N=2$ and $N=4$ exhibit these two properties with suitable parameter choices. While neither is necessary for PQC applications, both are desirable. Together, these two properties guarantee that all the beam splitters in a given pattern can be tuned to the same ratio at the \emph{same} modulation amplitude. 

Are there arrays of $N>4$ resonators with these two desirable properties? And can they be realized with photonic microring resonators? Notably, property~\ref{property:1} depends only on the values of the $u_{i,j}$'s, and thus, given $N>4$ resonators with a fixed array connectivity, one can find, via a simple optimization, an assignment of coupling strengths leading to property~\ref{property:1}. Property~\ref{property:2}, depends both on the array connectivity and the properties (symmetries) of our assignment of the $u_{i,j}$'s. A few scenarios are possible: (a) the array connectivity does not allow property~\ref{property:2}, (b) the array connectivity allows for property~\ref{property:2} but the $u_{i,j}$'s forbid it, or (c) both the array connectivity and the $u_{i,j}$'s allow for property~\ref{property:2}. We, thus, seek to establish conditions on the array connectivity guaranteeing property~\ref{property:2}. We devote the rest of the section to formally establishing these conditions and give a negative answer to the previous question in the form of a no-go theorem.

\subsection{Absence of property~\ref{property:2} in arrays with more than four resonators}
We begin by abstracting the concept of resonator array and array connectivity. A system of ring resonators with identical frequencies (taken to be circular with identical radii) coupled with strengths $u_{i,j}$ can be represented by a (weighted) graph $\mathcal{G}$ whose nodes represent rings and the edge set, $E = \{(i,j, u_{i,j})\}$ for $i,j=1,...,N$, specifies the coupled rings and their coupling strengths. The array connectivity is summarized in the graph $\mathcal{G}_{\rm UW}$, the unweighted version of $\mathcal{G}$, whose edge set, $E_{\rm UW} = \{(i,j)\enspace {\rm iff}\enspace u_{i,j}>0\}$, specifies the pairs of coupled rings.

The coupling between two ring resonators, mediated by their evanescent fields, only exists for rings in close proximity. For any two rings to have a physically meaningful coupling, \ie $u_{i,j}>0$, the separation between their edges must be much smaller than their radii. As such $\mathcal{G}_{\rm UW}$ can be interpreted as specifying the tangency relations of identical rings on the plane, \ie the nodes represent rings and the edges represent two rings ``touching'' but not crossing. In geometric graph theory this type of graph is known as a penny graph~\cite{eppstein2017triangle,sagdeev2025general,alfakih2025colin}~\footnote{A graph that specifies the tangency relations of geometric objects on the plane is known in geometric graph theory as a contact graph. Its nodes represent geometric objects and its edges correspond to two objects touching but not crossing. If the objects are circles, it is known as a coin graph, and if the circles have identical radius, it is known as a penny graph}. Importantly, any penny graph is simple, connected, and planar. 

Naturally any physical on-chip array of coupled optical microring resonators, such as those shown in Fig.~\ref{fig:fig_1}, must have a connectivity represented by a graph $\mathcal{G}_{\rm UW}$ satisfying the above key properties of penny graphs, explicitly
\begin{remark}
\label{remark:planarity}
For an array of optical microring resonators, $\mathcal{G}_{\rm UW}$ (consequently $\mathcal{G}$) must be simple, connected, and planar.
\end{remark}
By construction, the adjacency matrix of $\mathcal{G}$, $A_\mathcal{G}$, is $A_\mathcal{G} = \mathbf{u}$, where the matrix $\mathbf{u}$ has entries equal to the coupling strengths $u_{i,j}$ if resonators $i$ and $j$ are coupled with that strength, and zero otherwise.
The Hamiltonian matrix $h_0$ of the array of coupled resonators, defined via $H_0 = \mathbf{a}^\dagger.h_0.\mathbf{a}$, is $h_0 = \omega_0 \mathbf{I}_N + \mathbf{u}$. Thus the eigenspaces of $h_0$ and $A_{\mathcal{G}}$ are the same, allowing us to reason about the properties of the normal modes of $h_0$ based on the structure of $\mathcal{G}$. 

Property~\ref{property:2} asks for array normal modes with uniform support on all the resonators, which is equivalent to asking for eigenvectors of $h_0$ with uniform support on the canonical basis vectors of $\mathbb{R}^N$. We can formalize this requirement as follows. If a matrix $\mathcal{V}$ diagonalizes $h_0$, that is, $\frac{1}{N}\mathcal{V}^T h_0 \mathcal{V} = D_{\omega}$, with $D_{\omega}$ a diagonal matrix with nonzero entries equal to the normal mode frequencies $\{\omega_j\}$, the normal modes will satisfy property~\ref{property:2} only if $\mathcal{V}$ is a Hadamard matrix. We have that
\begin{remark}
\label{remark:h0_diago}
A resonator array exhibits property~\ref{property:2} only when its Hamiltonian matrix $h_0$, is diagonalized by a Hadamard matrix.
\end{remark}

A real $N\times N$ Hadamard matrix $\mathcal{H}$ has entries equal to either $\pm 1$ with the property $\mathcal{H}^T \mathcal{H} = N\mathbf{I}_N$, \ie the columns are mutually orthogonal~\footnote{Some other properties of these family of matrices are: two Hadamard matrices are equivalent if one can be produced from the other by permuting rows or columns, negating rows or columns, or some combination of these operations. A Hadamard matrix is said to be normalized if each entry in the first row and first column is equal to $+1$. Consequently, any Hadamard matrix is equivalent to a normalized Hadamard matrix.}. It is known that for a Hadamard matrix to exist, a necessary condition is that $N=1,2$ or, a multiple of $4$, whether this condition is sufficient still remains an open problem~\cite{Hedayat1978hadamard}. Consequently, 
\begin{remark}
\label{remark:array_size}
Property~\ref{property:2} can only be exhibited by resonator arrays of sizes $N=2$ or a multiple of $4$.
\end{remark}
Remarks~\ref{remark:planarity} and~\ref{remark:array_size} establish necessary conditions that $\mathcal{G}$ and $\mathcal{G}_{\rm UW}$ must satisfy for the array to exhibit property~\ref{property:2}. Obtaining conditions on $\mathcal{G}$ guaranteeing the diagonabizability of $A_\mathcal{G}$ by a Hadamard matrix is the natural next step. However, constructing these for a general weighted graph is extremely challenging. We will take an alternative path and construct those conditions for $\mathcal{G}_{\rm UW}$. Since, the eigenspaces of $A_{\mathcal{G}_{\rm UW}}$ and $A_{\mathcal{G}}$ are not necessarily the same, these will be necessary conditions, at best, and we will delineate device parameter regimes under which such equivalence can be expected. 

We begin with the following definition
\begin{definition}
An unweighted graph $\tilde{\mathcal{G}}$ is Hadamard diagonalizable if its Laplacian matrix, $L_{\tilde{\mathcal{G}}} = D_{\tilde{\mathcal{G}}} - A_{\tilde{\mathcal{G}}}$, is diagonalized by a Hadamard matrix,
\end{definition}
where $D_{\tilde{\mathcal{G}}}$, is diagonal with entries equal to the degree of each node, and $A_{\tilde{\mathcal{G}}}$ is the adjacency matrix of the graph. Two necessary conditions for a graph $\tilde{\mathcal{G}}$ to be Hadamard diagonalizable are~\cite{Barik2011,johnston2017perfect}: 
\begin{enumerate}
\item $\tilde{\mathcal{G}}$ is regular, \ie has constant degree,
\item the eigenvalues of $L_{\tilde{\mathcal{G}}}$ are all even integers.
\end{enumerate} 
Importantly, for a regular graph the eigenspace of $L_{\tilde{\mathcal{G}}}$ and $A_{\tilde{\mathcal{G}}}$ are the same, thus the Hadamard diagonabizability of the graph can be decided from the adjacency matrix directly.
Hence,
\begin{remark}
\label{remark:regularity}
In order to be Hadamard diagonalizable, $\mathcal{G}_{\rm UW}$ must be regular.
\end{remark}
We can now establish that $\mathcal{G}_{\rm UW}$ must be a Hadamard diagonalizable penny graph. This implies that $\mathcal{G}_{\rm UW}$ is simple, connected, planar, regular, and the number of nodes is a multiple of $4$. We now look at the types of graphs which are Hadamard diagonalizable and identify the subset which are penny graphs.

The characterization of Hadamard diagonalizable graphs has seen big developments in recent years, see Refs.~\cite{Barik2011,johnston2017perfect,breen2020hadamard}. Initially Ref.~\cite{Barik2011} classified all the Hadamard diagonalizable graphs up to $N=12$, later Ref.~\cite{johnston2017perfect} characterized all the graph diagonalizable by a Hadamard matrix of the Sylvester type~\footnote{These are Hadamard matrices constructed as $\mathcal{H}_{k+1} = \begin{pmatrix} \mathcal{H}_{k} & \mathcal{H}_{k} \\ \mathcal{H}_{k} & -\mathcal{H}_{k} \end{pmatrix}$ with $\mathcal{H}_{0} = 1$ and $\mathcal{H}_{1} = \begin{pmatrix} 1 & 1 \\ 1 & -1 \end{pmatrix}$.}. More recently Ref.~\cite{breen2020hadamard} classified \emph{all} the Hadamard diagonalizable graphs of order $N = 8l+4$ with $l=0,1,2,...$, they are
\begin{theorem}[theorem 2.2 of Ref.~\cite{breen2020hadamard}]
\label{theorem:hadamard_k}
For $N = 8l+4$ all the Hadamard diagonalizable graphs are: 
\begin{itemize}
\item The complete graph on $N$ nodes, $K_N$
\item The balanced complete bipartite graph on $N$ nodes, $K_{N/2,N/2}$
\item Two copies of the complete graph on $N/2$ nodes, $2K_{N/2}$
\item $N$ isolated nodes, $N K_1$
\end{itemize}
and their graph complements.
\end{theorem}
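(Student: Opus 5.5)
Since $\mathcal{H}^T\mathcal{H}=N\mathbf{I}_N$, the spectral decomposition $L=\frac1N\mathcal{H}\Lambda\mathcal{H}^T$ shows that the Laplacian lies in an algebra spanned by the rank-one projectors $\frac1N h_k h_k^T$ onto the columns $h_k$ of $\mathcal{H}$. The plan is to show that when $N=8l+4$ the only Laplacians of this form are those of the listed graphs. Throughout, I take the Hadamard matrix normalized, which is allowed because negating columns does not change the projectors and permuting rows only relabels the vertices. Then the first column is the all-ones vector, corresponding to the eigenvalue $0$ of $L$.

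\textbf{Step 1 (arithmetic constraints).} The diagonal entries of $L$ are all equal, since $(h_k)_i^2=1$ gives $L_{ii}=\frac1N\sum_k\lambda_k$ for every $i$; this is the regularity condition already recalled in the paper. For the off-diagonal entries, write $L_{ij}=\frac1N\sum_k\lambda_k (h_k)_i(h_k)_j\in\{0,-1\}$. The vector $s=((h_k)_i(h_k)_j)_k$ is a $\pm1$ vector, and it is orthogonal to the all-ones vector because rows $i$ and $j$ of $\mathcal{H}$ are orthogonal. Combining this with the known fact that all $\lambda_k$ are even integers, I will obtain the divisibility relation $\sum_k \lambda_k s_k\equiv 0 \pmod N$ for every such row pair $(i,j)$.

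\textbf{Step 2 (use $N\equiv4\pmod 8$).} For this order, a Hadamard matrix cannot contain certain sub-configurations that appear for $N\equiv0\pmod 8$. Concretely, for any three rows the pattern counts are forced: after normalizing, every three rows split the columns into four classes of size exactly $N/4$, and $N/4$ is odd. Using these counts, I will show that the eigenvalue vector $\lambda$, restricted to the nonconstant columns, can only be constant, or constant on two complementary classes determined by one fixed column. This step is the main obstacle. The approach is to set $\mu_k=\lambda_k/2$ and take the congruences from Step 1 for all row pairs. Summing them over triples of rows, with the odd class sizes $N/4$, should force $\mu_k-\mu_{k'}$ to be divisible by a large power of $2$, or equivalently force the differences to vanish beyond one column. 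Where this becomes unwieldy, I would instead invoke the result of Ref.~\cite{breen2020hadamard} that Laplacians with a Hadamard eigenbasis have at most three distinct eigenvalues in this order.

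\textbf{Step 3 (identify the graphs).} With at most the eigenvalues $0$, $a$ and $b$, where $b$ has multiplicity one and its eigenvector is a $\pm1$ column, I will reconstruct $L$ directly. If all nonzero eigenvalues are equal, then $L=\frac aN(N\mathbf{I}-J)$, where $J$ is the all-ones matrix. Requiring the off-diagonal entries to lie in $\{0,-1\}$ gives $a=0$ or $a=N$, which are $NK_1$ and $K_N$. Otherwise $L=\frac aN(N\mathbf{I}-J)+\frac{b-a}N hh^T$. The entries now depend only on whether $h_i=h_j$. The $\{0,-1\}$ condition yields exactly $K_{N/2,N/2}$, $2K_{N/2}$, and their complements. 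Finally, I will exhibit a Hadamard eigenbasis for each listed graph to confirm that all of them actually occur. Since these graphs are unions or joins of cliques, the basis consists of the all-ones vector, a balanced $\pm1$ vector $h$, and the remaining Hadamard columns, which span the complement on which $L$ acts as a scalar.
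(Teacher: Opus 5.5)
The paper gives no proof of this statement; it quotes it from Ref.~\cite{breen2020hadamard}. Judged on its own, your proposal has a genuine gap at Step~2, and that step carries the entire content of the theorem. The target of Step~2 is true, since it follows from the theorem, but you give no argument for it. The mechanism you sketch also cannot work. You claim that the congruences force $\mu_k-\mu_{k'}$ to be divisible by a large power of $2$, but this fails for exactly the graphs you are trying to isolate. For $K_{N/2,N/2}$ the halved eigenvalues on the nonconstant columns are $N/4$ and $N/2$. For $2K_{N/2}$ they are $0$ and $N/4$. In both cases the difference is $N/4=2l+1$, which is odd.

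The fallback does not close the gap either. Appealing to a three-eigenvalue statement from the reference amounts to citing a consequence of the result you are proving. It is also too weak for Step~3. Your reconstruction assumes that the exceptional eigenvalue $b$ sits on a single Hadamard column. If it sits on a set $T$ of columns with $|T|\ge 2$, then $L_{ij}$ depends on $\sum_{k\in T}\mathcal{H}_{ik}\mathcal{H}_{jk}$, which can take many values. Your case split on whether $h_i=h_j$ then no longer applies. As a minor point, the congruence in Step~1 is just integrality of $L_{ij}$; evenness of the $\lambda_k$ plays no role there.

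The missing idea is to aim your triple-of-rows computation at the graph rather than at the spectrum. Let $r_i$ denote the $i$-th row of $\mathcal{H}$. For distinct vertices $i,j,m$, let $s=r_i\circ r_j$ and $t=r_i\circ r_m$ be entrywise products, so that $s\circ t=r_j\circ r_m$. Note that $(1+s_k)(1+t_k)\in\{0,4\}$. Also $\sum_k\lambda_k=Nd$, $\sum_k\lambda_k s_k=-NA_{ij}$, $\sum_k\lambda_k t_k=-NA_{im}$ and $\sum_k\lambda_k s_kt_k=-NA_{jm}$. Hence
\[
\sum_{k:\,s_k=t_k=1}\lambda_k=\frac14\sum_k\lambda_k(1+s_k)(1+t_k)=\frac N4\bigl(d-A_{ij}-A_{im}-A_{jm}\bigr).
\]
The left side is even because every $\lambda_k$ is even, and $N/4$ is odd. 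So every triple of vertices spans a number of edges congruent to $d$ modulo $2$. This is the only place where $N\equiv 4\pmod 8$ enters, and it needs neither class sizes nor any information about the eigenvalue pattern.

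Now suppose $d$ is even. Set $x_1=0$ and $x_v=A_{1v}$ for $v\neq 1$. The triples $\{1,j,m\}$ give $A_{jm}=x_j\oplus x_m$, so $G$ is complete bipartite or empty. If $d$ is odd, the same argument applies to the complement. Regularity then leaves only $NK_1$, $K_{N/2,N/2}$ and their complements $K_N$, $2K_{N/2}$. This replaces your Steps~2 and~3 entirely. Your closing check then completes the proof: each listed graph is diagonalized by any Hadamard matrix of order $N$, whenever one exists.
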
 
All the graphs in the theorem are simple, but the last two graphs are not connected for any $N$, and thus do not specify valid resonator array configurations. The first two graphs are connected for any $N$, but are not planar for any $N>4$ as can be shown by Kuratowski's theorem~\cite{kuratowski1930probleme,frink1930irreducible}. For $N=4$, $K_4$ is planar but is not a penny graph~\cite{alfakih2025colin}. This can be shown by edge counting. $K_4$ has $6$ edges, penny graphs have at most $|E|\le \lfloor 3N - \sqrt{12N} - 3 \rfloor$~\cite{Harborth1974} edges, which gives $|E|<5$ for a graph on $4$ nodes. Therefore, the only valid resonator array configuration is given by $K_{2,2}$, which is the cycle graph on four nodes. Hence,
\begin{remark}
\label{remark:hadamard_diag_8k4}
Among the Hadamard diagonalizable graphs of sizes $N = 8l+4$ with $l=0,1,...$, only when $N=4$ ($l=0$), there is a valid resonator array configuration.
\end{remark} 
We consider $G_{\rm UW} = K_{2,2}$ and ask: is the corresponding $\mathcal{G}$ also Hadamard diagonalizable? Interestingly arbitrary values of $u_{i,j}$'s (arbitrary edge weights) lead to a negative answer. However, if the values $u_{i,j}$'s respect enough of the symmetries of the unweighted graph, the Hadamard diagonalizability is preserved. An example is our choice of coupling strengths, $u$ (resonators $1$ and $2$, and $3$ and $4$), and $v$ (resonators $2$ and $3$, and $1$ and $4$), in Sec.~\ref{sec:4_reso}, which lead to a weighted regular $\mathcal{G}$ with every node of degree $u+v$.

We thus arrive at our no-go theorem
\begin{theorem}
\label{theorem:no_go}
For sizes $N = 8l+4$ with $l>0$, there are no resonator arrays which exhibit property~\ref{property:2}.
\end{theorem}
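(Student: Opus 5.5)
The plan is to argue by contradiction, chaining the remarks established above. Suppose that for some $N = 8l+4$ with $l \ge 1$ there is a physical resonator array whose weighted graph $\mathcal{G}$ (with unweighted version $\mathcal{G}_{\rm UW}$) has normal modes with uniform support. By Remark~\ref{remark:h0_diago}, $h_0 = \omega_0\mathbf{I}_N + A_{\mathcal{G}}$ is diagonalized by a Hadamard matrix $\mathcal{V}$; since the identity shift does not change eigenvectors, $A_{\mathcal{G}}$ is diagonalized by $\mathcal{V}$ as well.

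The next step is to transfer this to the graph-theoretic setting of Theorem~\ref{theorem:hadamard_k}. Following Remark~\ref{remark:regularity}, I would argue that Hadamard diagonalizability of the array forces regularity. The all-ones vector is, up to sign, a column of a normalized $\mathcal{V}$. Hence $A_{\mathcal{G}}\mathbf{1} = \lambda \mathbf{1}$, so every node has the same weighted degree and $L_{\mathcal{G}} = \lambda\mathbf{I} - A_{\mathcal{G}}$ is also Hadamard diagonalized. At this point I would invoke the paper's stated reduction and take $\mathcal{G}_{\rm UW}$ itself to be a Hadamard diagonalizable graph. This is the necessary-condition step the text announces: we work in the regime where the eigenspaces of $A_{\mathcal{G}}$ and $A_{\mathcal{G}_{\rm UW}}$ coincide, for instance when the edge weights respect the symmetries of the unweighted graph. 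I expect this step to be the main obstacle. For general weights, a Hadamard-diagonalizable weighted matrix need not have a Hadamard-diagonalizable support graph, so I would state explicitly that the theorem is proved under this identification, as the surrounding text does.

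With $\mathcal{G}_{\rm UW}$ Hadamard diagonalizable on $N = 8l+4$ nodes, Theorem~\ref{theorem:hadamard_k} restricts it to $K_N$, $K_{N/2,N/2}$, $2K_{N/2}$, $NK_1$, or one of their complements. I would then eliminate every case using Remark~\ref{remark:planarity}.
\begin{itemize}
\item $2K_{N/2}$ and $NK_1$ are disconnected.
\item The complement of $K_N$ is $NK_1$, which is disconnected.
\item The complement of $K_{N/2,N/2}$ is $2K_{N/2}$, which is disconnected.
\item The complement of $2K_{N/2}$ is $K_{N/2,N/2}$.
\item The complement of $NK_1$ is $K_N$.
\end{itemize}
This leaves only $K_N$ and $K_{N/2,N/2}$. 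For $l\ge1$ we have $N\ge12$, so $K_N \supseteq K_5$ and $K_{N/2,N/2}\supseteq K_{3,3}$, and both are nonplanar by Kuratowski's theorem. A simpler alternative is the penny-graph edge bound $|E|\le 3N-\sqrt{12N}-3$, which both graphs violate since they have $\Theta(N^2)$ edges. No admissible graph remains, which contradicts the assumption and proves Theorem~\ref{theorem:no_go}, consistent with Remark~\ref{remark:hadamard_diag_8k4}.
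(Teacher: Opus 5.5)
Your proposal is correct and takes essentially the paper's route: adopt the same weighted-to-unweighted reduction the paper uses (whose limits it acknowledges in its Discussion), apply Theorem~\ref{theorem:hadamard_k}, and discard the disconnected graphs. Your explicit check that the list is closed under complementation is a detail the paper leaves implicit, and you correctly rule out $K_N$ and $K_{N/2,N/2}$ for $N\ge 12$ by Kuratowski or the penny-graph edge bound. The one slip is in your non-essential regularity aside: normalizing $\mathcal{V}$ requires row negations, which change the matrix being diagonalized, so that $\mathbf{1}$ is an eigenvector of $A_{\mathcal{G}}$ should instead follow from Perron--Frobenius (for connected $\mathcal{G}$ with nonnegative weights the top eigenspace is one-dimensional with a positive eigenvector, hence spanned by a column $\pm\mathbf{1}$ of $\mathcal{V}$).
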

\begin{proof}
This result follows from theorem~\ref{theorem:hadamard_k} and remark~\ref{remark:hadamard_diag_8k4}.
\end{proof}
As a consequence of this theorem when we natively implement a network of beam splitters with a resonator array of size $N = 8l+4$ with $l>0$, the individual beam splitters cannot be tuned to the same ratio at the same modulation amplitude. 

Since $N=8l+4=4(2l+1)$, theorem~\ref{theorem:hadamard_k} only covers half of all the multiples of $4$. While no general classification of all the Hadamard diagonalizable graphs exists at present~\footnote{This is in part a consequence of the fact that the existence of Hadamard matrices of arbitrary order $4l$ is still an open problem}, in Ref.~\cite{breen2020hadamard}, all the Hadamard diagonalizable graphs up to $N=36$ were found via direct computation. The first graph not covered by theorem~\ref{theorem:hadamard_k} appears at $N=8$. It corresponds to the cube graph (see Fig.~\ref{fig:cube_graph}a). This graph is simple, connected, planar, $3$-regular, and Hadamard diagonalizable, hence a good candidate for a valid resonator array featuring property~\ref{property:2}. Unfortunately, this is not the case. 

\begin{figure}
    \centering
    \includegraphics[width=\linewidth]{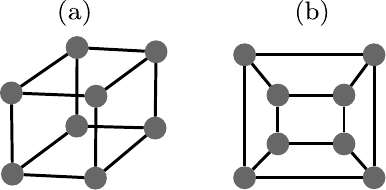}
    \caption{(a) The cube graph with $N=8$, which is the first Hadamard diagonalizable graph not covered by theorem~\ref{theorem:hadamard_k}. (b) A typical planar drawing of the cube graph. In this drawing, it can be easily appreciated that the shortest cycle in this graph is of length $4$.} 
    \label{fig:cube_graph}
\end{figure}

In fact, the planar drawing of the cube is not a penny graph. We show this by edge counting. This graph is triangle-free, \ie it does not contain any length-$3$ cycles, the shortest cycle has length $4$. Any triangle free penny graph has at most $|E| \le \lfloor 2N - 1.65\sqrt{N}\rfloor$ edges (see theorem 4 in Ref.~\cite{eppstein2017triangle}), which gives $|E| = 11$ for a triangle-free penny graph with $N=8$ nodes, and the cube graph has $12$ edges, completing the proof. This graph is typically drawn in the plane as two concentric squares with corresponding vertices connected (see Fig.~\ref{fig:cube_graph}b), if we place a (circular) ring resonator at each node, the edges between resonators on the outer square will be too far apart to have a meaningful coupling. 

For $8< N \le 36$ and $N\ne8l+1$, none of the Hadamard diagonalizable graphs listed in Ref.~\cite{breen2020hadamard} is planar. We can then safely assert that up to $N=36$, the cycle graph on four nodes is the only  valid resonator array configuration. Thus,
\begin{remark}
\label{remark:lack_of_configs}
No Hadamard diagonalizable graph with $4<N\leq 36$ constitutes a valid resonator array configuration.
\end{remark} 
Given the lack of Hadamard diagonalizable penny graphs with $N>4$ and up to $N=36$, we conjecture that remark~\ref{remark:lack_of_configs} extends to all $N$ multiples of $4$ and $N>4$.

%
%

\section{Summary, discussion, and outlook}
\label{sec:discussion}
Frequency-encoded PQC~\cite{Lukens2017} enables parallel execution of logical operations, largely reducing the hardware footprint of a quantum processing unit. Controlling the frequency degree of freedom, however, requires active control of the photon's energy, rendering linear optics costly. The success of this architecture relies on carefully balancing this trade-off in a favorable way. 

Our modeling framework represents a step forward in this direction. We used it to obtain transfer matrices for a variety of multimode frequency-domain linear optical components, based on arrays of modulated coupled microring resonators, showcasing the generality, flexibility, and composability of the framework. We anticipate our modeling toolkit and the obtained transfer matrices will aid in the modeling and rapid prototyping of the primitives required for fault tolerant linear optical quantum computing with frequency-encoded qubits. Namely, resource state generation~\cite{Bartolucci2021,Forbes2025} and entangling measurements~\cite{Browne2005,Bartolucci2021,Lingaraju2022bell,Melkozerov2024} (type-I and type-II fusions). A judicious study of these optical circuits in the frequency domain is an important future direction.

In all the studied devices the total loss of the optical element is governed by the modulation amplitude (equivalently the cooperativity $\alpha$). Thus, in a realistic implementation the ultimate limiting factor of these beam splitters will be the the RF generation capabilities which are experimentally available. Once this modulation amplitude ceiling is known, the minimum loss achievable is fixed, it can be decided what is the best way of implementing certain beam splitter. Notice that for a fixed level of loss, $\epsilon_{\rm bs}^+$ and $\epsilon_{\rm bs}^-$ require different $\alpha$, and engineering one or the other might be simpler.

Any array of modulated coupled ring resonators generates some type of multi-mode frequency beam splitter, making the space of possible optical elements too large for an exhaustive characterization. Property~\ref{property:1} and \ref{property:2}, which are desirable in the context of quantum computation, effectively narrowed this space. As shown in Sec.~\ref{sec:no_go}, these two properties are guaranteed only when the array geometry is represented by a Hadamard diagonalizable penny graph. The existence of such graphs finds a hard limit (up to a conjecture) at $N=4$. Consequently: 
\begin{enumerate}
	\item if we desire to natively implement a linear optical element composed of independent $k$-mode beam splitters, these cannot be tuned to the same splitting ratios at the same modulation amplitude,
	\item we might not find a modulation amplitude at which the array natively implements a $N$-way symmetric beam splitter.
\end{enumerate}
As such, the four-ring device from Sec.~\ref{sec:4_reso} turns out to be quite special. However, we do not rule out the existence of ring resonator arrays without either property~\ref{property:1} or property~\ref{property:2}, that implement interesting and potentially useful multi-mode frequency-domain transformations. Motivated by this possibility, in App.~\ref{app:N_reso_rectangle}, we explore the properties of $N$-mode frequency beam splitters generated with modulated rectangular arrays of $N=L\times M$ resonators.

We envision two possible paths to implement $N$-mode frequency beam splitters (with $N>4$) going beyond the scope of the no-go theorem~\footnote{Crucially the no-go theorem relies on the resonators being rings of equal radii. A third avenue to by-pass this theorem is given by arrays of resonators with different geometries. For instance, rings of mixed sizes (accounting for the fact that they must have at least one resonance at the same frequency), or other shapes. Interestingly, by mixing rings and moon-shaped resonators, the case of $N=8$ could potentially be satisfied.}. First, by introducing tailored edge weights we might promote a graph to be Hadamard diagonalizable. Second, by endowing each resonator with an internal multi-mode structure, we might access coupling graphs between internal modes which are Hadamard diagonalizable. We comment on these two scenarios below.

A scenario not covered in Sec.~\ref{sec:no_go} are weighted graphs which are Hadamard diagonalizable even when their unweighted counter part is not (see for instance example~1 in Ref.~\cite{johnston2017perfect}). Nontrivial edge weights might regularize the graph and imbue it with enough structure and symmetry to allow for Hadamard diagonalizability. To our knowledge neither a systematic classification of such graphs nor general rules for their construction exists. One is then forced to heuristically attempt their construction, we do not discard the possibility of finding graphs exhibiting property~\ref{property:2} via this method, and leave a study of this possibility for future research.

As a consequence of their geometry, microring resonators naturally host, at every resonant frequency, a degenerate pair of clockwise (CW) and counterclockwise (CCW) rotating modes. This pair of modes defines a pseudospin degree of freedom~\cite{Xu2022} for each resonator. In any coupled rings system, at every coupling point, the pseudospin necessarily flips, coupling CW and CCW modes on neighboring resonators (and vice versa). In absence of coupling between CW and CCW modes on the same resonator, any $N$-resonator array host two families of $N$-normal modes, decoupled from each other, each of which couples to a specific propagation direction of the light on the bus waveguide. 

Hence, when the CW and CCW modes on the each resonator are controllably coupled, one can promote an $N$-mode frequency beam splitter implemented with an $N$-resonator device, to a $2N$-mode frequency beam splitter, and potentially by-pass theorem~\ref{theorem:no_go}. It is know that surface roughness and imperfections cause some of the light inside the ring to backscatter, leading to an unwanted weak coupling between CW and CCW modes~\cite{Little:97,Kippenberg:02,Morichetti2010}. One can exploit this scattering process in a controllable way~\cite{Mazzei2007,Zhang:08,Lu2014selective}, to induce a desired amount of coupling between these modes. However, the output light on these larger beam splitters will, necessarily, have components on counter propagating directions. Whether such a configuration can be useful for PQC is a question for future research.     

Finally, an arbitrary $N$-mode frequency domain unitary could be generated in two  ways. On the one hand, one could compose as many $2$-mode beam splitters of variable ratios, see Eq.~(\ref{eqn:transfer_mat_R}), as necessary. However, a large hardware footprint might deem this approach not viable. On the other hand, one could generate this transformation natively as the effective transfer matrix of a modulated array of coupled resonators. One might frame this question as an optimal control problem and search for the appropriate modulation functions maximizing the fidelity with the desired unitary. We leave a complete exploration of this possibility for future research.

\acknowledgements
We acknowledge support from the Sandia National Laboratories Laboratory Directed Research and Development program (EPIQ project). This work was also supported by the U.S. Department of Energy, Office of Science, Office of Advanced Scientific Computing Research through the Accelerated Research in Quantum Computing Program MACH-Q project. Sandia National Laboratories is a multimission laboratory managed and operated by National Technology and Engineering Solutions of Sandia LLC, a wholly owned subsidiary of Honeywell International Inc. for the U.S. Department of Energy’s National Nuclear Security Administration under contract DE-NA0003525.

\bibliography{references.bib}

\newpage
\clearpage
\onecolumngrid

\appendix

%
%

\section{Asymptotic behavior of the transfer function}
\label{app:2_reso}
Now we consider the behavior of the transfer matrix in Eq.~(\ref{eqn:transfer_mat_2reso}) in the limits of vanishing drive amplitude, $\epsilon\to0$, and large drive amplitude, $\epsilon\gg\kappa$.
In the limit of a vanishing drive amplitude we can write $\Xi_{11}(\epsilon\to0)\approx 1 - \frac{2\gamma_{\rm L}}{\kappa} = \frac{\kappa_{\rm int} - \gamma_{\rm L}}{\kappa_{\rm int} + \gamma_{\rm L}} = \frac{2-\alpha}{2+\alpha}$, and $\Xi_{12}(\epsilon\to0) = \Xi_{21}(\epsilon\to0)\approx0$. We write the transfer matrix in this limit as 
\begin{equation}
\Xi(\epsilon\to0) = \left(\frac{\kappa_{\rm int} - \gamma_{\rm L}}{\kappa_{\rm int} + \gamma_{\rm L}}\right)\begin{pmatrix}
    1 && 0 \\
    0 && 1
\end{pmatrix} = \left(\frac{2-\alpha}{2+\alpha}\right)\begin{pmatrix}
    1 && 0 \\
    0 && 1
\end{pmatrix},
\end{equation}
leading to a total output intensity of $\mathcal{I}(\epsilon\to0) = \left(\frac{\kappa_{\rm int} - \gamma_{\rm L}}{\kappa_{\rm int} + \gamma_{\rm L}} \right)^2 = \left( \frac{2-\alpha}{2+\alpha}\right)^2$. 
In the limit of a large drive amplitude, $\epsilon\gg\kappa$, we can write the transfer function as 
\begin{equation}
\Xi(\epsilon\gg\kappa) \approx \begin{pmatrix}
    1 - \frac{2\gamma_{\rm L}\kappa}{\epsilon^2} && i\frac{\gamma_{\rm L}}{\epsilon}e^{i\phi} \\
    i\frac{\gamma_{\rm L}}{\epsilon}e^{-i\phi} && 1 - \frac{2\gamma_{\rm L}\kappa}{\epsilon^2}
\end{pmatrix},
\end{equation}
and thus the total output intensity is $\mathcal{I}(\epsilon\gg\kappa) \approx 1 - \frac{4\gamma_{\rm L}\kappa}{\epsilon^2} + \mathcal{O}(1/\epsilon^4)$. As such, in the limit of $\epsilon\to\infty$ the device acts as an identity on the subspace of the AS and S modes, with $\Xi(\epsilon\to\infty) \to \mathbf{I}_2$ and $\mathcal{I}(\epsilon\to\infty)\to1$.

As demonstrated in Ref.~\cite{zhang2019electronically}, in presence of the modulation, the frequency modes of the coupled ring system undergo a second order splitting, an effect similar to the Autler--Townes splitting. As a consequence, at a large enough modulation amplitude the input light will no longer be resonant with the system frequency modes, and no frequency-domain operation will take place.

To see this we analyze the RWA Hamiltonian of the two-ring device in Eq.~(\ref{eqn:rwa_hamil_2ring}), which can be written as 
\begin{equation}
\tilde{H} = \begin{pmatrix}
\tilde{c}_1^\dagger  & \tilde{c}_2^\dagger    
\end{pmatrix} 
\begin{pmatrix}
-\frac{\delta}{2} && \frac{\epsilon}{2}e^{i\phi} \\ 
\frac{\epsilon}{2}e^{-i\phi} && \frac{\delta}{2}
\end{pmatrix} 
\begin{pmatrix}
\tilde{c}_1 \\
\tilde{c}_2
\end{pmatrix},
\end{equation}
where $\delta$ is a detuning between modulation frequency and level splitting. The eigenvalues of the Hamiltonian matrix in the previous equation are $\pm\frac{1}{2}\sqrt{\epsilon^2 + \delta^2}$, which lead to normal mode frequencies $\omega_{1,2, \pm} = \omega_{1,2} \pm \frac{1}{2}\sqrt{\epsilon^2 + \delta^2}$. Thus, each frequency level splits into a pair of levels with spacing $\sqrt{\epsilon^2 + \delta^2}$. Consequently, if the input is resonant with $\omega_1$ (similarly $\omega_2$), as $\epsilon$ increases, the systems no longer features a resonance at that frequency, and the input will no longer couple to the ring system, with the latter acting as an identity element.

The previous analysis allows to derive a quantitative limit to the validity of our effective transfer matrix model. In the case of a resonant modulation ($\delta=0$), considered in the main text, the frequency levels are given by $\omega_{1,2, \pm} = \omega_{1,2} \pm \frac{\epsilon}{2}$. As $\epsilon$ increases $\omega_{1,+}$ and $\omega_{2,-}$ move towards each other, with $\omega_{2,-}-\omega_{1,+} = 2u-\epsilon$, simultaneously $\omega_{1,-}$ and $\omega_{2,+}$ move away from each other, with $\omega_{2,+}-\omega_{1,-} = 2u+\epsilon$. Incidentally, when $\epsilon = 2u$, the fast oscillating terms at $2\omega_{\rm d}=4u$ (discarded under the RWA), will couple $\omega_{2,+}$ with $\omega_{1,-}$, and will make $\omega_{2,-}$ collide with $\omega_{1,+}$, making our model invalid. We thus impose the limit $\epsilon\ll2u$ to assure model validity. Interestingly, this limit extends to systems with more than two rings, with $2u$ replaced by the level splitting between consecutive frequency levels.

\section{Robustness of some frequency beam splitters to imperfections}
\label{app:robustness}
The resonator arrays considered in the main text had two idealized properties: (a) the central frequencies are all equal to $\omega_0$, and (b) for $N>2$, the coupling strengths are equal by pairs, further, they satisfy a specific relation leading to equally spaced consecutive normal mode frequencies. In this appendix we show that some of the beam splitters discussed in the main text are robust, to first order in perturbation theory, to small deviations from these idealized configurations.

We will consider perturbations taking the system out of these idealized configurations. The perturbation Hamiltonian, $H_{\rm p} = \sum_i H_{\rm p}^{(i)}$, will be a sum of terms quadratic in the $a_j$'s, with individual terms acting on each resonator or on each coupling. As such, we can write $H_{\rm p} = \sum_j \mathbf{a}^\dagger . V_j . \mathbf{a}$, where $\mathbf{a} = (a_1,...,a_N)^T$, and $V_j$ a real symmetric matrix. We thus represent each perturbation term with its associated $V$-matrix. Similarly, we can represent the normal modes of the unperturbed system as vectors in the basis of the $a_j$'s, that is, $c_j = W_j.\mathbf{a}$. The $V_j$ matrices and $W_j$ vectors allow us to apply perturbation theory in the same way one would do it for quantum states. 

We seek to find the effect the drive terms have on the normal modes of the perturbed system. To this end we will use first order perturbation theory to rediagonalize the Hamiltonian $H_0 + H_{\rm p}$, and then write the drive Hamiltonian $H(t)$ in terms of the new normal modes. To first order in perturbation theory the corrected normal mode frequencies are
\begin{equation}
\omega_{j}^{(1)} = \omega_j + \sum_l W_l . V_l. W_l^T,
\end{equation}
where $\omega_j$ are the unperturbed normal mode frequencies. The first order corrected normal modes are 
\begin{equation}
c_{j}^{(1)} = c_j + \sum_{l\ne j} \frac{W_l . V_j . W_j^T}{\omega_l - \omega_j} c_l.
\end{equation}
We now use these expression to look at the cases of unequal central frequencies and imperfect symmetric couplings.

\subsection{Robustness of the two resonator device to unequal frequencies}
Let us now look at the expected behavior of our two-resonator device when the frequencies of the resonators are not identical. We consider a weak diagonal perturbation taking the system out of this idealized configuration, given by 
\begin{equation}
H_{\rm p} = \delta_{a_1} a_1^\dagger a_1 + \delta_{a_2}  a_2^\dagger a_2,
\end{equation}
where $|\delta_{a_{1,2}}|\ll u$ are the perturbation strengths.
We use first order perturbation theory to find the new normal modes which diagonalize the perturbed Hamiltonian $H = H_0 + H_{\rm p}$, where $H_0$ is the time independent part of Eq.~(\ref{eqn:hamil_2mode_as}). We find these normal modes to be 
\begin{equation}
c_1^{(1)} = c_1 + \frac{\delta_-}{2u}c_2, \quad
c_2^{(1)} = c_2 - \frac{\delta_-}{2u}c_1,
\end{equation}
where $\delta_- = \delta_a - \delta_b$. The corresponding first order corrected normal mode frequencies are $\omega_{1,2}^{(1)} = \omega_{1,2} + \frac{\delta_+}{2}$, with $\delta_+ = \delta_{a_1} + \delta_{a_2}$.
With these first-order corrected normal modes, we can evaluate the form the drive Hamiltonian takes when written in the new normal modes. We find
\begin{equation}
\begin{split}
&H(t) = \epsilon(t)(a_1^\dagger a_1 - a_2^\dagger a_2) \\
&= \epsilon(t)(\frac{\delta_-}{u} c_1^\dagger c_1 - \frac{\delta_-}{u} c_2^\dagger c_2 + c_1^\dagger c_2 + c_1 c_2^\dagger),
\end{split}
\end{equation}
to first order in $\frac{\delta_-}{u}$. Hence, the drive still activates the desired couplings, however the perturbations introduce an explicit time-dependence on the normal mode frequencies. These will introduce an additional time dependence on the $\Omega$ and, consequently, the $A$ matrices. However, in the picture of the slow components of $c_1$ and $c_2$, this explicit time dependence, at frequency $\omega_{\rm d}$, will be discarded, as fast oscillating terms, under the RWA. With the net effect being to decrease the scope of validity of the RWA from frequency $2\omega_{\rm d}$ to $\omega_{\rm d}$.

We have also explicitly computed the corrected normal mode frequencies and normal modes in the case of a diagonal perturbation acting on the four resonator system. Unfortunately, we find this system not to be robust against this type of imperfection. However, this does not signify a complete failure of the device, as these frequencies can be tuned by external bias voltages, and thus, if one is only slightly off of the ideal configuration, through this external tuning, the frequencies can be made to coincide.

\subsection{Robustness of the four resonator device to imperfect symmetric couplings}
In Sec.~\ref{sec:4_reso} we assume the vertical couplings (see Fig.~\ref{fig:fig_1}c) are equal to $v$ and the horizontal couplings are equal to $u$, further we took $v=2u$ to obtain equally spaced consecutive frequencies. 
Due to imperfections in the device fabrication, we do not expect to have perfectly symmetric couplings, but rather being away from it by a \emph{not too large} amount. We thus consider independent perturbations on each coupling described by the Hamiltonian
\begin{equation}
H_{\rm p} = \sum_{j=1}^4 \varepsilon_j H_j, 
\end{equation}
where $|\varepsilon_j|\ll u,v$ are the perturbation strengths and $H_j$ are Hamiltonians coupling adjacent resonators.

We employ first-order time-independent perturbation theory to find the corrected normal modes which diagonalize the perturbed Hamiltonian $H = H_0 + H_{\rm p}$, with $H_{0}$ in Eq.~(\ref{eqn:hamil_a4_usigned}), and look at the effect that the drive has on these perturbed normal modes.
The first order corrected normal mode frequencies are obtained as $\omega_{j}^{(1)} = \omega_j + \sum_l \varepsilon_l W_l . V_l. W_l^T$, and we find
\begin{subequations}
\begin{align}
\omega_1^{(1)} &= \omega_1 - \frac{1}{2}(\varepsilon_1 + \varepsilon_2 +\varepsilon_3 + \varepsilon_4), \\
\omega_2^{(1)} &= \omega_2 + \frac{1}{2}(\varepsilon_1 - \varepsilon_2 + \varepsilon_3 - \varepsilon_4), \\
\omega_3^{(1)} &= \omega_3 + \frac{1}{2}(-\varepsilon_1 + \varepsilon_2 - \varepsilon_3 + \varepsilon_4), \\
\omega_4^{(1)} &= \omega_4 + \frac{1}{2}(\varepsilon_1 + \varepsilon_2 + \varepsilon_3 + \varepsilon_4),
\end{align}
\end{subequations}
where the $\omega_j$'s are given below Eq.~(\ref{eqn:normal_modes_4reso}). Clearly, the perturbation immediately forbids us from satisfying property~\ref{property:1}.

Similarly, the first order corrected normal modes are obtained as $c_{j}^{(1)} = c_j + \sum_{l\ne j} \varepsilon_j \frac{W_l . V_j . W_j^T}{\omega_l - \omega_j} c_l$, and are given by
\begin{subequations}
\label{eqn:corrected_normal_modes_4reso}
\begin{align}
c_1^{(1)} &= c_1 + \frac{\varepsilon_3 - \varepsilon_1}{u}c_3 + \frac{\varepsilon_2 - \varepsilon_4}{v}c_2, \\
c_2^{(1)} &= c_2 + \frac{\varepsilon_4 - \varepsilon_2}{v}c_1 + \frac{\varepsilon_1-\varepsilon_3}{u}c_4, \\
c_3^{(1)} &= c_3 + \frac{\varepsilon_1 - \varepsilon_3}{u}c_1 + \frac{\varepsilon_4 - \varepsilon_2}{v}c_4, \\
c_4^{(1)} &= c_4 + \frac{\varepsilon_3 - \varepsilon_1}{u}c_2 + \frac{\varepsilon_2 - \varepsilon_4}{v}c_3
\end{align}
\end{subequations}
with the $c_j$'s, the unperturbed normal modes, given in Eq.~(\ref{eqn:normal_modes_4reso}). These corrected normal modes diagonalize $H = H_0 + H_{\rm p}$, and we write 
$ H^{(1)} = \sum_{j} \omega_j^{(1)} c_j^{(1)\dagger} c_j^{(1)}$,
where we have neglected terms $O(\varepsilon_j^2)$. We ask now: what types of beam splitters are activated once we introduce the time-dependent drive? To evaluate this, we write the $a_j$'s operators in terms of the new normal modes and re-sum the drive terms with the sign choices leading to the three patterns $\mathcal{P}_{1,2,3}$.
Importantly, we find pattern $\mathcal{P}_1 = \{(1,4),(2,3)\}$ to be robust to first order in $\frac{\varepsilon_1 - \varepsilon_3}{u}$ and $\frac{\varepsilon_2 - \varepsilon_4}{v}$. This can be intuitively understood from the form of Eq.~(\ref{eqn:corrected_normal_modes_4reso}). If we send $c_1\leftrightarrow c_4$ and $c_2\leftrightarrow c_3$, this in turn makes the corrected normal modes transform as $c_1^{(1)}\leftrightarrow c_4^{(1)}$ and $c_2^{(1)}\leftrightarrow c_3^{(1)}$, and hence the first order robustness. However, for patterns $\mathcal{P}_2$ and $\mathcal{P}_3$, we do not find this same first order robustness.

Explicit computation of the drive terms for $\mathcal{P}_2$ yields
\begin{equation}
\begin{split}
H^{(1)}(t) &= \epsilon(t)(c_1^{(1)\dagger} c_2^{(1)} + c_3^{(1)\dagger} c_4^{(1)} + {\rm h.c.}) \\
&+ 2\epsilon(t) d_2\left(\sum_{j=1,4} c_j^{(1)\dagger} c_j^{(1)} - \sum_{j=2,3} c_j^{(1)\dagger} c_j^{(1)}\right) \\
&+ 2\epsilon(t)d_1(c_1^{(1)\dagger} c_4^{(1)} - c_2^{(1)\dagger} c_3^{(1)} + {\rm h.c.}),
\end{split}
\end{equation}
where $d_1 = \frac{\varepsilon_3 - \varepsilon_1}{u}$ and $d_2 = \frac{\varepsilon_2 - \varepsilon_4}{v}$. Thus, we obtain the desired couplings according to $\mathcal{P}_2$, but also unwanted diagonal terms and activate the couplings corresponding to $\mathcal{P}_1$ with strength proportional to $d_1$. 
Similarly, by explicit computation of the drive terms for $\mathcal{P}_3$ we find
\begin{equation}
\begin{split}
H^{(1)}(t) &= \epsilon(t)(c_1^{(1)\dagger} c_3^{(1)} + c_2^{(1)\dagger} c_4^{(1)} + {\rm h.c.}) \\
&+ 2\epsilon(t) d_1\left(\sum_{j=1,4} c_j^{(1)\dagger} c_j^{(1)} - \sum_{j=2,3} c_j^{(1)\dagger} c_j^{(1)}\right) \\
&+ 2\epsilon(t)d_2(c_1^{(1)\dagger} c_4^{(1)} - c_2^{(1)\dagger} c_3^{(1)} + {\rm h.c.}).
\end{split}
\end{equation}
Again, we obtain the desired couplings together with unwanted diagonal terms and the couplings corresponding to $\mathcal{P}_1$ at strength $d_2$. Together, these two expressions inform us that small local errors will translate into highly nonlocal errors in the mixing of the normal modes, and render the beam splitter networks defined by patterns $\mathcal{P}_2$ and $\mathcal{P}_3$ highly fragile to fabrication imperfections.

%
%

\section{Construction of effective transfer matrices for arbit      rary arrays of resonators}
\label{app:general_transfer_mat}
In this appendix we present the detailed derivation of the effective transfer matrices for the $N$-mode beam splitters generated with arrays of $N$-resonators.
Our starting point is the free Hamiltonian of the normal modes of the system (see  Sec.~\ref{sec:no_go} for details), given by
\begin{equation}
H_0 = \sum_{k=1}^N \omega_k c_k^\dagger c_k
\end{equation}
with $c_k$ ($c_k^\dagger$) the annihilation (creation) operator for the $k$-th normal mode, and $\omega_k$ its corresponding frequency.
To implement frequency beam splitters we need to establish couplings between the discrete frequency levels, \ie the normal modes. This is done with a time-dependent modulation acting on some, or all, of the resonators, which we choose to have the general form
\begin{equation}
\label{eqn:hamil_drive}
H_{\rm d}(t) = \epsilon(t)\sum_{j} f_{j} a_{j}^\dagger a_{j},
\end{equation}
where the coefficients $f_{j} \in \{0,\pm1\}$ and we leave the explicit form of $\epsilon(t)$ unspecified for the moment. As stated in Sec.~\ref{sec:no_go}, our choice of $f_j$'s dictates the way in which the normal modes couple. In the basis of the normal modes, the total time-dependent Hamiltonian now reads
\begin{equation}
\label{eqn:hamil_total}
\begin{split}
H(t) &= H_0 + H_{\rm d}(t) \\
&= \sum_{k=1}^N \omega_k c_k^\dagger c_k + \epsilon(t)\sum_{(j,j')\in\mathcal{P}} \left( c^\dagger_jc_{j'} + {\rm h.c.} \right),
\end{split}
\end{equation}
where $\mathcal{P}$ is a ``pattern'' of coupled normal modes. The specific relation between the $f_j$'s and $\mathcal{P}$ depends on the way the normal modes are supported on the resonators, and thus varies from array to array. Explicit expressions for a rectangular array are given in App.~\ref{app:N_reso_rectangle}. The pattern of couplings allows us to activate either an $N$-way symmetric frequency beam splitter, a pair of $N/2$-way symmetric beam splitters, and so on, all the way down to a network of $N/2$ nonoverlapping $2$-mode frequency beam splitters. The four ring device studied in Sec.~\ref{sec:4_reso} is an example of this freedom. We take the first resonator, $a_1$, to be strongly coupled to a waveguide at a rate $\Gamma_{\rm L}$, and all resonators suffer internal losses at a finite rate $\kappa_{\rm int}$. We also consider a setup where a second resonator is strongly coupled to a second waveguide at rate $\Gamma_{\rm R}$, and we will require the output on this second waveguide to be split by a desired ratio. These waveguides are the input-outputs ports to the ring resonator system and each of the waveguides will carry multiple frequency modes.

Since Eq.~(\ref{eqn:hamil_total}) is linear in the normal mode operators, we can write the equations of motion for the $c_j$'s and the input-output relation in the $ABCD$ representation, albeit with a time-dependent $A$ matrix (consequence of the time dependent Hamiltonian), which precludes an immediate solution via Fourier transform (see Sec.~\ref{sec:background_linear_optics}). Nevertheless, assuming well defined and separated frequency levels and under the RWA, we can recover a time-independent system and obtain an effective transfer matrix.

As a first step in this process we need to specify both the forms of the waveguide input $b_{\rm in}(t)$ and the modulation $\epsilon(t)$. The frequency content of the input is fixed by the type of $N$-mode linear optical element we wish to implement with our $N$-resonator device, which is encoded in the structure of $\mathcal{P}$. We ask: how many independent (nonoverlapping) subsets of coupled normal modes can we split $\mathcal{P}$ into? This number corresponds exactly with the number of distinct frequencies the input must carry. It can be as small as $1$, which corresponds to the case of a $N$-way symmetric beam splitter, for which the input frequency should be nearly resonant with any of the $\omega_j$'s. And as large as $N/2$ (for $N$ even), corresponding to a network of $N/2$ nonoverlapping $2$-mode frequency beam splitters. With each input frequency component nearly resonant with one of the two levels being mixed on each $2$-mode beam splitter. We write this input field as
\begin{equation}
\label{eqn:physical_input}
b_{\rm in}(t) = b_{\rm in}\sum_{j=1}^{N_\omega} e^{-i \omega_{\rm I}^{(j)} t},
\end{equation}
where $1\le N_{\omega}\le N/2$ is the number of distinct frequency components in the input, $b_{\rm in}$ the amplitude of the input light, which we assume to be equal for all frequencies, and  $\omega_{\rm I}^{(k)}$ the $k$-th frequency component of the input.

Our choice of equally spaced consecutive normal mode frequencies, i.e, property~\ref{property:1}, fixes the total number of different splittings among the $N$ normal modes to be $N-1$. Thus, we might need a drive with at most $N-1$ distinct tones, each of them nearly resonant with one of the splittings, and at least $1$ tone. In fact, a pattern of coupled levels $\mathcal{P}$ defines a set of splittings, and the number of required tones is given by the number of different splittings in that set. We define the multi-tone drive
\begin{equation}
\label{eqn:general_drive}
\epsilon(t) = \epsilon \sum_{j=1}^{J}\cos(\omega_{\rm d}^{(j)}t + \phi_j),
\end{equation}
where $1 \le J \le N-1$ is the number of distinct tones $\{\omega_{\rm d}^{(j)}\}_{j=1,..,J}$, where for each tone the drive is a pure harmonic, $\epsilon$ is the drive amplitude, and  $\omega_{\rm d}^{(j)}$ and $\phi_j$ are the frequency and phase of the $j$-th tone, respectively.
With these definitions, we now move to the construction of effective transfer matrices for our frequency-domain optical elements. In the following we will treat the case of one and two waveguides separately.

\subsubsection{The case of a single waveguide}
A single waveguide means a single \emph{physical} input-output port. The SLH triple of the system, in terms of the $a_j$ operators is $(\mathbf{I}, [\sqrt{\Gamma_{\rm R}} a_1], H(t))$ with $H(t) = H_0 + H_{\rm d}(t)$ written in terms of the $a_j$'s. Since our frequency beam splitters mix the normal modes, the starting point for the construction of the effective transfer matrix is to write the system's SLH triple in terms of the $c_j$'s.
In general, for an array of coupled resonators, $a_1 = \sum_{j} \mu_j c_j$, with $0<\mu_j<1$ for all $j$, and $\sum_j \mu_j^2 = 1$. If we introduce $\tau_j = \mu_j^2$, we find $\mathbf{L} = \sum_j \sqrt{\Gamma_{\rm L}\tau_j} c_j$. Hence, the normal modes exchange energy with the propagating fields on the waveguide as a collective, \ie they drive and are driven as a collective, with each normal mode contributing to this collective energy exchange at a rate of $\gamma_{\rm L}^{(j)} = \Gamma_{\rm L}\tau_j$. Motivated by this observation, we will treat the single physical input-output port as $N$ \emph{virtual} input-output ports (although the inputs on different virtual ports would not be completely independent). This leads to an $\mathbf{L}$ of the form 
\begin{equation}
\label{eqn:L_nornal_modes}
\mathbf{L} = \left(\sqrt{\gamma_{\rm L}^{(1)}}c_1, ..., \sqrt{\gamma_{\rm L}^{(N)}}c_{N} \right)^T.
\end{equation}
For a given virtual input port feeding one of the normal modes, the input field is given by Eq.~(\ref{eqn:physical_input}). Out of all the frequency components either none or only one will be nearly resonant with the normal mode frequency and thus will effectively couple to it. Thus, the effective input to the $i$-th virtual input-output port is either $b_{\rm in}^{(i)}(t) = b_{\rm in}e^{-i \omega_{\rm I}^{(k)}}$ or $b_{\rm in}^{(i)}(t) = 0$, depending on whether there is one frequency component, $\omega_{\rm I}^{(k)}$, in the input, nearly resonant with $\omega_i$. As such, the system is then described by an effective SLH triple $(\mathbf{I}_N, \mathbf{L}, H(t))$ with $\mathbf{L}$ given in Eq.~(\ref{eqn:L_nornal_modes}) and $H(t)$ given in Eq.~(\ref{eqn:hamil_total}).

The associated $\Omega$ and $\Phi$ matrices are defined via their entries as
\begin{equation}
\label{eqn:omega_mat}
[\Omega]_{ij} = \begin{cases}
\omega_i \quad \text{if}\quad i=j, \\
\epsilon(t) \enspace \text{if} \enspace (i,j)\in\mathcal{P}\enspace\text{or}\enspace (j,i)\in\mathcal{P}, \\
0 \quad \text{otherwise}
\end{cases}
\end{equation}
and 
\begin{equation}
[\Phi]_{ij} = \begin{cases}
\sqrt{\gamma_{\rm L}^{(i)}} \enspace\text{if}\enspace i=j, \\
0 \enspace\text{otherwise}
\end{cases}
\end{equation}
where $\epsilon(t)$ is of the form of Eq.~(\ref{eqn:general_drive}), and $\gamma_{\rm L}^{(i)} = \Gamma_{\rm L}\tau_{i}$ as before. The $ABCD$ matrices then take the form
\begin{equation}
[A]_{ij} = \begin{cases}
-i\omega_i -\gamma_{\rm L}^{(i)}/2 - \kappa_{\rm int}/2 \quad \text{if}\quad i=j, \\
-i\epsilon(t) \enspace \text{if} \enspace (i,j)\in\mathcal{P}\enspace\text{or}\enspace (j,i)\in\mathcal{P}, \\
0 \quad \text{otherwise}
\end{cases}
\end{equation}
where internal losses for each normal mode were added phenomenologically~\footnote{This can be done by including the non-Hermitian term $-i\frac{\kappa_{\rm int}}{2}\sum_{j=1}^N a^\dagger_j a_j = -i\frac{\kappa_{\rm int}}{2}\sum_{j=1}^N c^\dagger_j c_j$ in the total Hamiltonian. If all the physical resonators experience the same amount of internal losses, so do the normal modes.}, and $B = -\Phi$, $C = \Phi$, $D = \mathbf{I}_{N}$. This results in equations of motion of the form
\begin{subequations}
\label{eqn:abcd_linear_nonauto}
\begin{align}
\dot{\mathbf{c}}(t) &= A(t) \mathbf{c}(t) + B \mathbf{b}_{\rm in}(t), \\
\mathbf{b}_{\rm out}(t) &= C \mathbf{c}(t) + D \mathbf{b}_{\rm in}(t),
\end{align}
\end{subequations}
where $\mathbf{c}(t) = (c_1(t),...,c_{N}(t))^T$ is the vector of annihilation operators for the normal modes.

The structure of $\mathcal{P}$ allows us to extract a time-independent system out of Eq.~(\ref{eqn:abcd_linear_nonauto}). Indeed, $\mathcal{P}$ is composed of clusters of coupled normal modes. The number of clusters goes from $1$ (one $N$-way symmetric frequency beam splitter) to $N/2$ ($N/2$ $2$-mode frequency beam splitters), and it is equal to the number of distinct frequencies in the input field, Eq.~(\ref{eqn:physical_input}). By construction each of these frequencies, $\omega_{\rm I}^{(k)}$, is nearly resonant with a normal mode, say $c_k$, at $\omega_k$, and all the normal modes coupled to $c_k$ (in the same cluster), are almost frequency-matched to $\omega_{\rm I}^{(k)}$ by one of the drive frequencies. We now extract the slowly varying components of the normal modes. For a given cluster, we write $c_k\to c_ke^{-i\omega_{\rm I}^{(k)}t}$ for the normal mode nearly resonant with $\omega_{\rm I}^{(k)}$, and $c_j\to c_j e^{-i(\omega_{\rm I}^{(k)} + \omega_{\rm d}^{(k')})t}$ for all the other normal modes in the cluster, where $\omega_{\rm d}^{(k')}$ is the drive frequency which almost frequency-matches $\omega_{\rm I}^{(k)}$ to $\omega_j$. 

In these rotating frames, invoking the RWA and after neglecting fast oscillating terms at $\pm 2\omega_{\rm d}^{(k')}$ or higher, we recover a time-independent equation of motion for the slow components of the normal modes. Further, the effective inputs on the slow components of the normal modes in the cluster under consideration are $b_{\rm in}^{(i)}(t) \to b_{\rm in}$ and $b_{\rm in}^{(j)}(t) = b_{\rm in}e^{i \omega_{\rm d}^{(k')} t}$, the latter being highly off-resonant from the normal mode frequency and will not couple to it, and thus we neglect its effect.
These approximations lead to equations of motion of the form 
\begin{subequations}
\label{eqn:abcd_linear_slow}
\begin{align}
\dot{\mathbf{c}}(t) &= (A + W)\mathbf{c}(t) + B \mathbf{b}_{\rm in}(t), \\
\mathbf{b}_{\rm out}(t) &= C \mathbf{c}(t) + D \mathbf{b}_{\rm in}(t),
\end{align}
\end{subequations}
with an time-independent $A$ matrix of the form
\begin{equation}
\label{eqn:amat_time_inde}
[A]_{ij} = \begin{cases}
-i\omega_i - \kappa_i/2 \quad \text{if}\quad i=j, \\
-i\frac{\epsilon}{2}e^{i\phi_{k'}} \enspace \text{if} \enspace (i,j)\in\mathcal{P}, \\
-i\frac{\epsilon}{2}e^{-i\phi_{k'}} \enspace \text{if} \enspace (j,i)\in\mathcal{P}, \\
0 \quad \text{otherwise}
\end{cases}
\end{equation}
where we have introduced the total level linewidth of the $i$-th normal mode, $\kappa_i = \gamma_{\rm L}^{(i)} + \kappa_{\rm int}$, and $\phi_{k'}$ is the phase of the $k'$-th frequency component of the drive. The $W$ matrix is given by 
\begin{equation}
\label{eqn:Wmat}
[W]_{lm} = \begin{cases}
i\omega_{\rm I}^{(k)}\enspace \text{if}\enspace l=m \enspace\text{and}\enspace l=i\enspace \text{with}\enspace (i,j)\in\mathcal{P}, \\
i(\omega_{\rm I}^{(k)} + \omega_{\rm d}^{(k')}) \enspace\text{if}\enspace l=m, \enspace l=j\enspace \text{with}\enspace (i,j)\in\mathcal{P}, \\
0 \quad\text{otherwise}.
\end{cases}
\end{equation}
This time invariant system can be solved via direct Fourier transform. The input-output relation is then
\begin{equation}
\label{eqn:transfer_func_linear}
\mathbf{b}_{\rm out}(\omega) = \Xi(\omega) \mathbf{b}_{\rm int}(\omega),
\end{equation}
where the frequency-domain transfer function is given by
\begin{equation}
\label{eqn:freq_domain_transfer}
\Xi(\omega) = \left[ C( i\omega \mathbf{I}_N - (A + W))^{-1} B + D \right].
\end{equation}

\subsubsection{Example: Network of nonoverlapping \texorpdfstring{$2$}{\textit{2}}-mode beam splitters}
Let us consider a pattern $\mathcal{P}$ of nonoverlaping pairs of coupled normal modes. For this type of coupling pattern, the transfer matrix inherits the block structure of the $A$ matrix, which is composed of $(2\times 2)$ blocks. With each block defining a $2$-mode frequency beam splitter mixing normal modes $i$ and $j$ for $(i,j)\in\mathcal{P}$. 

The ideal energy exchange occurs when the input light is resonant with one of the two levels, \ie $\omega_{\rm I}^{(k)}=\omega_i$ or $\omega_{\rm I}^{(k)}=\omega_j$, and one of the frequencies in the drive is resonant with the level splitting, \ie $\omega_{\rm d}^{(k')}=\Delta_{ij}=\omega_{j}-\omega_i$ for $(i,j)\in\mathcal{P}$. In this setting the diagonal components of the matrix $A+W$ no longer depend on the frequencies involved. Finally, notice that Eq.~(\ref{eqn:abcd_linear_slow}) describe the dynamics of the slow components, as such we are interested in the limit $\omega\to 0$. In this limit, with the ideal selection of frequencies, we find each of the $(2\times 2)$ blocks of $\Xi(\omega\to 0) = \Xi(\epsilon, \kappa_i, \kappa_j)$ corresponding to the coupled levels $(i,j)\in\mathcal{P}$ to be 
\begin{equation}
\label{eqn:transfer_mat_block}
\Xi_{ij}(\epsilon, \kappa_i, \kappa_j) = \begin{pmatrix}
1 - \frac{2\kappa_j\gamma_{\rm L}^{(i)}}{\epsilon^2 + \kappa_i\kappa_j} && i \frac{2\epsilon\sqrt{\gamma_{\rm L}^{(i)}\gamma_{\rm L}^{(j)}}}{\epsilon^2 + \kappa_i\kappa_j} e^{i\phi_{k}} \\ 
i \frac{2\epsilon\sqrt{\gamma_{\rm L}^{(i)}\gamma_{\rm L}^{(j)}}}{\epsilon^2 + \kappa_i\kappa_j} e^{-i\phi_{k}}  && 1 - \frac{2\kappa_i\gamma_{\rm L}^{(j)}}{\epsilon^2 + \kappa_i\kappa_j}
\end{pmatrix},
\end{equation} 
where $\kappa_i = \gamma_{\rm L}^{(i)} + \kappa_{\rm int}$ is the total linewidth of the $i$-th normal mode. The total output on the waveguide is 
\begin{equation}
b_{\rm out}(t) = \sum_{(i,j)\in\mathcal{P}} \left[\left( 1 - \frac{2\kappa_j\gamma_{\rm L}^{(i)}}{\epsilon^2 + \kappa_i\kappa_j} \right) e^{-i\omega_i t}  + i \frac{2\epsilon\sqrt{\gamma_{\rm L}^{(i)}\gamma_{\rm L}^{(j)}}}{\epsilon^2 + \kappa_i\kappa_j} e^{i\phi_{k}}e^{-i\omega_j t}\right],
\end{equation}
where $\phi_k$ is the phase of the frequency component of the drive which is resonant with the splitting $\Delta_{ij}$. 

\subsubsection{The case of two waveguides}
Extracting the output through the same waveguide used to inject the input might lead to distortion of the output photonic wavepacket. While this might not be a cause of serious issues in a setting aimed at demonstrating the frequency-domain beam splitting capabilities, when these devices are integrated as part of a larger QION with the aim of executing a complex computation, this distortion will diminish our capability of coupling the resulting output photons to other localized elements down the line. A workaround to this issue is to have a second waveguide dedicated only to extract the output of our optical element.
Consider the array of coupled resonators now coupled to two waveguides. The first waveguide couples to the array via the first resonator at rate $\Gamma_{\rm L}$, and the second waveguide couples to the array via a second resonator, different from $a_1$, at rate $\Gamma_{\rm R}$. We now have two physical input-output ports and the system, in terms of the $a_j$'s, this is described by the SLH triple $(\mathbf{I}_{4}, [\sqrt{\Gamma_{\rm L}} a_1, \sqrt{\Gamma_{\rm R}} a_N]^T, H(t))$, where without loss of generality we have taken the second waveguide to be coupled to the last resonator, $a_N$, and $H(t) = H_0 + H_{\rm d}(t)$ the device Hamiltonian in terms of the $a_j$'s. We now seek the SLH representation in terms of the $c_j$'s.  

The normal modes exchange energy with the first waveguide collectively, via $a_1$, with each normal mode contributing to this energy exchange at a rate $\gamma_{\rm L}^{(i)} = \Gamma_{\rm L} \tau_i$. Similarly, the normal modes exchange energy with the second waveguide as a collective. Namely, one can expand $a_N = \sum_{j}g_j \tilde{\nu}_j c_j$, where $\tilde{\nu}_j = |\nu_j|$ and $g_j = {\rm sign}(\nu_j)$, then after introducing $\chi_j = \tilde{\nu}_j^2$, we write $\sqrt{\Gamma_{\rm R}}a_N = \sum_{j}g_j \sqrt{\gamma_{\rm R}^{(j)}} c_j$. Thus, each normal mode contributes to this collective energy exchange with the second waveguide at a rate $\gamma_{\rm R}^{(j)} = \Gamma_{R}\chi_j$, which is also mediated by the sign $g_j=\pm1$.
We promote each of the two physical input-output ports to $N$ virtual input output ports (whose inputs are not completely independent), and ascribe two of each of them to the each normal mode. We write 
\begin{equation}
\label{eqn:L_nornal_modes_2wv}
\mathbf{L} = \left( \sqrt{\gamma_{\rm L}^{(1)}}c_1, g_1\sqrt{\gamma_{\rm R}^{(1)}}c_1, ..., \sqrt{\gamma_{\rm L}^{(N)}}c_{N}, g_N\sqrt{\gamma_{\rm R}^{(N)}}c_{N} \right)^T,
\end{equation}
which now has size $(2N\times 1)$. 
The second waveguide is only used to extract the output, its input is always vacuum, $b_{\rm in}^{(\rm R)} = 0$, while the input on the first waveguide, $b_{\rm in}^{(\rm L)}$, is given by Eq.~(\ref{eqn:physical_input}). As such, for a given normal mode, there is either one or zero frequency component in the input which is nearly resonant with it. Then, the effective input on the virtual input port for the first waveguide corresponding to the $i$-th normal mode is $b_{\rm in}^{(i, {\rm L})}(t) = b_{\rm in}e^{-i \omega_{\rm I}^{(k)}t}$ or $b_{\rm in}^{(i, {\rm L})}(t) = 0$, depending on whether there is a frequency component, $\omega_{\rm I}^{(k)}$, in the physical input nearly resonant with $\omega_i$. The system is then described by the effective SLH triple $(\mathbf{I}_{2N}, \mathbf{L}, H(t))$ with $\mathbf{L}$ as in Eq.~(\ref{eqn:L_nornal_modes_2wv}) and $H(t)$ as in Eq.~(\ref{eqn:hamil_total}).

The auxiliary matrices $\Omega$, of size $(N\times N)$, and $\Phi$, of size $(2N\times N)$, are defined by their entries and given by Eq.~(\ref{eqn:omega_mat}), and 
\begin{equation}
\label{eqn:phi_mat_2wv}
[\Phi]_{p,q} = \begin{cases}
\sqrt{\gamma_{\rm L}^{(p)}} \enspace\text{if}\enspace p = (q-1)N + 1, \\
g_p \sqrt{\gamma_{\rm R}^{(p)}} \enspace\text{if}\enspace p = (q-1)N + 2, \\
0 \quad \text{otherwise}
\end{cases}
\end{equation} 
respectively. Correspondingly the $A(t),B,C,D$ matrices are given by
\begin{equation}
\label{eqn:amat_2wv}
[A]_{ij} = \begin{cases}
-i\omega_i - \kappa_i/2 \quad \text{if}\quad i=j, \\
-i\epsilon(t) \enspace \text{if} \enspace (i,j)\in\mathcal{P}\enspace\text{or}\enspace (j,i)\in\mathcal{P}, \\
0 \quad \text{otherwise}
\end{cases}
\end{equation}
where we defined the total linewidth $\kappa_i = \gamma_{\rm L}^{(i)} + \gamma_{\rm R}^{(i)} + \kappa_{\rm int}$ for the $i$-th normal mode, and introduced the internal loss of each mode phenomenologically. And $B = -\Phi^T$, $C = \Phi$, $D = \mathbf{I}_{2N}$. Introducing the second waveguide explicitly changed the form of the $B,C,$ and $D$ matrices, but in general terms the $A(t)$ matrix retained the same structure as in the case of a single waveguide. This leads to two important consequences: (1) we can exploit the same set of arguments, invoking the RWA, as in the previous section to arrive at a time-independent form for $A(t)$ acting on the slow components of the normal modes. (2) This time-independent form of the $A$ matrix will have the same structure as that in Eq.~(\ref{eqn:amat_time_inde}), with the appropriate definition of $\kappa_i$ for the case of two waveguides. The solution to the input-output relation for the case of two waveguides can be written as in Eq.~(\ref{eqn:transfer_func_linear}), with the transfer matrix defined as in Eq.~(\ref{eqn:freq_domain_transfer}), with the $B,C$, and $D$ matrices as defined below Eq.~(\ref{eqn:amat_2wv}) and the $W$ matrix as defined in Eq.~(\ref{eqn:Wmat}).

\subsubsection{Example: Network of nonoverlapping \texorpdfstring{$2$}{\textit{2}}-mode beam splitters}
Let us consider a pattern $\mathcal{P}$ of nonoverlaping pairs of coupled normal modes. In this case, the effective transfer matrix has a simple block structure with each block being $(4\times 4)$. 
For a pair of coupled normal modes $(i,j)\in\mathcal{P}$, with the input on the first waveguide resonant with one of the two levels and the drive carrying a component resonant with the splitting $\Delta_{ij}$, we can write the $(4\times 4)$ block representing the mixing between these two levels as 
\begin{equation}
\label{eqn:transfer_mat_block_2wv}
\Xi_{ij}(\epsilon, \kappa_i, \kappa_j) = \\
 \begin{pmatrix}
1 - \frac{2\kappa_j \gamma_{\rm L}^{(i)}}{\kappa_i\kappa_j + \epsilon^2}  & -g_i \frac{2\kappa_j \sqrt{\gamma_{\rm L}^{(i)}\gamma_{\rm R}^{(i)}}}{\kappa_i\kappa_j + \epsilon^2} & i\frac{2\epsilon \sqrt{\gamma_{\rm L}^{(i)} \gamma_{\rm L}^{(j)}}}{\kappa_i\kappa_j + \epsilon^2} e^{i\phi_k} & ig_j\frac{2\epsilon \sqrt{\gamma_{\rm L}^{(i)}\gamma_{\rm R}^{(j)}}}{\kappa_i\kappa_j + \epsilon^2}e^{i\phi_k} \\
-g_i \frac{2\kappa_j \sqrt{\gamma_{\rm L}^{(i)}\gamma_{\rm R}^{(i)}}}{\kappa_i\kappa_j + \epsilon^2} & 1 - \frac{2\kappa_j \gamma_{\rm R}^{(i)}}{\kappa_i\kappa_j + \epsilon^2} & ig_i\frac{2\epsilon \sqrt{\gamma_{\rm L}^{(j)}\gamma_{\rm R}^{(i)}}}{\kappa_i\kappa_j + \epsilon^2} e^{i\phi_k} & ig_ig_j\frac{2\epsilon \sqrt{\gamma_{\rm R}^{(i)}\gamma_{\rm R}^{(j)}}}{\kappa_i\kappa_j + \epsilon^2}e^{i\phi_k} \\
i\frac{2\epsilon \sqrt{\gamma_{\rm L}^{(i)} \gamma_{\rm L}^{(j)}}}{\kappa_i\kappa_j + \epsilon^2} e^{-i\phi_k} & ig_i\frac{2\epsilon \sqrt{\gamma_{\rm L}^{(j)} \gamma_{\rm R}^{(i)}}}{\kappa_i\kappa_j + \epsilon^2} e^{-i\phi_k} & 1 - \frac{2\kappa_i \gamma_{\rm L}^{(j)}}{\kappa_i\kappa_j + \epsilon^2}  & -g_j \frac{2\kappa_i \sqrt{\gamma_{\rm L}^{(j)} \gamma_{\rm R}^{(j)}}}{\kappa_i\kappa_j + \epsilon^2} \\
ig_j\frac{2\epsilon \sqrt{\gamma_{\rm L}^{(i)} \gamma_{\rm R}^{(j)}}}{\kappa_i\kappa_j + \epsilon^2} e^{-i\phi_k} & ig_ig_j\frac{2\epsilon \sqrt{\gamma_{\rm R}^{(i)} \gamma_{\rm R}^{(j)}}}{\kappa_i\kappa_j + \epsilon^2} e^{-i\phi_k} & -g_j \frac{2\kappa_i \sqrt{\gamma_{\rm L}^{(j)} \gamma_{\rm R}^{(j)} }}{\kappa_i\kappa_j + \epsilon^2} & 1 - \frac{2\kappa_i \gamma_{\rm R}^{(j)}}{\kappa_i\kappa_j + \epsilon^2}
\end{pmatrix},
\end{equation}
where $\phi_k$ is the phase of the drive frequency component resonant with the splitting $\Delta_{ij} = \omega_j - \omega_i$ for $(i,j)\in\mathcal{P}$.
The total output of the device is then
\begin{equation}
\label{eqn:physical_outpu_2wv}
b_{\rm out}(t) = b_{\rm out}^{(\rm L)}(t) + b_{\rm out}^{(\rm R)}(t),
\end{equation}
where the outputs on each waveguide are 
\begin{equation}
 b_{\rm out}^{(\rm L)}(t) = \sum_{(i,j)\in\mathcal{P}} \left[\left( 1 - \frac{2\kappa_j \gamma_{\rm L}^{(i)}}{\kappa_i\kappa_j + \epsilon^2} \right) e^{-i\omega_i t} 
 + i\frac{2\epsilon \sqrt{\gamma_{\rm L}^{(i)} \gamma_{\rm L}^{(j)}}}{\kappa_i\kappa_j + \epsilon^2} e^{-i\phi_k} e^{-i\omega_j t}\right],
\end{equation}
and
\begin{equation}
 b_{out}^{(\rm R)}(t) = \sum_{(i,j)\in\mathcal{P}} \left[ -g_i \frac{2\kappa_j \sqrt{\gamma_{\rm L}^{(i)}\gamma_{\rm R}^{(i)}}}{\kappa_i\kappa_j + \epsilon^2}e^{-i\omega_i t} + ig_j\frac{2\epsilon \sqrt{\gamma_{\rm L}^{(i)} \gamma_{\rm R}^{(j)}}}{\kappa_i\kappa_j + \epsilon^2} e^{-i\phi_k} e^{-i\omega_j t} \right].
\end{equation}

%
%

\section{Beam splitters implementable with rectangular arrays of coupled resonators}
\label{app:N_reso_rectangle}
In this section we investigate the properties of the frequency beam splittters that can be generated with a rectangular array of strongly coupled $N = L\times M$ resonators of equal central frequency $\omega_0$, and coupling strengths $u$ and $v$, along the horizontal and vertical directions of the array, respectively. For the rest of the section we will index the position of a resonator in the array with the tuple $(l,m)$, similar to how one indexes the entries of a matrix. 

\begin{figure}[ht!]
    \centering
    \includegraphics[width=0.5\columnwidth]{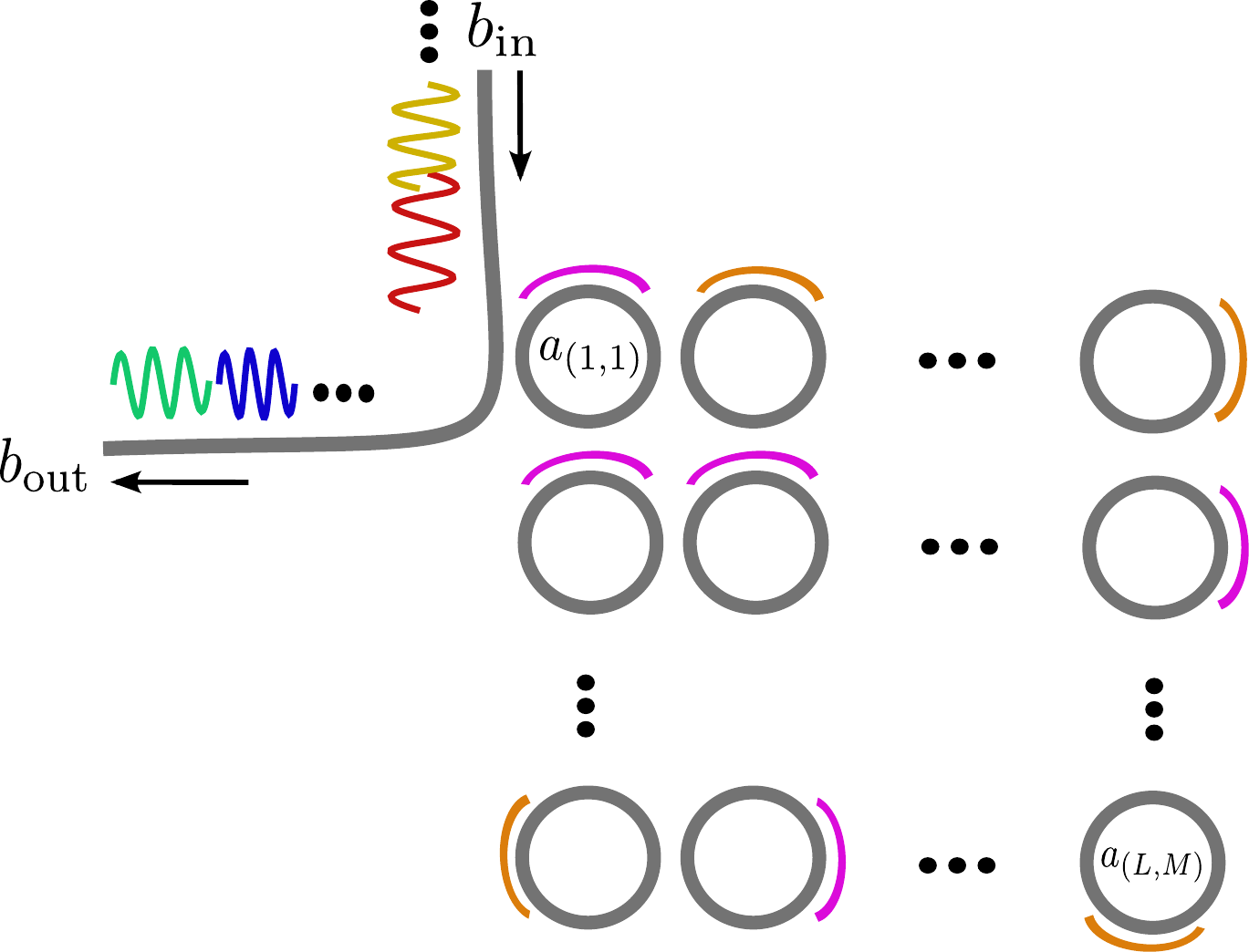}
    \caption{Schematic of the rectangular array of ring resonators investigated in this appendix. We show the case of a single waveguide coupled to the array via $a_{(1,1)}$ at rate $\Gamma_{\rm L}$.}
    \label{fig:fig_rectangle}
\end{figure}

In absence of a time-dependent modulation, the array of resonators is described by the Hamiltonian
\begin{equation}
\label{eqn:hamil_as_nmode}
H_0 = \omega_0\sum_{l=1,m=1}^{L,M}a^\dagger_{(l,m)}a_{(l,m)} 
+ u\sum_{l=1}^L\sum_{m=1}^{M-1}\left( a^\dagger_{(l,m)}a_{(l,m+1)} + {\rm h.c.} \right)
+ v \sum_{m=1}^M \sum_{l=1}^{L-1}\left( a^\dagger_{(l,m)}a_{(l+1,m)} + {\rm h.c.}\right),
\end{equation}
where $a_{(l,m)}$ is the anihilation operator for the resonator at position $(l,m)$ in the array. The Hamiltonian in Eq.~(\ref{eqn:hamil_as_nmode}) is quadratic in the creation/annihilation operators, defining $\mathbf{a} = (a_{(1,1)},...,a_{(L,M)})^T$ the vector of all the annihilation operators, we can write Eq.~(\ref{eqn:hamil_as_nmode}) as
\begin{equation}
H_0 = \mathbf{a}^\dagger . h_0 . \mathbf{a}
\end{equation}
for some real symmetric matrix $h_0$. Explicitly finding the normal modes and their frequencies requires us to diagonalize $h_0$.
Since the rectangular array has ``hard'' boundaries, the normal modes are given by standing waves~\cite{McDonald2018}, which we can define with the help of a discrete sine transform~\cite{Britanak2007}. They are given by
\begin{equation}
\label{eqn:normal_modes_discrete_sines}
c_{(p,q)} = \sum_{l,m=1}^{L,M} \mu_{(p,q)}^{(l,m)} a_{(l,m)}, \quad\text{and} \quad a_{(l,m)} = \sum_{p,q=1}^{L,M} \mu_{(p,q)}^{(l,m)} c_{(p,q)}
\end{equation}
with the coefficients given by products of discrete sines
\begin{equation}
\mu_{(p,q)}^{(l,m)} = \frac{2 \sin\left(\frac{\pi l}{L+1} p \right) \sin\left(\frac{\pi m}{M+1} q \right) }{\sqrt{(L+1)(M+1)}}.
\end{equation}
The orthonormality of the normal modes $c_{(p,q)}$ follow from the orthogonality of the discrete sines~\cite{Britanak2007}. One can verify by direct computation that the above set of normal modes diagonalize $H_0$ in Eq.~(\ref{eqn:hamil_as_nmode}). Indeed, we find
\begin{equation}
\label{eqn:hamil_cs_nmode}
H_0 = \sum_{p=1,q=1}^{L,M} \omega_{(p,q)} c^\dagger_{(p,q)} c_{(p,q)},
\end{equation}
with $c_{(p,q)}$ as in Eq.~(\ref{eqn:normal_modes_discrete_sines}) and their frequencies given by
\begin{equation}
\label{eqn:normal_mode_freqs_general}
\omega_{(p,q)} = \omega_0 + 2v\cos\left(\frac{\pi}{L+1}p \right) + 2u\cos\left( \frac{\pi}{M+1}q \right),
\end{equation}
which for $p = 1,...,L$ and $q=1,...,M$ lists the normal mode frequencies in descending order.

\subsection{Equally spaced normal mode frequencies}
In this subsection we investigate the conditions on the coupling strengths of the rectangular array to guarantee property~\ref{property:1}, \ie equally spaced consecutive normal mode frequencies.
Consider the $\omega_{(p,q)}$'s of Eq.~(\ref{eqn:normal_mode_freqs_general}) and sort them in ascending order. We construct the set of level spacings, $\{\Delta_{(p,q),(p',q')}\}$, for $(p,q)$ and $(p',q')$ such that $\omega_{(p,q)}$ and $\omega_{(p',q')}$ are consecutive and $\omega_{(p,q)}<\omega_{(p',q')}$. We obtain conditions on the coupling strengths $u$ and $v$ guaranteeing property~\ref{property:1}. Importantly, if the coupling strengths are parametrized by two values only, $u$ and $v$, then, for a general array of resonators one cannot make all the $N-1$ spacings equal. In fact, as we increase the size of the rectangular array, one encounters a hard limit at $N=9=3\times 3$.

This can be readily seen by examining the condition stemming from requiring equality between the first two frequency spacings. That is, $\Delta_{(L,M),(L,M-1)} = \Delta_{(L,M-1),(L,M-2)}$. Using Eq.~(\ref{eqn:normal_mode_freqs_general}), we find
\begin{equation}
\cos\left( \frac{\pi M}{M+1} \right) - \cos\left( \frac{\pi (M-1)}{M+1}  \right) 
= \cos\left( \frac{\pi (M-1)}{M+1} \right) - \cos\left( \frac{\pi (M-2)}{M+1} \right),
\end{equation}
which is independent of the coupling strengths, valid for $M\ge3$, and holds true when $M=3$. If we fix $M=3$, then by requiring $\Delta_{(L,1),(L-1,3)} = \Delta_{(L-1,1),(L-2,3)}$, we obtain a similar condition on $L$ which is satisfied when $L=3$. We summarize these observations in the following remarks,
\begin{remark}
\label{remark:equal_spacings}
If we desire normal mode frequencies which are equally spaced, any of the two sides of the array cannot have more than three resonators. 
\end{remark}
\begin{remark}
\label{remark:equal_spacings_2}
The largest rectangular array for which we can find values of $u$ and $v$ leading to equally spaced consecutive frequency levels is $N=9$.
\end{remark}
Incidentally, remark~\ref{remark:equal_spacings_2} includes the case of $N=8$, but remark~\ref{remark:equal_spacings} excludes it, resulting in only four array sizes for which we can tune the coupling strengths to obtain equally spaced consecutive energy levels. We collect these conditions in Table~\ref{tab:conditions_uv}. Finally, if we allow all the coupling strengths to be different, $u_{(l,m),(l',m')}$ where $(l,m)$ and $(l',m')$ are neighbors on the lattice, a simple numerical optimization of these values allows us to recover property~\ref{property:1} for larger array sizes.
\begin{table}
\begin{tabular}{|c|c|c|c|}
\hline
$N$ & $L$ & $M$ & Condition \\
\hline 
$4$ & $2$ & $2$ & $v=2u$ \\
\hline
$6$ & $2$ & $3$ & $v = 3u/\sqrt{2}$ \\
\hline
$6$ & $3$ & $2$ & $u = 3u/\sqrt{2}$ \\
\hline
$9$ & $3$ & $3$ & $v=3u$ \\
\hline
\end{tabular}
\caption{Conditions enabling an $N=L\times M$ array satisfying property~\ref{property:1}.}
\label{tab:conditions_uv}
\end{table}

\subsection{The structure of the coupling rates to the waveguides}
The array couples to a first waveguide via $a_{(1,1)}$ at rate $\Gamma_{\rm L}$, and to a second waveguide via $a_{(L,M)}$ at rate $\Gamma_{\rm R}$. As discussed in App.~\ref{app:general_transfer_mat}, the coupling rates to the waveguides, $\gamma_{\rm L,R}^{(p,q)}$, of each normal mode depend on the support that $a_{(1,1)}$ and $a_{(L,M)}$ have on the $c_{(p,q)}$'s. We write, $\sqrt{\Gamma_{\rm L}} a_{(1,1)} = \sum_{p,q} \sqrt{\gamma_{\rm L}^{(p,q)}} c_{(p,q)}$, with $\gamma_{\rm L}^{(p,q)} = \Gamma_{\rm L} \left(\mu_{(p,q)}^{(1,1)}\right)^2$, and the square root is always well define since $\mu_{(p,q)}^{(1,1)}\ge0$ for all $p,q\le L,M$. And $\sqrt{\Gamma_{\rm R}} a_{(L,M)} = \sum_{p,q} g_{(p,q)} \sqrt{\gamma_{\rm R}^{(p,q)}} c_{(p,q)}$, with $g_{(p,q)} = {\rm sign}(\mu_{(p,q)}^{(L,M)})$ and $\gamma_{\rm R}^{(p,q)} = \Gamma_{\rm R} \left(\tilde{\mu}_{(p,q)}^{(L,M)}\right)^2$, where $\tilde{\mu}_{(p,q)}^{(L,M)} = |\mu_{(p,q)}^{(L,M)}|$.

The structure of the coupling rates, $\gamma_{\rm L,R}^{(p,q)}$'s, as defined above, is established in the following remark
\begin{remark}
\label{remark:coupling_rates}
The coupling rates $\gamma_{\rm L,R}^{(p,q)}$ come in groups of four. In fact, we have 
\begin{equation}
\begin{split}
\gamma_{\rm L}^{(p,q)} = \gamma_{\rm L}^{(L-(p-1),q)} = \gamma_{\rm L}^{(p,M-(q-1))} = \gamma_{\rm L}^{(L-(p-1),M-(q-1))},\nonumber \\
\gamma_{\rm R}^{(p,q)} = \gamma_{\rm R}^{(L-(p-1),q)} = \gamma_{\rm R}^{(p,M-(q-1))} = \gamma_{\rm R}^{(L-(p-1),M-(q-1))}.
\end{split}
\end{equation}
\end{remark}
This is direct consequence of identical chains of equalities for the $\mu_{(p,q)}^{(1,1)}$ and $\tilde{\mu}_{(p,q)}^{(L,M)}$, which follow from the identity: $\sin\left(\frac{\pi l}{L+1} (L - (p-1)) \right) = (-1)^{l+1}\sin\left(\frac{\pi l}{L+1} p \right)$, and similarly for $m,q$ and $M$. Finally, it follows from remark~\ref{remark:coupling_rates}, that the largest array of resonators for which the normal modes couple with equal rates to the waveguides is $N=4$. 
This structure also limits the patterns of nonoverlapping couplings which can be implemented using drives of the type in Eq.~(\ref{eqn:hamil_drive}), which we discuss next.

\subsection{N-mode nonoverlapping coupling patterns}
Consider a time-dependent drive of the type in Eq.~(\ref{eqn:hamil_drive}), with the indexing used in this section, we write
\begin{equation}
\label{eqn:hamil_drive_nmode}
H(t) = \epsilon(t)\sum_{l=1,m=1}^{L,M} f_{(l,m)} a_{(l,m)}^\dagger a_{(l,m)},
\end{equation}
where the coefficients $f_{(l,m)} \in \{0,\pm1\}$. Introducing this drive to the system Hamiltonian $H_0$ activates couplings between the normal modes. We are interested in patterns of coupled normal modes which do not overlap, generalizing the three patterns, $\mathcal{P}_1$, $\mathcal{P}_2$, and $\mathcal{P}_3$, defined for $N=4$ in Sec.~\ref{sec:4_reso}. 
The structure of the $\mu_{(p,q)}^{(l,m)}$'s implied by remark~\ref{remark:coupling_rates} informs us of the type of nonoverlapping $N$-mode patterns which can be accessed. 
They are constructed by subdividing the $N$ modes in groups of four, with the four modes within a group having the same value of $\mu_{(p,q)}^{(1,1)}$. Given a group of four, the normal modes are coupled as dictated by $\mathcal{P}_1$, $\mathcal{P}_2$, or $\mathcal{P}_3$, and the pattern is completed repeating this same pairing for all the groups. Since $N$ is even, this covers all the normal modes or leaves two of them not belonging to any group, which are consequently paired together. Finally, the chain of equalities in remark~\ref{remark:coupling_rates} can be split into three equalities, one relating $(p,q)$ to $(p-(L-1), q-(M-1))$, one relating it to $(p, q-(M-1))$, and one relating it to $(p-(L-1),q)$. This thus defines the normal modes each of the patterns, constructed before, will be mixing, and we provide the choice of drive signs achieving each of them below.

The normal mode frequencies are listed in ascending order with $\omega_0$ at the center. The pattern $\mathcal{P}_1$ for $N$ modes will couple pairs of normal modes which are the same distance away from $\omega_0$. This is achieved with the choice of signs $f_{(l,m)} = (-1)^{l+m}$, we find
\begin{equation}
H(t) = \epsilon(t)\sum_{l=1,m=1}^{L,M} (-1)^{l+m} a_{(l,m)}^\dagger a_{(l,m)} 
= \epsilon(t)\sum_{p,q=1}^{L,M} c^\dagger_{(L-(p-1), M-(q-1))} c_{(p,q)} + {\rm h.c.},
\end{equation} 
critically, this pattern requires us to choose coupling strengths $u$ and $v$ that do not lead to degenerate normal modes.
The generalized pattern $\mathcal{P}_2$ is achieved with the choice of signs $f_{(l,m)} = (-1)^{m+1}$, we find
\begin{equation}
H(t) = \epsilon(t)\sum_{l=1,m=1}^{L,M} (-1)^{m+1} a_{(l,m)}^\dagger a_{(l,m)} 
= \epsilon(t)\sum_{p,q=1}^{L,M} c^\dagger_{(p,M-(q-1))} c_{(p,q)} + {\rm h.c.}.
\end{equation} 
The generalized pattern $\mathcal{P}_3$ is achieved with the choice of signs $f_{(l,m)} = (-1)^{l+1}$, we find 
\begin{equation}
H(t) = \epsilon(t)\sum_{l=1,m=1}^{L,M} (-1)^{m+1} a_{(l,m)}^\dagger a_{(l,m)} 
= \epsilon(t)\sum_{p,q=1}^{L,M} c^\dagger_{(L-(p-1),q)} c_{(p,q)} + {\rm h.c.}.
\end{equation} 
Before closing this section we note a shortcoming of the generalizations of patterns $\mathcal{P}_2$ and $\mathcal{P}_3$.
\begin{remark}
\label{remark:rectangular_array_patterns}
Arrays where one side contains an odd number of resonators lead to drive Hamiltonians that activate unwanted diagonal terms. Thus, patterns $\mathcal{P}_2$ and $\mathcal{P}_3$ might not be generalized.
\end{remark}
In pattern $\mathcal{P}_2$, mode $(p,M-(q-1))$ is connected to mode $(p,q)$, thus one will obtain a diagonal term whenever $q = \frac{M+1}{2}$ is an integer in $[0,M]$. When $M$ is even, $\frac{M+1}{2}$ is never an integer, however, when $M$ is odd, the above situation is unavoidable. Thus, for any rectangular array with $M$ odd, pattern $\mathcal{P}_2$ cannot be implemented in general. In pattern $\mathcal{P}_3$, mode $(L-(p-1), q)$ is connected to mode $(p,q)$. One obtains diagonal terms whenever $p = \frac{L+1}{2}$ is an integer in $[1,L]$. When $L$ is even, $\frac{L+1}{2}$ is never an integer, however, when $L$ is odd, this situation is unavoidable. Thus, for any rectangular array with $L$ odd, pattern $\mathcal{P}_3$ cannot be implemented in general. Finally, we point out that the generalization of pattern $\mathcal{P}_1$ can \emph{always} be implemented regardless of the array dimensions.  

\subsection{Transfer matrices for the rectangular array of ring resonators}
In this section we apply our methodology for the construction of effective transfer matrices to the beam splitters generated with a rectangular array of resonators. As discussed in Sec.~\ref{sec:no_go}, even though these transfer matrices can be obtained with our methodology, the lack of property~\ref{property:2}, implies that tuning all the beam splitters to the same ratio at the same modulation amplitude cannot be achieved. We will illustrate this with an example.

We will focus on a rectangular array of $N=2\times 3$ ring resonators, fixing the coupling strengths as to have equally spaced consecutive levels (see Table~\ref{tab:conditions_uv}). We will consider only the case of a single waveguide, and look at the case of generating a network of three nonoverlapping beam splitters by activating any of the three patterns discussed in the previous section.
To better understand how the different beam splitters in the network will behave at a given modulation amplitude, we need the effective coupling rates of the normal modes. To construct them, we first find the system normal modes. They are 
\begin{equation}
\begin{split}
c_{(1,1)} &= \frac{1}{\sqrt{8}}(1,-\sqrt{2},	1,-1, \sqrt{2}, -1).\mathbf{a}, \quad
c_{(1,2)} = \frac{1}{\sqrt{8}}(\sqrt{2},0,-\sqrt{2},-\sqrt{2}, 0, \sqrt{2}).\mathbf{a}, \\ 
c_{(1,3)} &= \frac{1}{\sqrt{8}}(1,\sqrt{2},	1,-1, -\sqrt{2}, -1).\mathbf{a}, \quad
c_{(2,1)} = \frac{1}{\sqrt{8}}(1,-\sqrt{2},	1,1, -\sqrt{2}, 1).\mathbf{a},   \\
c_{(2,2)} &= \frac{1}{\sqrt{8}}(\sqrt{2},0,-\sqrt{2},\sqrt{2}, 0, -\sqrt{2}).\mathbf{a},  \quad
c_{(2,3)} = \frac{1}{\sqrt{8}}(1,\sqrt{2},	1,1, \sqrt{2}, 1).\mathbf{a},   
\end{split}
\end{equation}
where $\mathbf{a} = (a_{(1,1)}, a_{(1,2)}, a_{(1,3)}, a_{(2,1)}, a_{(2,2)}, a_{(2,3)})^T$ is a vector of the resonator's annihilation operators, and the associated frequencies are $\omega_{(1,1)} = \omega_0 - \frac{5}{\sqrt{2}}u$, $\omega_{(1,2)} = \omega_0 - \frac{3}{\sqrt{2}}u$, $\omega_{(1,3)} = \omega_0 - \frac{1}{\sqrt{2}}u$, $\omega_{(2,1)} = \omega_0 + \frac{1}{\sqrt{2}}u$, $\omega_{(2,2)} = \omega_0 + \frac{3}{\sqrt{2}}u$, and $\omega_{(2,3)} = \omega_0 + \frac{5}{\sqrt{2}}u$. 
From the structure of the normal modes we have $a_{(1,1)} = \frac{1}{\sqrt{8}}(1,\sqrt{2},1,1,\sqrt{2},1).\mathbf{c}$, where $\mathbf{c}$ is a vector of the normal mode annihilation operators. 
Since the array couples to the waveguide via $a_{(1,1)}$, there are only two different effective coupling rates: $\gamma_1 = \frac{\Gamma}{8}$ corresponding to normal modes in $O_1 = \{(1,1), (1,3), (2,1), (2,3)\}$ and $\gamma_2 = \frac{\Gamma}{4}$ corresponding to normal modes in $O_2 = \{(1,2),(2,2)\}$, where $\Gamma$ is the coupling rate to the waveguide, and $\gamma_2 = 2\gamma_1$.  

Consider the case of connecting the four normal modes $\{(1,1), (1,3), (2,1), (2,3)\}$ as dictated by $\mathcal{P}_3$, that is, mode $(1,1)$ with $(2,1)$, and mode $(1,3)$ with $(2,3)$, naturally the pattern for the six modes is completed by connecting $(1,2)$ with $(2,2)$. With the modes coupled in this way there is only one level splitting, $\Delta = \frac{4}{\sqrt{2}}u$, and thus we only need a monochromatic modulation of the form $\epsilon(t) = \epsilon\cos(\omega_{\rm d} t + \phi)$ with $\omega_{\rm d}$ resonant with $\Delta$, and an input field which carries three distinct frequencies, each resonant with one of the normal modes being mixed on each of the three nonoverlapping $2$-mode beam splitters.  
The effective transfer matrix for this $6$-mode beam splitter composed of three nonoverlapping $2$-mode beam splitters can readily be constructed as a block-diagonal matrix (see App.~\ref{app:general_transfer_mat} for details), with each $(2\times 2)$ block having the form in Eq.~(\ref{eqn:transfer_mat_R}), where the transmission amplitude parameter is modified to 
\begin{equation}
\label{eqn:transmission_amp_appendix}
\mathcal{K}_\pm(\alpha, R, \mu^{(1,1)}) = \frac{\sqrt{1-R} \pm \sqrt{(\mu^{(1,1)})^4 \alpha^2 - R}}{\left( (\mu^{(1,1)})^2 \alpha + 1 \right)},
\end{equation}
where $\alpha = \frac{\Gamma}{\kappa_{\rm int}}$ is the cooperativity parameter of the array, and $\mu^{(1,1)}$ is the support that $a_{(1,1)}$ has on the two normal modes being mixed in the beam splitter. For the present example, we have $\mu^{(1,1)} = \frac{1}{\sqrt{8}}$ for those normal modes in $O_1$, and $\mu^{(1,1)} = \frac{1}{2}$ for those normal modes in $O_2$.
The beam splitters connecting normal modes with $\gamma_1$ have ratio $R_1$, and the beam splitter connecting the two normal modes with $\gamma_2$ has ratio $R_2$, and $R_1\ne R_2$. 

Let us look at the particular example of using the device to implement frequency swaps on the four coupled normal modes, that is $R_1 = 0$, which happens at the drive amplitude $\epsilon = \sqrt{\gamma_1^2 - \kappa_{\rm int}}$. At this modulation amplitude, we find the third beam splitter to have ratio $R_2 = \frac{9}{25}$, and the full transfer matrix is
\begin{equation}
\Xi(\alpha) = \begin{pmatrix}
0 & 0 & 0 & ie^{i\phi}\mathcal{K}(\alpha, 0, \frac{1}{\sqrt{8}}) & 0 & 0 \\
0 & -\frac{3}{5}\mathcal{K}_-(\alpha, \frac{9}{25}, \frac{1}{4}) & 0 & 0 & i\frac{4}{5}e^{i\phi}\mathcal{K}_-(\alpha, \frac{9}{25}, \frac{1}{4}) & 0 \\
0 & 0 & 0 & 0 & 0 & ie^{i\phi}\mathcal{K}(\alpha, 0, \frac{1}{\sqrt{8}})  \\
ie^{-i\phi}\mathcal{K}(\alpha, 0, \frac{1}{\sqrt{8}})  & 0 & 0 & 0 & 0 & 0 \\
0 & i\frac{4}{5}e^{i\phi}\mathcal{K}_-(\alpha, \frac{9}{25}, \frac{1}{4}) & 0 & 0 & -\frac{3}{5}\mathcal{K}_-(\alpha, \frac{9}{25}, \frac{1}{4}) & 0 \\
0 & 0 & ie^{-i\phi}\mathcal{K}(\alpha, 0, \frac{1}{\sqrt{8}}) & 0 & 0 & 0 \\
\end{pmatrix}
\end{equation}
with the transmission amplitude parameter $\mathcal{K}_\pm$ defined in Eq.~(\ref{eqn:transmission_amp_appendix}). We see then, that the methodology developed in this work is fully general and flexible, and allows us to obtain effective transfer matrices for arbitrary resonator arrays, even when they do not satisfy property~\ref{property:2}.

\subsection{Robustness to imperfect coupling strengths}
Following App.~\ref{app:robustness}, in this section we investigate the robustness of the generalizations of patterns $\mathcal{P}_{1,2,3}$ to imperfections in the symmetric coupling configuration. 
We consider independent perturbations to each of the horizontal and vertical couplings in the Hamiltonian of Eq.~(\ref{eqn:hamil_as_nmode}), with the total perturbation Hamiltonian given by
\begin{equation}
\label{eqn:hamil_pertu}
H_{\rm pt} = \sum_{l,m} \varepsilon^{(l,m)}_{\rm h}H_{\rm h}^{(l,m)} + \sum_{l,m}\varepsilon^{(l,m)}_{\rm v}H_{\rm v}^{(l,m)},
\end{equation}
where $\varepsilon^{(l,m)}_{\rm h,v}$, with $|\varepsilon^{(l,m)}_{\rm h,v}|\ll u,v$, are the independent perturbation strengths for the horizontal and vertical coupling strengths, respectively. The Hamiltonians $H_{\rm h,v}^{(l,m)}$ are given by
\begin{equation}
\begin{split}
H_{\rm h}^{(l,m)} &= a_{(l,m)}^\dagger a_{(l,m+1)} + {\rm h.c.},\\
H_{\rm v}^{(l,m)} &= a_{(l,m)}^\dagger a_{(l+1,m)} + {\rm h.c.}.
\end{split}
\end{equation}

Recall the way we proceeded in App.~\ref{app:robustness}. Using first order perturbation theory we find the corrected normal modes $\tilde{c}_{(p,q)}$ and their corresponding frequencies. With these at hand, we rediagonalize the Hamiltonian $H_0 + H_{\rm pt}$. Then, writing the $a_{(l,m)}$ in terms of the corrected normal modes we re-sum the drive terms leading to each of the three coupling patterns $\mathcal{P}_{1,2,3}$, then conclude about their robustness. 
Although one could explicitly construct the first order corrected normal modes and find the action of the drive Hamiltonian on these corrected modes---given the structure of the coupling rates discussed previously, and the structure of the generalized patterns---they inherit the robustness or lack of it from the coupling patterns of the array with $N=4$, which we discussed in App.~\ref{app:robustness}. This immediately allows us to establish the following two remarks

\begin{remark}
The set of beam splitters described by the Hamiltonian $H_{\rm bs} = \sum_{p,q} c^\dagger_{(L-(p-1), M-(q-1))} c_{(p,q)} + {\rm h.c.}$, corresponding to pattern $\mathcal{P}_1$, whose action, is to connect the normal modes as  
\begin{subequations}
\begin{align}
(p,q)\leftrightarrow (L-(p-1), M-(q-1)) \nonumber
\end{align}
\end{subequations}
acts in the same manner on the first-order corrected normal modes $\tilde{c}_{(p,q)}$.
\end{remark}
Intuitively, this robustness is a consequence of the way we constructed $\mathcal{P}_1$ for $N$-modes. We know from Sec.~\ref{sec:4_reso} that for $N=4$ the beam splitters in the pattern $\mathcal{P}_1$ are robust to imperfections in the coupling strengths. In the case of $N$-modes, there is a natural subdivision of the modes in groups of four, with each group defined by the strength with which they couple to the waveguide. And we generalize the pattern by coupling the modes within a group as dictated by $\mathcal{P}_1$, thus, the two beam splitters within a group are robust, and this is true for all groups, leading to the robustness of the whole pattern for the $N$-modes.

\begin{remark}
Patterns $\mathcal{P}_2$ and $\mathcal{P}_3$, and the beam splitter Hamiltonians they define, are not robust to imperfections in the coupling strengths.
\end{remark}
In fact, the perturbations activate unwanted diagonal terms and additional couplings which follow the structure of pattern $\mathcal{P}_1$.

\end{document}